\newcommand{\cA}{\mathcal{A}}
\newcommand{\cE}{\mathcal{E}}
\newcommand{\cS}{\mathcal{S}}
\newcommand{\cM}{\mathcal{M}}
\newcommand{\bR}{\mathbb{R}}
\newcommand{\C}{\textup{C}}
\newcommand{\D}{\textup{D}}
\newcommand{\bars}{\overline{s}}
\newcommand{\bara}{\overline{a}}
\DeclareMathOperator*{\argmin}{arg\,min}
\DeclareMathOperator*{\argmax}{arg\,max}
\newcommand{\bE}{\mathbb{E}}
\newcommand{\asto}{\xrightarrow{\text{a.s.}}}
\renewcommand{\^}[1]{^{(#1)}}
\newcommand{\brac}[1]{{\{#1\}_{i=1}^n}}
\newcommand{\pieps}{{\pi^\epsilon}}
\newcommand{\Peps}{{K^\epsilon}}
\newcommand{\cX}{\mathcal{X}}
\newcommand{\tmix}{t_{\textup{mix}}}
\newcommand{\taumix}{\tau_{\textup{mix}}}
\newtheorem{defi}{Definition}
\newtheorem{coro}{Corollary}
\newtheorem{assump}{Assumption}
\newtheorem{example}{Example}
\newtheorem{coroexmp}[coro]{Example and Corollary}
\newcommand{\cT}{\mathcal{T}}
\newcommand{\gammaeps}{{\gamma^\epsilon}}
\newcounter{rmk}
\renewcommand{\thermk}{\arabic{rmk}}
\newenvironment{rmk}[1][]%
{%
  \refstepcounter{rmk}%
  \par\medskip\noindent
  \textbf{Remark~\thermk.}%
  \if\relax\detokenize{#1}\relax\else\ \textbf{(#1)}\fi
  \ \normalfont
}%
{\par\medskip}
\newenvironment{talign*}
  {\begingroup\let\displaystyle\textstyle
   \begin{equation*}\begin{aligned}}
  {\end{aligned}\end{equation*}\endgroup}
\crefname{hypothesis}{Hypothesis}{Hypotheses}
\title{Equilibrium Selection for Multi-agent Reinforcement Learning: A Unified Framework}
\author{Runyu Zhang\thanks{Laboratory for Information and Decision Systems, Massachusetts Institute of Technology,  (\email{runyuzha@mit.edu}, \email{gzardini@mit.edu}, \email{asuman@mit.edu}).}
  \and Gioele Zardini\footnotemark[1]
  \and Asuman Ozdaglar\footnotemark[1]
\and Jeff Shamma\thanks{UIUC, Department of Industrial and Enterprise Systems Engineering
  (\email{jshamma@illinois.edu}}).
\and Na Li\thanks{Harvard University, School of Engineering and Applied Sciences, (\email{nali@seas.harvard.edu}})}
\newacronym{acr:rl}{RL}{Reinforcement Learning}
\newacronym{acr:marl}{MARL}{Multi-Agent RL}
\begin{document}

\maketitle
\vspace{-5pt}
\begin{abstract}
 While multi-agent reinforcement learning (MARL) has produced numerous algorithms that converge to Nash or related equilibria, such equilibria are often non-unique and can exhibit widely varying efficiency.
 This raises a fundamental question: how can one design learning dynamics that not only converge to equilibrium but also select equilibria with desirable performance, such as high social welfare?
 In contrast to the MARL literature, equilibrium selection has been extensively studied in normal-form games, where decentralized dynamics are known to converge to potential-maximizing or Pareto-optimal Nash equilibria (NEs).
 Motivated by these results, we study equilibrium selection in finite-horizon stochastic games. 
 We propose a unified actor-critic framework in which a critic learns state-action value functions, and an actor applies a classical equilibrium-selection rule state-wise, treating learned values as stage-game payoffs.
 We show that, under standard stochastic stability assumptions, the stochastically stable policies of the resulting dynamics inherit the equilibrium selection properties of the underlying normal-form learning rule.
 As consequences, we obtain potential-maximizing policies in Markov potential games and Pareto-optimal (Markov perfect) equilibria in general-sum stochastic games, together with sample-based implementation of the framework.
 \end{abstract}
\vspace{-5pt}
\begin{keywords}
Stochastic Games, Equilibrium Selection, Multi-agent Reinforcement Learning
\end{keywords}
\vspace{-5pt}
\begin{MSCcodes}
91A15,  
93A14,  	
91A25, 
68T05 
\end{MSCcodes}

\vspace{-5pt}
\section{Introduction}
Multi-agent systems arise in numerous real-world applications, including power and energy management \citep{lagorse2010multi}, network routing \citep{claes2011}, and robotic swarms \citep{liu2018}. 
Advances in reinforcement learning (RL) have led to renewed interest in multi-agent reinforcement learning (MARL), and stochastic games (SGs) have become a standard modeling framework for studying such problems.
Since Shapley's introduction of stochastic games~\cite{shapley53}, a large literature has studied equilibrium-seeking algorithms in SGs and MARL (see, e.g., \cite{Littman94, Bowling00, Shoham03, Bucsoniu10, Lanctot17, Zhang19}), and more recent work has begun providing finite-time and sample-complexity guarantees for learning solution concepts such as Nash equilibria (NE), correlated equilibria (CE), and coarse correlated equilibria (CCE).
For example, \cite{zhang2021gradient,leonardos2021global,Zhang2022softmax} study convergence to Nash equilibrium for Markov potential games. \cite{bai2020,daskalakis2021,Zhang2022policy} focus on NE-seeking algorithm under the two-player zero-sum setting. \cite{song2021,jin2021v} study efficient learning and convergence to CE and CCE for the general-sum game.

A fundamental challenge in SGs (and more generally in multi-agent systems) is that equilibria are typically not unique and their social performance can differ dramatically. 
Most MARL research focuses on designing algorithms that converge to \emph{an} equilibrium, or on bounding worst-case performance via measures such as the price of anarchy (PoA)~\citep{zhang2023PoA, chen2022convergence}.
These approaches, however, do not address the complementary, and practically important, question of how to select or design dynamics that reliably lead to high-quality equilibria rather than poor ones.


In contrast, equilibrium selection has been extensively studied in the normal-form game literature.
A prominent approach is stochastic stability: when players occasionally make small mistakes, only a subset of equilibria remain robust in the long run;
these are the stochastically stable equilibria (SSE) studied in~\cite{foster1990stochastic, young_evolution_1993} and subsequent work \citep{samuelson1997evolutionary, fudenberg2009reputation, FRANKEL20031}.
Crucially, by designing the players' learning dynamics one can influence which equilibria are stochastically stable: for instance, log-linear learning selects global maxima of a potential function in potential games~\citep{blume_statistical_1993, marden_revisiting_2012}, while specially crafted dynamics can ensure that only Pareto-optimal equilibria are stochastically stable in general-sum games \citep{pradelski2012learning, marden_achieving_2012}.
A rich body of additional work has analyzed related dynamics (fictitious play, joint strategy fictitious play, adaptive play, trial-and-error learning) and demonstrate practical applications in domains such as wind-farm control, target assignment, and routing~\citep{Monderer96fictitious, marden2009joint, young2009learning, young2020individual, arslan2007autonomous, marden2009payoff}.

\vspace{2pt}
\noindent\textbf{Our contributions.} 
We introduce a unified, modular framework that systematically imports equilibrium-selection ideas from normal-form games into MARL.
Concretely, we design an actor-critic architecture in which the critic estimates stage-wise~$Q$-functions of the stochastic game and the actor runs a classical normal-form equilibrium-selection rule using those~$Q$-values as stage payoffs.
This simple but powerful reduction lets us ``lift'' a wide family of well-studied normal-form dynamics to stochastic games and, crucially, transfer their stochastic-stability guarantees accordingly.
Under standard ergodicity and resistance-structure assumptions, we characterize exactly which Markov policies become stochastically stable: for Markov potential games, embedding log-linear learning yields only potential-maximizing policies in the zero-noise limit;
for general-sum games, suitably designed dynamics produce only utility-maximizing (Pareto-optimal) stochastically stable policies.
Our framework is \emph{algorithmic} (we give both a model-based critic and sample-based variant), \emph{modular} (plug and play normal-form selection rule into the actor), and \emph{interpretable} (the long-run selected policies inherit the normal-form selection properties.
These results provide a principled, plug-and-play recipe to steer MARL toward high-performance equilibria, complementing prior work that focused on convergence to an arbitrary NE or on worst-case PoA bounds~\citep{zhang2021gradient,leonardos2021global,zhang2023PoA,chen2022convergence}.

\section{Problem Settings and Preliminaries}
\subsection{The stochastic game model}
We consider the following SG model~$\cM:= \{\cS,\{\cA\}_{i=1}^n,P,r,\rho,H\}$, where there are~$n$ agents, and~$H$ is the horizon of the Markov game.  
At stage~$h$ the state is~$s_{h}\in\cS$ and agent~$i$ chooses action~$a_{i,h}\in\cA_i$. 
We denote the global action as the concatenation of local actions, i.e.,~$a_h = \{a_{1,h}\dots,a_{n,h}\}$, and thus~$\cA = \cA_1\times\cdots\times\cA_n$. 
State transition probabilities are denoted by~$P$, where next state~$s_{i,h+1}$ is determined by the current state~$s_h$ and action~$a_h$, i.e.,~$s_{h+1}\sim P_{h}(\cdot|s_{h}, a_{h})$. At horizon $h$, the stage reward is $r_{h}:\cS\times\cA\to[0,1]$. The initial local distribution is~$\rho$. 
A stochastic policy~~$\pi = \{\pi_h: \cS\rightarrow\Delta(\cA)\}_{h=1}^H$ (where $\Delta(\cA)$ is the probability simplex over $\cA$) specifies a strategy in which agents choose their actions based on the current state in a stochastic fashion. 
Throughout this paper, we use~$[n]$ to denote the set $\{1,2,...n\}$. 
We also use the notation~$-i$ to denote the set of all agents except for agent $i$, i.e.,~$[n]\backslash i$. 
For two integers~$h,h'$ such that $h\le h'$, the notation $h:h'$ is used to denote the set~$\{h, h+1, \dots, h'\}$ .

Given a policy $\pi$, we define the value function as well as $Q$ function as:
 \begin{align*}
  \textstyle V_{i,h}^\pi(s)&:=\bE^\pi \left[\sum_{h'=h}^{H} r_{i,h'}(s_{h'},a_{h'})\Big|s_h = s\right], \\
Q_{i,h}^\pi(s,a)&:=\bE^\pi \left[\sum_{h'=h}^{H} r_{i,h'}(s_{h'},a_{h'})\Big|s_h = s, a_h = a\right].
 \end{align*}
We would also like to note that the value of $V_{i,h}^\pi$ only depends on $\pi_{h:H}$ and that $Q_{i,h}^\pi$ depends on $\pi_{h+1:H}$. Thus, in circumstances where it needs more accurate specifications, we also write $V_{i,h}^\pi$ as $V_{i,h}^{\pi_{h:H}}$ and $Q_{i,h}^\pi$ as $Q_{i,h}^{\pi_{h+1:H}}$.


Given the definition of $Q$-functions, we are now ready to define the Markov perfect NE.


\begin{defi}[Markov perfect (Nash) equilibrium (MPE)]
A policy $\pi^\star$ is called a Markov perfect equilibrium (MPE) if and only if for all $i\in[n], h\in[H],s\in \cS$, the following inequality holds
\vspace{-5pt}
\begin{align*}
\textstyle Q_{i,h}^\pi(s, a^\star) \ge Q_{i,h}^\pi(s, a_i,a_{-i}^\star), ~~\forall a_i \neq a_i^\star,
 \end{align*}
where $a^\star$ is such that $\pi^\star(a^\star|s) > 0$. We also denote the set of MPEs as $\Pi_{\textup{MPE}}$.
\end{defi}
MPEs characterize individually rational behavior in stochastic games.  
Among these equilibria, however, different MPEs may lead to very different collective outcomes.  
This motivates the following notion of social efficiency.
\begin{defi}[Pareto optimal policy and Pareto optimal MPE]\label{defi:pareto-optimal}
A policy $\pi^\star$ is a Pareto optimal policy if it maximizes the social utility, i.e., 
$$\textstyle \sum_{i=1}^n V_{i,h}^{\pi^\star}(s) \ge \sum_{i=1}^n V_{i,h}^{\pi}(s)~\forall~ \pi, h\in[H], s\in\cS.$$

Similarly, a policy $\pi^\star$ said to be a Pareto optimal MPE if $$\textstyle\sum_{i=1}^n V_{i,h}^{\pi^\star}(s) \ge \sum_{i=1}^n V_{i,h}^{\pi}(s)~\forall~ \pi\in\Pi_{\textup{MPE}}.$$
\end{defi}

To analyze equilibrium selection beyond Pareto efficiency, we next introduce a structural class of stochastic games that admits a potential representation, which also plays a central role in our results.

%
\begin{defi}[Markov potential game (MPG)]\label{defi:MPG}
    A stochastic game is called a Markov potential game (MPG) if there exists functions $\{\phi_h\}_{h\in[H]},\phi_h:\cS\times\cA \to \bR$ such that the following function
    \begin{align*}
     \textstyle   \Phi_h^\pi(s,a):= \bE^\pi \left[\sum_{h'=h}^{H} \phi_{h'}(s_{h'},a_{h'})~\Big|s_h = s, a_h = a\right]
    \end{align*}
    satisfies the equation $\Phi_h^\pi(s,a_i, a_{-i}') - \Phi_h^\pi(s,a_i, a_{-i}) = Q_{i,h}^\pi(s,a_i, a_{-i}') - Q_{i,h}^\pi(s,a_i, a_{-i})$
    for any policy $\pi$ and agent $i$ as well as action $a$. The function $\Phi_h^\pi$ is called the total potential function and $\phi$ is the stage potential function.
\end{defi}

 Definitions of MPG also exist in the literature for the setting of continuous state and action spaces \cite{Macua18} as well as infinite-horizon discounted case SGs \cite{zhang2021gradient,leonardos2021global}. 
 {Due to the difference in the settings, the definition is also slightly different compared with these works. 
 However, one aspect in common is that identical interest SG always serves as an important special case of MPG.}

Given the definition of MPG, we define the potential maximizing policy as follows.
\begin{defi}[Potential-maximizing policy]\label{defi:potential maximizing policy}
In this paper, the potential policy is only defined for MPGs (Definition \ref{defi:MPG}). For a MPG with total potential function $\Phi_h^\pi$, the potential-maximizing policy $\pi^\star$ is the policy such that $\Phi_h^{\pi^\star}(s,a) \ge \Phi_h^{\pi}(s,a),~\forall~ s\in\cS, a\in\cA$ holds for all policy $\pi$.
\end{defi}

\subsection{Informal statement of the main result - equilibrium selection for SGs}\label{section:informal-statement} 

Our primary goal is to understand \emph{equilibrium selection} in stochastic games (SGs): 
when multiple Markov perfect equilibria (MPEs) exist, which equilibria are selected by decentralized learning dynamics in the presence of small noise or exploration?

The key insight of this paper is that equilibrium selection in SGs can be reduced to equilibrium selection in normal-form games (details introduced in Section \ref{sec:prelim-equlibrium-selection}) by using state–action value functions as intermediate payoffs. 
Specifically, given any joint policy $\pi$, the $Q$-functions 
$\{Q_{i,h}^{\pi}(s,a)\}_{i,h,s,a}$ define, at each state $s$ and stage $h$, a normal-form game in which each agent $i$ chooses an action $a_i$ to maximize $Q_{i,h}^{\pi}(s,a_i,a_{-i})$. 
We leverage this observation to construct a unified \emph{actor–critic} framework (details introduced in Section~\ref{sec:unified-framework}): 
the critic estimates the $Q$-functions induced by the current policy, and the actor updates the policy by applying a learning rule from normal-form games to these $Q$-values.

This reduction allows us to lift classical equilibrium-selection results from normal-form games to stochastic games. 
In particular, if a learning rule selects certain Nash equilibria in the normal-form game defined by rewards $\{r_i\}$, then applying the same rule to the $Q$-functions will select corresponding MPEs in the stochastic game.

To illustrate the idea, consider \emph{log-linear learning}.  
In a normal-form potential game, log-linear learning selects the potential-maximizing Nash equilibria. In our setting, at each state $s$ and stage $h$, agents apply the same log-linear learning rule to the $Q$-values $Q_{i,h}(s,\cdot)$. 
This induces a policy over actions that favors those maximizing a corresponding Markov potential, and hence selects Markov perfect equilibria that are potential-maximizing in the stochastic game.

More generally, replacing log-linear learning with other learning rules from the normal-form game literature, such as the dynamics in \cite{marden_achieving_2012} or \cite{pradelski2012learning}, yields different equilibrium-selection criteria over Markov perfect equilibria, including Pareto-optimality. 
Table~\ref{table:result-summary} summarizes the resulting correspondences between equilibrium selection in normal-form games and in stochastic games.

\begin{table}[tb]
    \centering
    \caption{Equilibrium selection}
    \scriptsize
\begin{tabular}{|c|c|c|}
\hline 
 \textbf{Learning Rule} & \textbf{For Normal-form game} & \textbf{For Stochastic Game} \\
\hline 
 {log-linear learning } & Potential maximizing (Potential game) & Potential maximizing (MPG)  \\
\hline 
 \cite{pradelski2012learning} & Pareto optimal Pure NE & Pareto optimal MPE \\
 \hline
 \cite{marden_achieving_2012}& Pareto optimal & Pareto optimal  \\
 \hline
\end{tabular}
\label{table:result-summary}
\end{table}

\subsection{Equilibrium selection for normal-form games}\label{sec:prelim-equlibrium-selection}
As stated in Section \ref{section:informal-statement}, our main result builds on equilibrium selection in normal-form games, which serves as a fundamental building block for equilibrium selection in stochastic games. This section formalizes the normal-form game component. The normal-form game is captured by the reward functions $\{r_i: \cA\to \bR\}_{i=1}^n$, where $r_i$ is the reward of agent $i$ and $\cA = \cA_1\times\cdots\times \cA_n$ is the action space. For equilibrium selection, people generally consider the setting where the group of agents follows certain \textit{iterative} learning rules that describe how agents respond to the reward outcome from the previous action. To assist the learning, the iterative process sometimes includes some other auxiliary variables.  Mathematically,  a learning rule can be in general described  by a Markov chain:
\begin{align*} (a\^{t+1}, \xi\^{t+1})\sim \Peps(\cdot,\cdot|a\^{t}, \xi\^{t}),
\end{align*}
where the transition kernel $\Peps$ is defined on the action space $\cA$ and an auxiliary variable space $\cE$, and $a\^{t}\in\cA, \xi\^{t}\in\cE$ stands for the action and auxiliary variables at iteration $t$, respectively. Here the parameter $\epsilon$ in the transition kernel represents the `rate of mistakes', where we assume that agents do not respond fully rational and accurate towards their observation of the reward outcome. We will use the following example, log-linear learning, which is one of the most popular algorithms analyzed in equilibrium selection, to further illustrate the role of $\epsilon$.
\begin{example}[Log-linear learning \citep{blume_statistical_1993,marden_revisiting_2012}]\label{example:log-linear} 
For log-linear learning, there is no hidden variable, i.e., $\cE = \emptyset$, and the algorithm is defined by the following transition probability kernel $\Peps$
\begin{align}
\textstyle 
\Peps\left(a\^{t+1} = (a_{-i}\^{t}, a_i\^{t+1})|a\^{t}; \brac{r_i}\right) = \frac{1}{n}\frac{\epsilon^{- r_i(a\^{t+1})}}{\sum_{ a_i} \epsilon^{-r_i( a_i, a_{-i}\^{t})}}.\label{eq:log-linear}
\end{align}
\end{example}
\vspace{-5pt}
In Example \ref{example:log-linear}, as the rate of mistakes $\epsilon$ goes to zero, the learning rule will converge to the \textit{best-response} strategy. Note that although the Markov chain induced by the best-response strategy is not ergodic and can have multiple stationary distributions (e.g. all strict NEs are stationary), by adding the rate of mistakes $\epsilon$, the Markov chain is ergodic and has a unique stationary distribution. {We also provide another example for better illustration of the hidden variable $\xi$.}
\begin{example}[\citep{marden_achieving_2012}] \label{example:pareto-optimal}
For the learning rule in \citep{marden_achieving_2012}, the hidden variable $\xi$ takes the form $\xi = (\xi_1, \xi_2, \dots,\xi_n)$, where $\xi_i \in \{\C,\D\}$ represents the `mood' of agent $i$, with $\C$ represents Content and $\D$ represents Discontent. Given $a_i\^{t}, \xi_i\^{t}$, the update for $a_i\^{t+1}$ is given by:
\begin{itemize}
    \item If $\xi_i\^{t} = \D$, then $a_i\^{t+1}$ is uniform randomly selected from $\cA_i$.
    \item If $\xi_i\^{t} = \C$, then with probability $p = 1-\epsilon^c$, $a_i\^{t+1} = a_i\^{t}$, with probability $\epsilon^c$, $a_i\^{t+1}$ is uniform randomly selected from $\cA_i\backslash \{a_i\^{t}\}$,
\end{itemize}
where the parameter $c$ is a constant that $c \ge n$. Given $a\^{t+1}$, the update for $\xi_i\^{t+1}$ is given by
\begin{itemize}
    \item If $\xi_i\^{t} = \C$ and $a\^{t+1} = a\^{t}$, then $\xi\^{t+1} = \C$.
    \item Otherwise, $\xi_i\^{t+1} = \C$ with probability $\epsilon^{1-r_i(a\^{t+1})}$ (else, $\xi_i\^{t+1} = \D$).
\end{itemize}

\end{example}

In general, in this paper, we make the following ergodicity assumptions for $\Peps$,
\begin{assump}[Ergodicity]\label{assump: Ergodicity}
    For any given $\epsilon > 0$, and any set of reward $\{r_i: \cA\to \bR\}_{i=1}^n$, the Markov chain induced by the transition kernel $\Peps(\cdot|\cdot; \brac{r_i})$ is ergodic. 
\end{assump}
\begin{defi}[Stationary distribution $\pieps$]
    Given Assumption \ref{assump: Ergodicity}, we know that the Markov chain $\Peps$ induces a unique stationary distribution on $\cA\times\cE$, which we denote as $\pieps(\cdot,\cdot)$  Without causing confusion, we also overload the notation $\pieps(a)$ to denote the marginal stationary distribution on $\cA$, i.e. $\pieps(a) = \sum_{\xi\in\cE}\pieps(a,\xi)$.
\end{defi}

We are now ready to discuss equilibrium selection in detail.  One idea of equilibrium selection is that although there are multiple NEs in a normal-form game, some equilibria are more stable than others when the learning process has mistakes. These more stable NEs are known as stochastically stable equilibria (SSE)~\cite{foster1990stochastic, young_evolution_1993}.  For instance, although all strict NEs are stationary under the best response, interestingly, as the rate of mistakes $\epsilon$ in the log-linear learning rule~\eqref{eq:log-linear} goes to zero, $\pieps$ will only have support on a limited subset of the NEs, which are the SSE.
\begin{defi}[Stochastically stable equilibrium (SSE) \citep{foster1990stochastic, young_evolution_1993}] 
    The stochastically stable equilibria of a learning rule $\Peps$ are actions $a^\star$ such that $\lim_{\epsilon\to 0}\pieps(a^\star) > 0$
\end{defi}
To calculate the SSE given a normal-form game $\brac{r_i}$, we first define the resistance and stochastic potential of the normal-form game. We make the following assumption of the learning rule:
\begin{assump}\label{assump:resistence}
    For any pair of $(a,\xi), (a',\xi')$, we have that there exists a constant $R\left((a,\xi)\to (a',\xi')\right)$  and $C_1, C_2 > 0$, such that
    \begin{align*}
        C_1 \epsilon^{R\left((a,\xi)\to (a',\xi')\right)}  < \Peps(a',\xi'|a,\xi;\brac{r_i}) <  C_2 \epsilon^{R\left((a,\xi)\to (a',\xi')\right)}
    \end{align*}
\end{assump}
\begin{defi}[Resistance]
    The constant $R\left((a,\xi)\to (a',\xi')\right)$ is called as the \emph{resistence} of the transition $(a,\xi)\to (a',\xi')$ under transition kernel $\Peps(\cdot|\cdot;\brac{r_i})$.

    Additionally, we can construct a graph given the resistance of transitions, where the weight of the directed edge $(a,\xi)\to (a',\xi')$ is given by $R((a,\xi)\to (a',\xi'))$. Given a set of edges $T$ in the directed graph, the resistances of these edges are given by the sum of the resistance of edges, i.e. ~$R(T):= \sum_{a\to a' \in T}R(a\to a').$
\end{defi}
Note that when $\lim_{\epsilon\to 0}\Peps(a,\xi'|a,\xi;\brac{r_i}) > 0$, then the resistance $R((a,\xi)\to (a',\xi')) = 0$. If $\Peps(a,\xi'|a,\xi;\brac{r_i}) = 0$ for all $\epsilon$, then the resistance $R((a,\xi)\to (a',\xi')) = + \infty$. 

Given the definition of resistance, we are ready to define the stochastic potential. We first define the \emph{spanning tree} rooted at vertex $(a,\xi)$ as follows. A spanning tree, rooted at vertex $(a,\xi)$ (or a $(a,\xi)$-tree) is a set of $|\cA||\cE|-1$ directed edges such that from every vertex different from $(a,\xi)$, there is a unique directed path in the tree to $(a,\xi)$. We denote the set of $(a,\xi)$-trees as $\cT(a,\xi)$.
\begin{defi}[Stochastic potential]\label{defi:stochastic-potential}
    The stochastic potential $\gamma(a,\xi)$ of an action $a$ and a hidden variable $\xi$ is defined as:
    \begin{equation*}
      \textstyle  \gamma(a,\xi):= \min_{T\in \cT(a,\xi)} R(T)
    \end{equation*}
We also use the notation $\gamma(a,\xi;\brac{r_i})$ to specify that the stochastic potential is with respect to the game $\brac{r_i}$.
\end{defi}
The following theorem points out that the SSEs minimize the stochastic potential
\begin{theorem}[\citep{foster1990stochastic,young_evolution_1993}]\label{theorem:young-evolution}
    An action $a^\star\in\cA$ is the SSE of a normal-form game $\brac{r_i}$ and learning rule $\Peps$ if and only if there exists a hidden variable $\xi^\star\in\cE$ such that $a^\star, \xi^\star$ minimizes the stochastic potential, i.e., $\gamma(a^\star, \xi^\star) = \min_{a,\xi} \gamma(a,\xi)$. Further, for any pair of $(a,\xi), (a',\xi')$,there exists a constant $C>0$ such that
    \vspace{-5pt}
    \begin{align*}
        \qquad \frac{\pieps(a,\xi)}{\pieps(a',\xi')}< C\epsilon^{\gamma(a,\xi) - \gamma(a',\xi')}
    \end{align*}
\end{theorem}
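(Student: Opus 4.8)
The plan is to reduce everything to the Markov Chain Tree Theorem (the Freidlin--Wentzell / matrix-tree characterization of the stationary distribution). Since Assumption~\ref{assump: Ergodicity} guarantees that $\Peps$ is ergodic on the finite state space $\cA\times\cE$, the tree theorem expresses the stationary probability of each state as a normalized sum over rooted spanning trees:
\begin{align*}
    \pieps(a,\xi) = \frac{\sum_{T\in\cT(a,\xi)} \prod_{(b,\eta)\to(b',\eta')\in T}\Peps(b',\eta'|b,\eta;\brac{r_i})}{\sum_{(a',\xi')}\sum_{T\in\cT(a',\xi')}\prod_{(b,\eta)\to(b',\eta')\in T}\Peps(b',\eta'|b,\eta;\brac{r_i})}.
\end{align*}
This identity is the workhorse: it converts the implicitly defined stationary distribution into an explicit object governed by the one-step transition probabilities, which Assumption~\ref{assump:resistence} controls in terms of $\epsilon$.

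Next I would insert the two-sided resistance bounds of Assumption~\ref{assump:resistence}. For a fixed tree $T$ the exponents of its $|\cA||\cE|-1$ edges add up to $R(T)$, giving $C_1^{|\cA||\cE|-1}\epsilon^{R(T)} < \prod_{(b,\eta)\to(b',\eta')\in T}\Peps < C_2^{|\cA||\cE|-1}\epsilon^{R(T)}$, so each tree contributes at order $\epsilon^{R(T)}$ up to universal constants (trees containing a zero-probability edge carry infinite resistance and drop out, and ergodicity ensures at least one finite-resistance tree to each root). Summing over $T\in\cT(a,\xi)$, the dominant balance as $\epsilon\to 0$ is dictated by the cheapest tree, so the numerator is of order $\epsilon^{\gamma(a,\xi)}$ by the very definition $\gamma(a,\xi)=\min_{T\in\cT(a,\xi)}R(T)$ in Definition~\ref{defi:stochastic-potential}. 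The only care needed here is the standard one for sums of monomials in $\epsilon$: the minimal exponent controls the sum, and the finite number of trees keeps the hidden constants bounded uniformly in $\epsilon$.

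From here both claims follow by bookkeeping. For the ratio bound, dividing the numerator estimate for $(a,\xi)$ by that for $(a',\xi')$ cancels the common denominator and yields $\pieps(a,\xi)/\pieps(a',\xi') < C\,\epsilon^{\gamma(a,\xi)-\gamma(a',\xi')}$ with $C$ depending only on $C_1,C_2,|\cA|,|\cE|$. For the characterization of SSE, normalizing shows that the denominator is of order $\epsilon^{\gamma_{\min}}$ where $\gamma_{\min}:=\min_{a,\xi}\gamma(a,\xi)$, whence $\pieps(a,\xi)$ is of order $\epsilon^{\gamma(a,\xi)-\gamma_{\min}}$; this coordinate has a strictly positive limit exactly when $\gamma(a,\xi)=\gamma_{\min}$. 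Marginalizing over $\xi$, the action $a^\star$ satisfies $\lim_{\epsilon\to 0}\pieps(a^\star)=\lim_{\epsilon\to0}\sum_\xi\pieps(a^\star,\xi)>0$ if and only if some $\xi^\star$ achieves $\gamma(a^\star,\xi^\star)=\gamma_{\min}$, which is precisely the stated equivalence.

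The main obstacle, and the only genuinely nontrivial ingredient, is invoking the Markov Chain Tree Theorem correctly; once the stationary distribution is written in tree form, the remainder is a routine dominant-balance analysis of $\epsilon$-polynomials using the two-sided bounds $C_1,C_2$. A secondary point to handle carefully is the passage from the joint distribution on $\cA\times\cE$ to the marginal on $\cA$, which is what produces the existential quantifier over the hidden variable $\xi^\star$ in the SSE characterization; I would make sure that a minimal-resistance tree rooted at $(a^\star,\xi^\star)$ with $R(T)=\gamma_{\min}$ survives as $\epsilon\to0$ and feeds a positive mass into the marginal $\pieps(a^\star)$, so that the existence of a single such $\xi^\star$ suffices.
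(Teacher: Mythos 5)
Your proposal is correct and takes essentially the same approach the paper relies on: Theorem~\ref{theorem:young-evolution} itself is cited without proof, but the paper's proof of the analogous Lemma~\ref{lemma:pi-eps-optimality} (Appendix~\ref{apdx:proof-pi-eps-optimality}) uses exactly your argument---the Freidlin--Wentzell spanning-tree representation of the stationary distribution, edge-by-edge application of the two-sided bounds of Assumption~\ref{assump:resistence} over the $|\cA||\cE|-1$ tree edges, and a dominant-balance comparison of the resulting $\epsilon$-monomials, with the constant $C$ depending only on $C_1$, $C_2$ and the number of spanning trees. Your marginalization over $\xi$ to produce the existential quantifier on $\xi^\star$ is likewise the intended passage from the joint characterization on $\cA\times\cE$ to the SSE statement on actions.
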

Theorem \ref{theorem:young-evolution} is very useful for capturing the SSE of different learning rules. For example, for potential games (defined below), classical results show that the stochastic potential $\gamma$ is closely related to the potential function $\phi$.
\begin{defi}[Potential game]\label{defi:potential-game}
    A normal-form game $\brac{r_i}$ is called a potential game if there exists a potential function $\phi:\cA\to\bR$ such that the following equation holds for any agent $i$.
    \vspace{-5pt}
    \begin{align*}
        \phi(a_i, a_{-i}) - \phi(a_i,a_{-i}) = r_i(a_i, a_{-i}) - r_i(a_i', a_{-i}), ~~\forall a_i, a_{i}'\in\cA_i, a_{-i}\in \cA_{-i}.
    \end{align*}
\end{defi}
Thus we can obtain the following corollary.
\begin{coro}[\citep{blume_statistical_1993,marden_revisiting_2012}]\label{coro:log-linear-learning}
 For potential games, under log-linear learning (Example \ref{example:log-linear}), the SSEs are the potential maximizing actions $a^\star$ such that $a^\star \in \argmax_a \phi(a)$. Further, the stationary distribution $\pieps$ satisfies
    \vspace{-5pt}
\begin{align*}
  \textstyle  \frac{\pieps(a)}{\pieps(a')} < C\epsilon^{\phi(a') - \phi(a)}.
\end{align*}
\end{coro}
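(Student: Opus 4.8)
The plan is to route everything through Theorem~\ref{theorem:young-evolution}. Since log-linear learning carries no auxiliary variable ($\cE=\emptyset$), the stochastic potential reduces to a function $\gamma(a)$ of the action alone, and it suffices to show that $\gamma(a) = c_0 - \phi(a)$ for some constant $c_0$ independent of $a$. Granting this, the two conclusions are immediate: the SSEs are the minimizers of $\gamma$, hence the maximizers of $\phi$, and the ratio bound becomes $\pieps(a)/\pieps(a') < C\epsilon^{\gamma(a)-\gamma(a')} = C\epsilon^{\phi(a')-\phi(a)}$.

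First I would read off the one-step resistances from \eqref{eq:log-linear}. A transition has positive probability only when $a$ and $a'$ coincide except in a single coordinate $i$; for such a move the denominator $\sum_{\tilde a_i}\epsilon^{-r_i(\tilde a_i,a_{-i})}$ behaves like $\epsilon^{-\max_{\tilde a_i}r_i(\tilde a_i,a_{-i})}$ to leading order, so matching against Assumption~\ref{assump:resistence} gives $R(a\to a') = \max_{\tilde a_i}r_i(\tilde a_i,a_{-i}) - r_i(a_i',a_{-i})$. Substituting the potential-game identity (Definition~\ref{defi:potential-game}), which replaces differences of $r_i$ at fixed $a_{-i}$ by differences of $\phi$, turns this into $R(a\to a') = \max_{\tilde a_i}\phi(\tilde a_i,a_{-i}) - \phi(a_i',a_{-i})$. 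Because the $\max$ term is shared between the two directions of a single-coordinate edge, this yields the antisymmetry identity $R(a\to a') - R(a'\to a) = \phi(a) - \phi(a')$, which is the engine of the whole proof.

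Second I would promote this edge-wise identity to a statement about spanning trees. Given a minimal-resistance $a'$-tree $T$, there is a unique directed path $a = x_0 \to x_1 \to \dots \to x_k = a'$ inside it; reversing precisely this path produces an $a$-tree $T'$, and summing the antisymmetry identity along the reversed edges telescopes to $R(T') - R(T) = \phi(a') - \phi(a)$. Hence $\gamma(a) \le \gamma(a') + \phi(a') - \phi(a)$, and the symmetric inequality forces $\gamma(a)+\phi(a)$ to be constant in $a$. I expect this tree-surgery step to be the main obstacle: one must verify that reversing exactly this path still leaves a spanning tree rooted at $a$ and that no off-path edge changes its resistance, which is exactly where the single-coordinate structure and the shared denominator are doing the work.

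A cleaner alternative that sidesteps the tree combinatorics is to guess and verify the stationary distribution directly. For a single-coordinate move the normalizing denominator in \eqref{eq:log-linear} is identical in both directions, so the detailed-balance equation $\mu(a)\Peps(a'|a) = \mu(a')\Peps(a|a')$ for the Gibbs measure $\mu(a) \propto \epsilon^{-\phi(a)}$ reduces exactly to the potential-game identity. Ergodicity (Assumption~\ref{assump: Ergodicity}) then identifies $\pieps$ with $\mu$, so the ratio $\pieps(a)/\pieps(a') = \epsilon^{\phi(a')-\phi(a)}$ holds as an equality (giving the bound with $C=1$), and letting $\epsilon \to 0$ concentrates $\mu$ on $\argmax_a\phi(a)$, recovering the SSE characterization.
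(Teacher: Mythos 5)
Your proposal is correct; note first that the paper never proves Corollary~\ref{coro:log-linear-learning} itself --- it imports it from \citet{blume_statistical_1993} and \citet{marden_revisiting_2012} --- so there is no internal proof to compare against, and your argument must stand on its own (it does). Of your two routes, the detailed-balance argument is essentially Blume's classical proof: for two profiles differing in a single coordinate the normalizing denominators in \eqref{eq:log-linear} coincide, so reversibility of $\mu(a)\propto\epsilon^{-\phi(a)}$ reduces exactly to the potential identity, and ergodicity (Assumption~\ref{assump: Ergodicity}) identifies $\pieps$ with $\mu$; this buys you more than the corollary asks for, namely an exact Gibbs formula for $\pieps$ rather than a one-sided bound (state the conclusion with some $C>1$, since the corollary's inequality is strict and your ratio is an equality). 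Your first route --- resistances $R(a\to a')=\max_{\tilde a_i}r_i(\tilde a_i,a_{-i})-r_i(a_i',a_{-i})$, the antisymmetry identity $R(a\to a')-R(a'\to a)=\phi(a)-\phi(a')$, and path-reversal tree surgery forcing $\gamma(a)+\phi(a)$ to be constant --- is also sound, and it is the route that matches the machinery the paper actually builds on (Theorem~\ref{theorem:young-evolution}, and the Freidlin--Wentzell tree computation used to prove Lemma~\ref{lemma:pi-eps-optimality}); its value is that it survives when reversibility fails, which is precisely why the paper's framework is phrased in terms of resistances so that non-reversible rules such as Examples~\ref{example:pareto-optimal} and~\ref{example:pareto-NE} can be plugged in. Two small points would tighten the tree-surgery step: observe that since single-coordinate moves give a connected graph with finite resistances, any minimal-resistance tree has finite resistance, hence every edge on the reversed path is a single-coordinate edge to which the antisymmetry identity applies; and the reversed graph is again a spanning tree rooted at $a$ because every off-path vertex keeps its unique outgoing edge and its walk toward $a'$ must merge with the reversed path, after which it descends to $a$.
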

\vspace{-5pt}
Further, for the learning rule in Example \ref{example:pareto-optimal}, we can show that the SSE is pareto-optimal.
\begin{coro}[\cite{marden_achieving_2012}]\label{coro:pareto-optimal}
For general-sum normal-form games, under certain interdependence assumption (cf. \cite{marden_achieving_2012}), the SSEs of the learning rule in Example \ref{example:pareto-optimal} are Pareto-optimal (which are not necessarily NEs), i.e., $a^\star$ is a SSE if and only if
\begin{align*}
  \textstyle \quad  a^\star \in \argmax_{a} \sum_{i=1}^n r_i(a)
\end{align*}
\end{coro}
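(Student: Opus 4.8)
The plan is to invoke Theorem \ref{theorem:young-evolution}, which reduces the claim to showing that a state $(a,\xi)$ minimizes the stochastic potential $\gamma$ if and only if $\xi=(\C,\dots,\C)$ and $a\in\argmax_a\sum_{i=1}^n r_i(a)$. The first step is to read off the resistances of the one-step transitions of the rule in Example \ref{example:pareto-optimal} by matching its transition probabilities to the form required in Assumption \ref{assump:resistence}. From the action-update rule, a content agent that keeps its action has resistance $0$ while a content agent that switches has resistance $c$, and a discontent agent's uniform choice has resistance $0$. From the mood-update rule, a content agent that sees the global action unchanged stays content at resistance $0$, whereas after any change of the global action agent $i$ becomes content at resistance $1-r_i(a)$ and becomes discontent at resistance $0$. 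Summing these over agents gives the resistance $R\bigl((a,\xi)\to(a',\xi')\bigr)$ of every edge.

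Next I would identify the recurrent classes of the unperturbed ($\epsilon\to 0$) chain. The all-content states $C_a:=(a,(\C,\dots,\C))$ are absorbing, and the interdependence assumption (Appendix \ref{apdx:interdependence}) guarantees that the only other recurrent class is the all-discontent set $D_0:=\{(a,(\D,\dots,\D)):a\in\cA\}$, whose agents explore freely and therefore communicate over all of $\cA$ at zero resistance. Using the single-step resistances, I would then compute the reduced resistances between classes: escaping $C_a$ requires exactly one agent to explore (resistance $c$), after which every mood flips to discontent at resistance $0$, landing in $D_0$; conversely, from $D_0$ any profile $a$ is reached at zero resistance and all $n$ moods turn content at total resistance $\sum_{i=1}^n(1-r_i(a))=n-\sum_{i=1}^n r_i(a)$. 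Hence the reduced graph is a star centered at $D_0$ with edges $C_a\to D_0$ of weight $c$ and $D_0\to C_a$ of weight $n-\sum_{i=1}^n r_i(a)$.

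Because the reduced graph is a star, the minimum-resistance spanning tree rooted at any recurrent class is forced, so the stochastic potentials follow by direct summation: $\gamma(C_a)=(|\cA|-1)c+n-\sum_{i=1}^n r_i(a)$ and $\gamma(D_0)=|\cA|\,c$. Minimizing over the content states selects exactly the utility-maximizing profiles $a\in\argmax_a\sum_{i=1}^n r_i(a)$, and the hypothesis $c\ge n$ gives $\gamma(C_{a^\star})\le\gamma(D_0)$, so no discontent state is stochastically stable. Combined with Theorem \ref{theorem:young-evolution}, this shows the SSEs are precisely the Pareto-optimal profiles, proving the corollary. The main obstacle is the second step: correctly characterizing the recurrent classes of the unperturbed dynamics—which is exactly where interdependence is used, to rule out spurious classes in which a strict subset of agents stays locked as content—together with verifying that the two reduced resistances above are genuinely minimal over all escape paths (e.g.\ that any direct $C_a\to C_{a'}$ route of weight $c+n-\sum_i r_i(a')$ is never cheaper than routing through $D_0$) rather than mere upper bounds.
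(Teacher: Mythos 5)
The paper offers no proof of this corollary: it is imported directly from \cite{marden_achieving_2012} as a known result, so there is no in-paper argument to compare yours against. Judged on its own, your reconstruction is essentially correct and is the standard Freidlin--Wentzell/Young resistance-tree argument underlying the cited result: your one-step resistances match Assumption \ref{assump:resistence} for the rule of Example \ref{example:pareto-optimal}; the recurrent classes of the unperturbed chain are exactly the all-content states $C_a$ and the single all-discontent class $D_0$; the reduced graph is the star you describe; and the resulting stochastic potentials $\gamma(C_a)=(|\cA|-1)c+n-\sum_{i}r_i(a)$ and $\gamma(D_0)=|\cA|\,c$ yield the claim via Theorem \ref{theorem:young-evolution}. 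Your closing worry about direct $C_a\to C_{a'}$ edges is easily dispatched: every exit from a content class costs at least $c$, and every entry into a content class $C_{a''}$ costs at least $n-\sum_i r_i(a'')\ge 0$ on top of that, so the star tree is always weakly cheapest.

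Three smaller points deserve attention, though none is a genuine gap. First, in the simplified rule of Example \ref{example:pareto-optimal} a content agent is reset whenever the \emph{global} action changes, not merely when its own payoff changes as in the original rule of \cite{marden_achieving_2012}; consequently interdependence is needed only to guarantee that the discontent coalition can alter the joint action at all (it implies some discontent agent has an alternative action), which is a weaker use of the assumption than in the cited paper---your placement of it is nevertheless correct. Second, $c\ge n$ gives only $\gamma(C_{a^\star})\le\gamma(D_0)$, with equality possible when $c=n$ and $\max_a\sum_i r_i(a)=0$; you should note that this degenerate case (identically zero payoffs) makes the statement vacuous, since then every action lies in the argmax, so the ``only if'' direction survives. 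Third, Theorem \ref{theorem:young-evolution} as stated defines $\gamma$ via spanning trees over the full state space $\cA\times\cE$, whereas you compute trees over recurrent classes with reduced edge weights; the equivalence of the two computations (transient states and within-class transitions contribute zero resistance, and any tree rooted at a transient state is strictly more expensive) is standard but should be invoked explicitly rather than silently.
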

Here we also provide another learning rule whose SSE is the Pareto-optimal NE. For the sake of compactness, we refer the readers to the original paper for the detailed learning rule. 
\begin{coroexmp}[\cite{pradelski2012learning}]\label{example:pareto-NE} If a general-sum game has at least one pure NE, then under the same interdependence assumption as in Corollary \ref{coro:SG-pareto-optimal}, the learning rule in \citep{pradelski2012learning} guarantees that 
SSEs are the Nash equilibria that are Pareto-optimal, i.e., $a^\star$ is a SSE if and only if $a^\star \in \argmax_{a \textup{ is a pure NE}} \sum_{i=1}^n r_i(a).$
\end{coroexmp}

\section{Equilibrium selection for stochastic games: a unified framework}\label{sec:unified-framework}
Given the promising results of equilibrium selection in normal-form games, a natural next step is to extend this approach to stochastic games. In this paper, we present a unified framework that leverages the learning rules $\Peps$ from the normal-form game setting as foundational components for solving stochastic games. This framework's modularity is particularly appealing, allowing us to derive different outcomes by simply incorporating various learning rules.

The algorithm is outlined in Algorithm \ref{alg:unified-framework}. The core idea is to apply the learning rule $\Peps$ independently at each stage $h$ and state $s$. Here, $a_h^t(s)$ and $\xi_h^t(s)$ denote the action and hidden variable at iteration $t$, stage $h$, and state $s$, respectively. The algorithm comprises two main steps: the actor (step 1) and the critic (step 2). The critic updates the variables $Q_{i,h}^t(s,a)$ using a Bellman-like iteration, while the actor applies the learning rule $\Peps$, substituting the reward $\brac{r_i}$ in the normal-form game with the values $\brac{Q_{i,h}^t(s,\cdot)}$ computed by the critic instead. 
\begin{algorithm}
\caption{Unified Learning Framework}\label{alg:unified-framework}
\begin{algorithmic}
\REQUIRE Initialization $Q_{i,h}\^{0}(s,a) = r_{i,h}(s,a)$, $V_{i,H+1}\^{t}(s)=0$, for all $h \in [H], i\in[N], s\in\cS, a\in\cA, t\ge 0$. Randomly initialize $a_h\^{0}(s)\in\cA,\xi_h\^{0}(s)\in\cE$.
\FOR{$t = 0,1,\dots$}
\FOR{$h = H, H-1, \dots, 1$}
\STATE Step 1 (Actor): Let $a_h\^{t+1}(s), \xi_h\^{t+1}(s)\sim \Peps(\cdot,\cdot|a_h\^{t}(s), \xi_h\^{t}(s);\brac{Q_{i,h}\^{t}(s,\cdot)})$ 
\STATE Step 2 (Critic): Calculate $V_{i,h}\^{t+1}, Q_{i,h}\^{t+1}$ for all $i\in[n], h\in[H]$ as follows:
\begin{align*}
    &\textstyle V_{i,h}\^{t+1}(s) \!= \!\frac{t}{t\!+\!1} V_{i,h}\^t (s) + \frac{1}{t\!+\!1} Q_{i,h}\^t(s,a_h\^t(s))\left(\!\textup{i.e.,} V_{i,h}\^{t+1}(s) \!=\! \frac{\sum_{\tau=1}^{t+1} Q_{i,h}\^\tau (s,a_h\^\tau(s))}{t+1}\textup{}\right)\\
    &\textstyle Q_{i,h}\^{t+1}(s,a) = r_{i,h}(s,a) + \sum_{s'}P_h(s'|s,a) V_{i,h+1}\^{t+1}(s')
\end{align*}
\ENDFOR
\ENDFOR
\end{algorithmic}
\end{algorithm}

\begin{defi}[Notation $\pi\^{t}$ and $\pieps$]\label{defi:pieps-SG}
    We use the notation $\pi_h\^{t}(\cdot,\cdot|s)$ to denote the joint probability distribution of $a_h\^{t}(s), \xi_h\^{t}(s)$. We also overload the notation $\pi_h\^{t}$ to also denote the marginal distribution on $a_h\^{t}(s)$, i.e. $\pi_h\^{t}(a|s) = \sum_{\xi\in\cE} \pi_h\^{t}(a,\xi|s)$. If the limitation exists we also denote the stationary distribution as $\pieps_h(a,\xi|s):=\lim_{t\to+\infty} \pi_h\^{t}(a,\xi|s)$, $\pieps_h(a|s):=\lim_{t\to+\infty} \pi_h\^{t}(a|s)$.
\end{defi}
We are now ready to present our main theorem, which analyzes the stochastically stable policy of Algorithm \ref{alg:unified-framework} and extends the equilibrium selection theorem (Theorem \ref{theorem:young-evolution}) to the SG setting.
\begin{theorem}\label{theorem:stochastic-stable-global-optimality}
    Suppose that the learning rule $\Peps$ considered in Algorithm \ref{alg:unified-framework} satisfies Assumption \ref{assump: Ergodicity} and \ref{assump:resistence}, then the stationary distribution $\pieps$ (defined in Definition \ref{defi:pieps-SG}) exists. Further, $\lim_{\epsilon\to 0}\pieps = \pi^*$ exists, and that $\pi_h^*(\cdot|s)$ only has support on actions such that $\exists~\xi\in \cE$ such that $(a,\xi)$ minimizes the potential $\gamma(a,\xi;\brac{Q_{i,h}^{\pi_{h+1:H}^*}(s,\cdot)})$ (Definition \ref{defi:stochastic-potential}), i.e. 
    \begin{align*}
    \textstyle \gamma(a,\xi;\brac{Q_{i,h}^{\pi_{h+1:H}^*}(s,\cdot)}) = \min_{a',\xi'}\gamma(a',\xi';\brac{Q_{i,h}^{\pi_{h+1:H}^*}(s,\cdot)}).
    \end{align*}
\end{theorem}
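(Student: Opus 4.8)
The plan is to prove the three assertions — existence of $\pieps$, existence of $\pi^*=\lim_{\epsilon\to0}\pieps$, and the support characterization of $\pi^*_h$ — simultaneously by \emph{backward induction} on the stage $h$, from $h=H$ down to $h=1$. For each $h$ I maintain: (a) at every fixed $\epsilon>0$ and every $s$, as $t\to\infty$ the laws $\pi_{h:H}\^{t}$ converge to a limiting distribution $\pieps_{h:H}$ \emph{and} the critic estimates $Q_{i,h}\^{t}(s,\cdot)$ converge to the true $Q$-function $Q_{i,h}^{\pieps_{h+1:H}}(s,\cdot)$ of the stationary higher-stage policy; (b) the limit $\pi^*_{h:H}:=\lim_{\epsilon\to0}\pieps_{h:H}$ exists; and (c) for every $s$, the support of $\pi^*_h(\cdot|s)$ lies among the minimizers of $\gamma(\cdot,\cdot;\brac{Q_{i,h}^{\pi^*_{h+1:H}}(s,\cdot)})$. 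The structural fact that makes backward induction the natural device is that $V_{i,h}^\pi$ and $Q_{i,h}^\pi$ depend only on $\pi_{h:H}$ and $\pi_{h+1:H}$, so the normal-form game the actor faces at stage $h$ is determined entirely by the behavior of stages $>h$, which is exactly what the inductive hypothesis pins down.

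For the base case $h=H$, the initialization $V_{i,H+1}\^{t}\equiv0$ forces $Q_{i,H}\^{t}(s,a)=r_{i,H}(s,a)$ for all $t$, so the actor is a \emph{time-homogeneous} chain with kernel $\Peps(\cdot,\cdot|\cdot;\brac{r_{i,H}(s,\cdot)})$; Assumption \ref{assump: Ergodicity} yields (a), while Theorem \ref{theorem:young-evolution} applied to the fixed game $\brac{r_{i,H}(s,\cdot)}$ yields (b) and (c). For the inductive step I first analyze the critic. Expanding the Cesàro average $V_{i,h+1}\^{t+1}(s)=\frac{1}{t+1}\sum_{\tau=1}^{t+1}Q_{i,h+1}\^{\tau}(s,a_{h+1}\^{\tau}(s))$, I split each summand as $Q_{i,h+1}^{\pieps_{h+2:H}}(s,a_{h+1}\^{\tau}(s))$ plus an error $Q_{i,h+1}\^{\tau}-Q_{i,h+1}^{\pieps_{h+2:H}}$ that vanishes by the inductive critic-convergence claim, and apply an ergodic-averaging (law of large numbers) argument to the action process $a_{h+1}\^{\tau}(s)$, whose law converges to $\pieps_{h+1}(\cdot|s)$ by (a) at stage $h+1$; this gives $V_{i,h+1}\^{t+1}(s)\to\bE_{a\sim\pieps_{h+1}(\cdot|s)}[Q_{i,h+1}^{\pieps_{h+2:H}}(s,a)]=V_{i,h+1}^{\pieps_{h+1:H}}(s)$. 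Plugging into the Bellman update $Q_{i,h}\^{t+1}(s,a)=r_{i,h}(s,a)+\sum_{s'}P_h(s'|s,a)V_{i,h+1}\^{t+1}(s')$ then establishes the critic half of (a) at stage $h$, namely $Q_{i,h}\^{t}(s,\cdot)\to Q_{i,h}^{\pieps_{h+1:H}}(s,\cdot)$.

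With the critic converging, the actor at stage $h$ is a \emph{time-inhomogeneous} chain whose kernel $\Peps(\cdot,\cdot|\cdot;\brac{Q_{i,h}\^{t}(s,\cdot)})$ converges, using continuity of $\Peps$ in its payoff argument, to the ergodic kernel $\Peps(\cdot,\cdot|\cdot;\brac{Q_{i,h}^{\pieps_{h+1:H}}(s,\cdot)})$. A standard strong-ergodicity estimate for finite-state inhomogeneous chains — a Dobrushin contraction argument showing that the marginals of a chain whose kernels converge to an ergodic kernel converge to that kernel's unique stationary distribution — then gives $\pi_h\^{t}(\cdot,\cdot|s)\to\pieps_h(\cdot,\cdot|s)$, completing (a) at stage $h$. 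For (b) and (c) I let $\epsilon\to0$: the inductive hypothesis gives $\pieps_{h+1:H}\to\pi^*_{h+1:H}$, and continuity of $Q$ in the policy gives $Q_{i,h}^{\pieps_{h+1:H}}(s,\cdot)\to Q_{i,h}^{\pi^*_{h+1:H}}(s,\cdot)$, so the stage-$h$ game converges to the deterministic limit game $\brac{Q_{i,h}^{\pi^*_{h+1:H}}(s,\cdot)}$. Invoking the resistance/stochastic-potential machinery behind Theorem \ref{theorem:young-evolution}, once $\epsilon$ is small enough the resistances of the nearly-limiting game agree with those of the limit game, and the ratio bound $\pieps_h(a,\xi)/\pieps_h(a',\xi')<C\epsilon^{\gamma(a,\xi)-\gamma(a',\xi')}$ forces all mass onto the $\gamma$-minimizers of the limit game, giving both the existence of $\pi^*_h$ and the support characterization (c).

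The main difficulty is the interchange of the two limits $t\to\infty$ and $\epsilon\to0$, and it splits into two delicate points. First, the actor kernels at stage $h$ are driven by the \emph{random} critic estimates $Q_{i,h}\^{t}$, so the informal statements ``$Q\^{t}\to Q^{\pieps}$'' and ``the inhomogeneous actor chain converges'' must be organized carefully — for instance as almost-sure convergence of the critic (a two-timescale / stochastic-approximation flavored argument for the $1/(t+1)$-averaged iterates) followed by conditional convergence of the actor — so that the limiting law $\pieps_h$ is genuinely deterministic. Second, Theorem \ref{theorem:young-evolution} is stated for a \emph{fixed} game, whereas here the stage-$h$ game itself moves with $\epsilon$ through $\pieps_{h+1:H}$; I therefore must argue that the resistance values, and hence the set of $\gamma$-minimizers, stabilize as the game converges. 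This is precisely where a genericity / no-ties condition on the limit game (or a finer use of the ratio estimate) is needed to guarantee that $\lim_{\epsilon\to0}\pieps_h$ exists as a single distribution, rather than merely concentrating on the minimizing set.
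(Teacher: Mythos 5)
Your proposal follows essentially the same route as the paper's proof: a backward induction in which (i) the critic converges via Ces\`aro/ergodic averaging combined with the inductive hypothesis, (ii) the actor at each stage is treated as a time-inhomogeneous finite chain whose random kernels converge almost surely to an ergodic limit kernel — the paper handles exactly this with mixing-time perturbation bounds and conditioning on the almost-sure convergence event (its auxiliary lemmas), matching your Dobrushin-contraction plan — and (iii) the $\epsilon\to 0$ limit is controlled by the spanning-tree (Freidlin--Wentzell) ratio bound together with continuity of the stochastic-potential exponent in the underlying game, which is the paper's $o(1)$-in-the-exponent argument rather than your slightly imprecise claim that resistances literally ``agree'' for small $\epsilon$. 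Your closing caveat is also apt: the paper's own argument establishes only that mass on non-minimizers of $\gamma$ vanishes, i.e.\ concentration on the minimizing set, and does not actually prove that $\lim_{\epsilon\to 0}\pi^\epsilon_h$ exists as a single distribution without further argument, so the gap you flag is shared by the paper rather than introduced by your approach.
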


Theorem \ref{theorem:stochastic-stable-global-optimality} may initially appear abstract and challenging to interpret. To give a better illustration of the theorem, we present the following corollaries that apply Theorem \ref{theorem:stochastic-stable-global-optimality} to different learning rules (Example \ref{example:log-linear}, \ref{example:pareto-optimal} and \ref{example:pareto-NE}). A detailed proof sketch and explanation of Theorem \ref{theorem:stochastic-stable-global-optimality} and its implications are provided in the next section.

\begin{coro}\label{coro:SG-log-linear-learning}
    For a MPG (Definition \ref{defi:MPG}), if the learning rule $\Peps$ in Algorithm \ref{alg:unified-framework} is chosen as log-linear learning (Example \ref{example:log-linear}), then $\pi^\star$ defined in Theorem \ref{theorem:stochastic-stable-global-optimality} is a potential-maximizing policy (Definition \ref{defi:potential maximizing policy}).
\end{coro}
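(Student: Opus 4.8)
The plan is to reduce the corollary to a backward induction that recognizes the total potential function of the MPG as the Q-function of an auxiliary single-agent MDP, and then to feed the stagewise equilibrium-selection result (Corollary \ref{coro:log-linear-learning}) into that induction. The first move is to observe that the defining identity of an MPG (Definition \ref{defi:MPG}) says exactly that, for each fixed stage $h$, state $s$, and fixed continuation policy $\pi_{h+1:H}$, the normal-form game with payoffs $\{Q_{i,h}^{\pi_{h+1:H}}(s,\cdot)\}_{i=1}^n$ is a potential game (Definition \ref{defi:potential-game}) whose potential function is $\Phi_h^{\pi_{h+1:H}}(s,\cdot)$; here I use that, exactly like the Q-function, $\Phi_h^\pi(s,a)$ depends only on $\pi_{h+1:H}$ (conditioning on $a_h=a$ removes the dependence on $\pi_h$). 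Since log-linear learning has no hidden variable ($\cE=\emptyset$), combining Theorem \ref{theorem:stochastic-stable-global-optimality} (support of $\pi_h^\star(\cdot|s)$ on stochastic-potential minimizers of the stage game $\{Q_{i,h}^{\pi_{h+1:H}^\star}(s,\cdot)\}$) with Corollary \ref{coro:log-linear-learning} (which identifies those minimizers with the maximizers of the potential function) yields the key stagewise characterization
\[
\operatorname{supp}\pi_h^\star(\cdot|s)\subseteq\argmax_{a}\Phi_h^{\pi_{h+1:H}^\star}(s,a)=\argmax_{a}\Phi_h^{\pi^\star}(s,a),\qquad\forall h,s.
\]

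Next I would set up the MDP viewpoint that makes Definition \ref{defi:potential maximizing policy} operational. Defining $\Psi_h^\pi(s):=\sum_{a}\pi_h(a|s)\Phi_h^\pi(s,a)$, the total potential satisfies the Bellman-type recursion $\Phi_h^\pi(s,a)=\phi_h(s,a)+\sum_{s'}P_h(s'|s,a)\Psi_{h+1}^\pi(s')$, so that $\Phi$ and $\Psi$ are precisely the Q- and value-functions of the finite-horizon MDP with stage reward $\phi_h$, and a potential-maximizing policy is exactly an optimal policy of this MDP. Writing $\Phi_h^*(s,a):=\max_\pi\Phi_h^\pi(s,a)$ and $\Psi_h^*(s):=\max_\pi\Psi_h^\pi(s)$, my target becomes the single claim that $\Phi_h^{\pi^\star}(s,a)=\Phi_h^*(s,a)$ for all $h,s,a$, which is literally the statement of Definition \ref{defi:potential maximizing policy}.

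I would prove this by backward induction on $h$, with the base case $\Phi_{H+1}^\pi\equiv 0$ trivial. Assuming $\Phi_{h+1}^{\pi^\star}=\Phi_{h+1}^*$, the stagewise characterization above at stage $h+1$ forces $\pi_{h+1}^\star(\cdot|s')$ to be supported on $\argmax_{a'}\Phi_{h+1}^{\pi^\star}(s',a')=\argmax_{a'}\Phi_{h+1}^*(s',a')$, whence $\Psi_{h+1}^{\pi^\star}(s')=\max_{a'}\Phi_{h+1}^*(s',a')$. Applying the recursion for $\Phi_h^{\pi^\star}$ and comparing with the one for $\Phi_h^*$ then closes the induction. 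The main obstacle, and the step deserving the most care, is the identity $\max_{a'}\Phi_{h+1}^*(s',a')=\Psi_{h+1}^*(s')$ together with the fact that a single continuation policy attains $\Psi_{h+1}^*(s')$ simultaneously at every $s'$: the ``$\le$'' direction is immediate from $\Phi_{h+1}^\pi\le\Phi_{h+1}^*$, while the ``$\ge$'' direction requires exhibiting a greedy-plus-optimal-continuation policy achieving the per-state maximum, i.e. the standard MDP fact that greedy backward induction yields a uniformly optimal policy. Care is needed to ensure the game-theoretic argmax characterization (greediness of $\pi^\star$ with respect to its \emph{own} potential Q-function $\Phi_h^{\pi^\star}$) correctly propagates through the induction hypothesis $\Phi_{h+1}^{\pi^\star}=\Phi_{h+1}^*$ so that it becomes greediness with respect to the \emph{optimal} $\Phi_{h+1}^*$; once this is in place, the recursion gives $\Phi_h^{\pi^\star}=\Phi_h^*$ and the corollary follows.
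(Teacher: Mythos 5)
Your proposal is correct and follows essentially the same route as the paper's proof: backward induction over stages, where at each stage $h$ you combine Theorem \ref{theorem:stochastic-stable-global-optimality} with Corollary \ref{coro:log-linear-learning} (using that the MPG condition makes $\{Q_{i,h}^{\pi_{h+1:H}^\star}(s,\cdot)\}_{i=1}^n$ a potential game with potential $\Phi_h^{\pi_{h+1:H}^\star}(s,\cdot)$) and then close the induction via Bellman optimality. The only difference is that you spell out the "Bellman optimality condition" step that the paper invokes in one line, by explicitly casting $\Phi$ and $\Psi$ as the Q- and value-functions of the auxiliary MDP with stage reward $\phi_h$ — a useful elaboration, but not a different argument.
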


\begin{coro}\label{coro:SG-pareto-optimal}
    For a general sum SG, if the learning rule $\Peps$ in Algorithm \ref{alg:unified-framework} is chosen as Example \ref{example:pareto-optimal} and under certain interdependence assumption (see Assumption \ref{assump:SG-interdependence} in Appendix \ref{apdx:proof-corollaries}), $\pi^\star$ defined in Theorem \ref{theorem:stochastic-stable-global-optimality} is a Pareto optimal policy (Definition \ref{defi:pareto-optimal}).
\end{coro}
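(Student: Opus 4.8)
The plan is to reduce the Pareto-optimality claim to a single-agent dynamic-programming argument on the social-welfare objective, combining Theorem \ref{theorem:stochastic-stable-global-optimality} with the normal-form guarantee of Corollary \ref{coro:pareto-optimal}. First I would introduce the social-welfare value and action-value functions $W_h^\pi(s):=\sum_{i=1}^n V_{i,h}^\pi(s)$ and $U_h^\pi(s,a):=\sum_{i=1}^n Q_{i,h}^\pi(s,a)$, and observe that they obey the ordinary single-agent Bellman recursion with scalar stage reward $\bar{r}_h(s,a):=\sum_{i=1}^n r_{i,h}(s,a)$, namely $U_h^\pi(s,a)=\bar{r}_h(s,a)+\sum_{s'}P_h(s'|s,a)\,W_{h+1}^\pi(s')$ and $W_h^\pi(s)=\sum_a \pi_h(a|s)\,U_h^\pi(s,a)$. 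By Definition \ref{defi:pareto-optimal}, a Pareto optimal policy is exactly a maximizer of $W_h^\pi(s)$ for every $h,s$, so it suffices to show that $\pi^\star$ is an optimal policy for the MDP with reward $\bar{r}$.

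The next step is to turn the stochastic-potential characterization of $\pi^\star$ into a greedy-selection statement. Theorem \ref{theorem:stochastic-stable-global-optimality} tells us that $\pi_h^\star(\cdot|s)$ is supported only on actions $a$ for which some $\xi\in\cE$ makes $(a,\xi)$ a minimizer of $\gamma(a,\xi;\brac{Q_{i,h}^{\pi^\star_{h+1:H}}(s,\cdot)})$. I would then invoke Corollary \ref{coro:pareto-optimal} applied to the stage game whose rewards are $r_i \leftarrow Q_{i,h}^{\pi^\star_{h+1:H}}(s,\cdot)$: provided the SG interdependence condition (Assumption \ref{assump:SG-interdependence}) guarantees that each such stage game satisfies the normal-form interdependence hypothesis, the corollary (together with Theorem \ref{theorem:young-evolution}, which identifies minimizers of $\gamma$ with SSE) says that minimizing the stochastic potential is equivalent to maximizing $\sum_{i=1}^n Q_{i,h}^{\pi^\star_{h+1:H}}(s,a)=U_h^{\pi^\star_{h+1:H}}(s,a)$. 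Hence $\pi_h^\star(\cdot|s)$ is supported on $\argmax_a U_h^{\pi^\star_{h+1:H}}(s,a)$; that is, it is greedy with respect to the social-welfare action-value of the fixed continuation policy $\pi^\star_{h+1:H}$.

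With greediness established I would finish by backward induction on $h$, proving $W_h^{\pi^\star}(s)=W_h^\star(s):=\max_\pi W_h^\pi(s)$ for all $s$. The base case $h=H+1$ is immediate from the terminal value $0$. For the inductive step, the hypothesis $W_{h+1}^{\pi^\star}\equiv W_{h+1}^\star$ lets me rewrite $U_h^{\pi^\star_{h+1:H}}(s,a)=\bar{r}_h(s,a)+\sum_{s'}P_h(s'|s,a)\,W_{h+1}^\star(s')$, whose pointwise maximum over $a$ is exactly the Bellman-optimal value $W_h^\star(s)$. Since $\pi_h^\star$ concentrates on the maximizers, averaging gives $W_h^{\pi^\star}(s)=\max_a U_h^{\pi^\star_{h+1:H}}(s,a)=W_h^\star(s)$. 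This conclusion survives even if the limiting $\pi_h^\star$ is a nondegenerate mixture over several maximizers, because every action in its support attains the same maximal social-welfare value.

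I expect the main obstacle to be the middle step: verifying that the continuation Q-games $\brac{Q_{i,h}^{\pi^\star_{h+1:H}}(s,\cdot)}$ genuinely satisfy the interdependence hypothesis required to apply Corollary \ref{coro:pareto-optimal}, since interdependence is a structural property of a normal-form game and is not automatically inherited from the one-stage rewards $\{r_{i,h}\}$. This is precisely the role of the SG-level Assumption \ref{assump:SG-interdependence}, and the argument hinges on showing that it transfers to every stage game encountered along the backward recursion. A secondary, more routine subtlety is confirming that ties among social-welfare maximizers do not break the induction, which the averaging observation above resolves.
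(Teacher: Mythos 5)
Your proposal is correct and follows essentially the same route as the paper's proof: invoke Theorem \ref{theorem:stochastic-stable-global-optimality} to restrict the support of $\pi^\star_h(\cdot|s)$ to stochastic-potential minimizers of the stage game $\brac{Q_{i,h}^{\pi^\star_{h+1:H}}(s,\cdot)}$, translate these into social-welfare maximizers via Corollary \ref{coro:pareto-optimal} under Assumption \ref{assump:SG-interdependence} (which, as you note, is stated directly for all continuation Q-games, so no transfer argument is needed), and close with backward induction through the Bellman optimality condition. Your write-up is in fact more explicit than the paper's, which compresses your entire social-welfare MDP and tie-handling discussion into the single sentence ``From the Bellman optimality condition, it suffices to show that $\pi_h^\star(a|s)>0$ only if $a\in\argmax\sum_{i=1}^n Q_{i,h}^{\pi^\star_{h+1:H}}(s,a)$.''
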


\begin{coro}\label{coro:SG-pareto-optimal-NE}
    For a general sum SG, if the learning rule $\Peps$ in Algorithm \ref{alg:unified-framework} is chosen as Example \ref{example:pareto-NE}, and if there exists at least one MPE, then under the interdependence assumption (see Assumption \ref{assump:SG-interdependence} in Appendix \ref{apdx:proof-corollaries}), $\pi^\star$ defined in Theorem \ref{theorem:stochastic-stable-global-optimality} is a Pareto optimal MPE (Definition \ref{defi:pareto-optimal}).
\end{coro}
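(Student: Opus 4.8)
The plan is to feed the normal-form characterization of Example and Corollary~\ref{example:pareto-NE} into the stagewise description supplied by Theorem~\ref{theorem:stochastic-stable-global-optimality}, and then propagate the guarantee through a backward induction on $h$. Theorem~\ref{theorem:stochastic-stable-global-optimality} tells us that at each $(h,s)$ the support of $\pi^\star_h(\cdot|s)$ is exactly the set of stochastically stable equilibria of the normal-form game whose payoffs are the continuation $Q$-values $\brac{Q_{i,h}^{\pi^\star_{h+1:H}}(s,\cdot)}$. For the learning rule of Example~\ref{example:pareto-NE}, Corollary~\ref{example:pareto-NE} pins these SSEs down: \emph{provided that stage game has at least one pure NE}, they are precisely the Pareto-optimal pure NEs, i.e. the members of $\argmax_{a\textup{ is a pure NE}}\sum_{i=1}^n Q_{i,h}^{\pi^\star_{h+1:H}}(s,a)$. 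The whole argument therefore reduces to two things: verifying that this normal-form corollary is applicable at every $(h,s)$, and showing that ``play the welfare-maximizing pure NE of the continuation game at each stage'' assembles into a policy lying in $\Pi_{\textup{MPE}}$ that also maximizes $\sum_{i=1}^n V_{i,h}^{\pi}(s)$ over $\Pi_{\textup{MPE}}$.

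I would first record the one-shot-deviation characterization of MPE: since $Q_{i,h}^\pi=Q_{i,h}^{\pi_{h+1:H}}$, a deterministic policy is an MPE iff at every $(h,s)$ its stage action is a (strict) pure NE of $\brac{Q_{i,h}^{\pi_{h+1:H}}(s,\cdot)}$; hence the continuation of any MPE is itself an MPE of the truncated game, which makes $\Pi_{\textup{MPE}}$ tractable by backward induction. The induction hypothesis at stage $h$ is that $\pi^\star_{h:H}$ is a well-defined MPE of the game truncated to $h{:}H$ with $\sum_i V_{i,h}^{\pi^\star}(s)\ge \sum_i V_{i,h}^{\pi}(s)$ for all $\pi\in\Pi_{\textup{MPE}}$ and all $s$. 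The base case $h=H$ is immediate: the stage game is $\brac{r_{i,H}(s,\cdot)}$, the existence of some MPE exhibits a pure NE there, Corollary~\ref{example:pareto-NE} applies, and any two welfare-maximal pure NEs carry the same social welfare so the possible non-uniqueness of the support is harmless. For the step, the welfare recursion $\sum_i V_{i,h}^{\pi}(s)=\sum_i Q_{i,h}^{\pi_{h+1:H}}(s,a_h^{\pi}(s))$, the hypothesis $\sum_i V_{i,h+1}^{\pi^\star}\ge \sum_i V_{i,h+1}^{\pi}$, and $P_h\ge 0$ give $\sum_i V_{i,h}^{\pi}(s)\le \sum_i Q_{i,h}^{\pi^\star_{h+1:H}}(s,a_h^{\pi}(s))$, which reduces the comparison to the single game $\brac{Q_{i,h}^{\pi^\star_{h+1:H}}(s,\cdot)}$, where Corollary~\ref{example:pareto-NE} says $\pi^\star$ selects the welfare-maximal pure NE.

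The main obstacle is closing this last inequality, because the set of pure NEs of the stage game depends on the continuation policy. To dominate $\sum_i Q_{i,h}^{\pi^\star_{h+1:H}}(s,a_h^{\pi}(s))$ by $\sum_i V_{i,h}^{\pi^\star}(s)=\max_{a\textup{ pure NE of the }\pi^\star\textup{-game}}\sum_i Q_{i,h}^{\pi^\star_{h+1:H}}(s,a)$, I need $a_h^{\pi}(s)$ --- known only to be a pure NE of $\pi$'s \emph{own} continuation game --- to remain a pure NE once the continuation is swapped to $\pi^\star$; otherwise some non-equilibrium profile of the $\pi^\star$-game could carry strictly higher social welfare than every pure NE. The same coupling threatens well-posedness, since invoking Corollary~\ref{example:pareto-NE} at stage $h$ already presumes the $\pi^\star$-continuation game has a pure NE, whereas the hypothesis only supplies one for the continuation of the assumed MPE. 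The real difficulty is that the induction delivers dominance only of the \emph{summed} value $\sum_i V_{i,h+1}$, whereas preserving a strict equilibrium under the swap is a condition on the \emph{individual} continuation $Q$-differences; bridging these is precisely what the interdependence assumption (Assumption~\ref{assump:SG-interdependence}) together with the strictness in the MPE definition must accomplish. I would therefore isolate and prove a monotonicity-of-equilibria lemma of the form ``replacing an MPE continuation by the welfare-dominating $\pi^\star$-continuation preserves strict pure NEs and creates no strictly better non-equilibrium profile'', which at once supplies the pure NE required by Corollary~\ref{example:pareto-NE} and puts $a_h^{\pi}(s)$ in the feasible set of the maximum, closing the induction and giving $\sum_i V_{i,h}^{\pi^\star}(s)\ge\sum_i V_{i,h}^{\pi}(s)$ for all $\pi\in\Pi_{\textup{MPE}}$ as required by Definition~\ref{defi:pareto-optimal}. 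Finally, when several welfare-maximal pure NEs tie, any deterministic selection among them is itself a Pareto-optimal MPE, so the possibly mixed $\pi^\star$ of Theorem~\ref{theorem:stochastic-stable-global-optimality} is supported on Pareto-optimal MPEs.
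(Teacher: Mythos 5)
Your first two paragraphs reproduce the paper's own proof almost exactly: the paper argues by backward induction over $h$, applying Theorem~\ref{theorem:stochastic-stable-global-optimality} at each pair $(h,s)$ to conclude that $\pi^\star_h(\cdot|s)$ is supported on the stochastic-potential minimizers of the stage game $\brac{Q_{i,h}^{\pi^\star_{h+1:H}}(s,\cdot)}$, and then invoking the normal-form characterization for the learning rule of Example~\ref{example:pareto-NE} (i.e.\ Corollary~\ref{coro-pareto-NE}; the paper's appendix in fact miscites Corollaries~\ref{coro:pareto-optimal} and~\ref{coro:log-linear-learning} at this point) to identify that support with $\argmax_{a\ \textup{a pure NE}}\sum_{i=1}^n Q_{i,h}^{\pi^\star_{h+1:H}}(s,a)$; the base case $h=H$ uses $Q_{i,H}^\pi=r_{i,H}$ just as you do. So on the skeleton you and the paper coincide.

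The divergence is your third paragraph, and there the verdict cuts both ways. The difficulty you isolate is real: an arbitrary MPE $\pi$ has its stage action $a_h^\pi(s)$ certified as a pure NE only of its \emph{own} continuation game $\brac{Q_{i,h}^{\pi_{h+1:H}}(s,\cdot)}$, and the induction hypothesis controls only the \emph{summed} values $\sum_i V_{i,h+1}$, so neither the survival of $a_h^\pi(s)$ as an NE after swapping the continuation to $\pi^\star$, nor even the existence of a pure NE in the $\pi^\star$-stage game (which Corollary~\ref{coro-pareto-NE} requires), is supplied. Notably, the paper does not resolve this either: its proof simply declares ``it suffices to show'' the characterization in equation~\eqref{eq:pareto-NE-proof-induction-target} --- a statement entirely about $\pi^\star$'s own continuation game --- and closes the induction there, leaving the comparison against MPEs with different continuations unaddressed. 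However, your proposed patch is a genuine gap in your write-up: the ``monotonicity-of-equilibria lemma'' is conjectured, not proved, and it does not follow from the hypotheses you have. Summed-welfare dominance of $\pi^\star$'s continuation is compatible with \emph{lowering} some individual agent's continuation value at the profile $a_h^\pi(s)$, which can destroy that agent's best-response condition; and Assumption~\ref{assump:SG-interdependence} is a richness condition on who can affect whom, carrying no order relation between the two continuation games, so it cannot bridge sum-level dominance to the agentwise $Q$-inequalities your lemma needs. As stated the lemma is therefore doubtful in general, and it is doing precisely the unproven work that the paper's ``it suffices'' elides. In short: your reconstruction is faithful to --- and more candid than --- the paper's proof, but the step you flag remains open in both, and closing it would require a stronger induction invariant than dominance of the summed values alone.
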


\begin{rmk}\label{rmk:limitation}
Corollary \ref{coro:SG-log-linear-learning},\ref{coro:SG-pareto-optimal},\ref{coro:SG-pareto-optimal-NE} produce similar stochastically stable outcomes compared with their counterparts in the normal form setting (Corollary \ref{coro:log-linear-learning}, \ref{coro:pareto-optimal}, \ref{example:pareto-NE}), demonstrating the modularity and convenience of our framework. However, it also highlights a limitation: the strength of our results is contingent upon the strength of the corresponding normal-form game results. If the results for the normal-form game rely on certain assumptions (e.g., the potential game assumption or the interdependence assumptions), the corresponding results for the stochastic game will hold only under the same assumptions. Furthermore, since our algorithm applies learning rules at each stage $h$ and state $s$, the assumptions must hold for each stage and state accordingly.
\end{rmk}


\paragraph{Fully sample-based algorithm}
Note that Algorithm 1 is still restrictive because the update of the critic (step 2) still requires knowledge of the true transition probability $P_h(s'|s,a)$. However, we would like to point out that our algorithm framework as well as theoretical analysis can be extended to fully sample-based learning. For the sake of compactness, here we only provide the sample-based algorithm in Algorithm \ref{alg:sample-based unified-framework} but defer the theory and proof into Appendix \ref{apdx:sample-based}.

\begin{rmk}[Comparison with Existing Decentralized MARL Algorithms]
Algorithm~\ref{alg:unified-framework} differs from classical decentralized MARL methods such as \cite{zhang2021gradient,song2021}. In those works, agents do not observe other agents’ actions and maintain local critic estimates that effectively correspond to averaged action-value functions depending only on $(s,a_i)$. In contrast, our update rule requires observing other agents’ actions and maintaining a joint action-value function $Q(s,a)$ over the full action space, which leads to increased storage and sample complexity. This additional complexity, however, enables stronger guarantees. Existing decentralized MARL algorithms typically establish convergence to a Nash equilibrium or a (coarse) correlated equilibrium, without specifying which equilibrium is selected when multiple equilibria exist. By contrast, our framework provides explicit equilibrium selection guarantees, allowing convergence to a designated equilibrium. From this perspective, the proposed algorithm trades increased complexity for finer control over equilibrium selection.
\end{rmk}

\begin{algorithm}
\caption{Sample-based Unified Learning Framework}\label{alg:sample-based unified-framework}
\begin{algorithmic}
\REQUIRE Initialization $Q_{i,h}\^{0}(s,a) = r_{i,h}(s,a)$, $V_{i,H+1}\^{t}(s)=0$, for all $h \in [H], i\in[N], s\in\cS, a\in\cA, t\ge 0$. Randomly initialize $a_h\^{0}(s)\in\cA,\xi_h\^{0}(s)\in\cE$. Set the initial count $N_h(s,a) = 0$ for all $h\in[H], s\in\cS, a\in\cA$.
\FOR{$t = 0,1,\dots$}
\FOR{$h = H, H-1, \dots, 1$}
\STATE Step 1 (Actor): Let $a_h\^{t+1}(s), \xi_h\^{t+1}(s)\sim \Peps(\cdot|a_h\^{t}(s), \xi_h\^{t}(s);\brac{Q_{i,h}\^{t}(s,\cdot)})$ 
\STATE Step 2 (Sample): Randomly sample $\bars_1\^{t+1} \in \cS$, then sample the trajectory $\bars_1\^{t+1},\bara_1\^{t+1}, \bars_2\^{t+1},\dots,\bara_{H-1}\^{t+1}, \bars_H\^{t+1}$ 
using the policy $\bara_h\^{t+1}:= a_h\^{t+1}(\bars_h\^{t+1})$.\\
Update $N_h(\bars_h\^{t+1}, \bara_h\^{t+1}) \leftarrow N_h(\bars_h\^{t+1}, \bara_h\^{t+1}) +1$
\STATE Step 3 (Critic): Calculate $V_{i,h}\^{t+1}, Q_{i,h}\^{t+1}$ for all $i\in[n], h\in[H]$ as follows:
\begin{align*}
    &\hspace{-20pt}\textstyle V_{i,h}\^{t+1}(s) = \frac{t}{t\!+\!1} V_{i,h}\^t (s) + \frac{1}{t\!+\!1} Q_{i,h}\^t(s,a_h\^t(s))\quad \left(\textup{i.e. } V_{i,h}\^{t+1}(s) = \frac{\sum_{\tau=1}^{t+1} Q_{i,h}\^\tau (s,a_h\^\tau(s))}{t+1}\textup{}\right)\\
    &\hspace{-20pt}\textstyle Q_{i,h}\^{t+1}(s,a) \!=\! \left\{\begin{array}{l}
     \hspace{-5pt}{\scalebox{0.87}{$ \frac{N_h(s,a)-1}{N_h(s,a)}Q_{i,h}\^{t}(s,a) \!+\! \frac{1}{N_h(s,a)}\left(r_{i,h}(s,a) \!+\!  V_{i,h+1}\^{t+1}(\bars_{h+1}\^{t+1})\right)\!, \textup{if } s, \!a\!=\! \bars_h\^{t\!+\!1}\!\!, \bara_h\^{t\!+\!1}$}}\\
     Q_{i,h}\^{t}(s,a)   \quad  \textup{otherwise}\qquad 
    \end{array}\right.
\end{align*}
\ENDFOR
\ENDFOR
\end{algorithmic}
\end{algorithm}

\section{Proof sketches for Theorem~\ref{theorem:stochastic-stable-global-optimality}}
\paragraph{Key technical challenge: iteration-varying $\brac{Q_{i,h}\^{t}}$} The major technical difficulty in extending the equilibrium selection result from the normal-form game setting to the stochastic game setting is that the learning rule is applied to $\brac{Q_{i,h}\^{t}(s,\cdot)}$ instead of $\brac{r_{i}}$. Notice that $Q_{i,h}\^{t}(s,\cdot)$ is a random variable that varies with each iteration $t$, and it is potentially correlated with outputs in the past iterations. Thus the dynamics is more complicated to analyze and we cannot directly apply existing results in the normal-form game setting to solve the problem. In this section, we use the special case of identical interest stochastic game with $H=2$ and log-linear learning to illustrate the key proof ideas. The full proof and extension to more general settings can be found in the appendix.

\paragraph{To gain intuition: consider two-stage game $H=2$}
To gain an intuition on the convergence property for Algorithm \ref{alg:unified-framework}, we first consider the stochastic game with only two stages, i.e., $H=2$. For the sake of illustration, we choose the learning rule to be log-linear learning (Example \ref{example:log-linear}) and the rewards to be identical rewards, i.e. $r_{i,h} = r_h$ for every agent $i$ at every stage $h=1,2$. Thus, the Q-function estimations $Q_{i,h}\^{t}$ are also the same and we denote it as $Q_h\^{t}$. Also, note that log-linear learning doesn't require hidden variables ($\cE = \emptyset$) thus the learning algorithm at stage $h$ and state $s$ is given by $a_h\^{t+1}(s)\sim\Peps(\cdot|a_h\^{t}(s); Q_h\^{t}(s,\cdot))$.

At the last stage $h=H= 2$ and at each state $s$, we have that $Q_{H}\^{t}(s,\cdot) = r_H(s,\cdot)$, i.e. $Q_H\^{t}$ does not vary with iteration $t$, thus the dynamic is the same as applying log-linear learning on a normal-form game, with the reward matrix given as $r_H(s,\cdot)$. Hence we can apply the results in the normal-form game setting and conclude that the stationary distribution $\pieps_H(\cdot|s)$ exists for $a_H^{(t)}(s)$. 
Additionally, we can apply the concentration lemmas of the Markov chain at the second stage to conclude that
\vspace{-3pt}
\begin{align*}
 \textstyle    V_H\^{t+1}(s) = \frac{\sum_{\tau=1}^{t+1} r_H (s,a_h\^\tau(s))}{t+1} \asto V^{{\pieps_H}}_H(s).
\end{align*}
Thus this gives that for $h = H-1 = 1$, we have
\begin{equation}\label{eq:Q-t-asto-Q-pi}
\begin{split}
   \textstyle  Q_{h=1}\^{t}(s,a) = &\textstyle r_{h=1}(s,a) + \sum_{s'}P_{h=1}(s'|s,a) V_{H}\^{t}(s') \\
   \asto~ &\textstyle r_{h=1}(s,a) + \sum_{s'}P_{h=1}(s'|s,a) V_{H}^{\pieps_H}(s') = Q_{h=1}^{\pieps_H}(s,a).
\end{split}
\end{equation}
At stage $h=1$, as the previous paragraph on the key technical challenge has pointed out, the problem becomes a bit more complicated because $Q_{{h=1}}\^{t}(s,\cdot)$ is a random variable that is iteration-varying. Thus we cannot directly apply the result from the normal-form game setting. However, we can use the useful observation in equation \eqref{eq:Q-t-asto-Q-pi} that $Q_{h=1}\^{t}(s,a) \asto Q_{h=1}^{\pieps_H}(s,a)$, i.e. although $Q_{h}\^{t}$ are random variables that vary w.r.t. $t$, as $t$ goes to infinity it converges almost surely to a fixed value $Q_{h}^{\pieps_H}$. The following lemma serves as a fundamental lemma in the proof.
\begin{lemma}\label{lemma:auxiliary-main-informal}(Informal, formal statement see Lemma \ref{lemma:auxiliary-main} in Appendix \ref{apdx:auxiliaries})
    For random variables $Q_{h=1}\^{t}(s,\cdot)$ that almost surely converges to $Q_{h=1}^{\pieps_H}(s,\cdot)$, we have that the stationary distribution and the concentration properties of the following random processes $\{a_{h=1}\^{t}(s)\}, \{{a'}_{h=1}\^{t}(s)\}$ in Eq \eqref{eq:algorithm-block} and \eqref{eq:auxiliary-block} are the same.
    \vspace{-5pt}
\begin{align}
  \textstyle a_{h=1}\^{t+1}(s) \sim\Peps(\cdot|a_{h=1}\^{t}(s); Q_{h=1}\^{t}(s,\cdot)) \label{eq:algorithm-block}\\
 \textstyle  {a'}_{h=1}\^{t+1}(s) \sim\Peps(\cdot|{a'}_{h=1}\^{t}(s); Q_{h=1}^{\pieps_H}(s,\cdot)) \label{eq:auxiliary-block}
\end{align}
\end{lemma}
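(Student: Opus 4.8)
The plan is to read Eq.~\eqref{eq:algorithm-block} as a \emph{time-inhomogeneous} Markov chain on the finite space $\cA\times\cE$ whose one-step kernels converge to the fixed, ergodic kernel driving the \emph{time-homogeneous} chain in Eq.~\eqref{eq:auxiliary-block}, and then to argue that asymptotic homogeneity forces the two chains to share both their limiting marginal (the ``stationary distribution'') and their Cesàro averages (the ``concentration properties''). The first step I would take is to decouple the two sources of randomness. Since $Q_{h=1}\^{t}(s,\cdot)$ is a function of the stage-$H$ estimate $V_H\^{t}$, which depends only on the stage-$H$ action process and \emph{not} on $\{a_{h=1}\^{t}(s)\}$, I can condition on the entire stage-$H$ sample path. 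On a probability-one event this fixes a deterministic sequence with $Q_{h=1}\^{t}(s,\cdot)\to Q_{h=1}^{\pieps_H}(s,\cdot)$, and conditionally the learning-rule coin flips at stage $1$ are independent of that sequence. It therefore suffices to prove the claim pathwise, for a deterministic convergent sequence of kernels, and then integrate over the conditioning.

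Next I would establish the kernel convergence. For every fixed $\epsilon>0$ the map $Q\mapsto\Peps(\cdot|\cdot;Q)$ is continuous (for log-linear learning it is the explicit smooth ratio of $\epsilon^{-Q}$ terms), so $P_t:=\Peps(\cdot|\cdot;Q_{h=1}\^{t}(s,\cdot))$ converges to $P_\infty:=\Peps(\cdot|\cdot;Q_{h=1}^{\pieps_H}(s,\cdot))$, i.e.\ $\max_x \TV\!\big(P_t(\cdot|x),P_\infty(\cdot|x)\big)\to 0$. By Assumption~\ref{assump: Ergodicity}, $P_\infty$ is ergodic on a finite space, so there exist an integer $m$ and $\rho<1$ with a Dobrushin contraction $\TV(\mu P_\infty^m,\nu P_\infty^m)\le \rho\,\TV(\mu,\nu)$. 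Writing $\mu_t,\nu_t$ for the laws of $a_{h=1}\^{t}(s)$ and ${a'}_{h=1}\^{t}(s)$ and $\pi$ for the unique stationary law of $P_\infty$, I would telescope $\TV(\mu_{t+1},\pi)\le \max_x\TV(P_t(\cdot|x),P_\infty(\cdot|x))+\TV(\mu_tP_\infty,\pi)$: the first term vanishes and the second contracts geometrically over blocks of $m$ steps, giving $\mu_t\to\pi$. Since $\nu_t\to\pi$ as well, the two processes have the same limiting marginal, which is exactly the equality of stationary distributions.

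For the concentration claim I would build an explicit block coupling of the two chains on a common probability space. Ergodicity of $P_\infty$ yields a uniform lower bound on the probability of meeting within $m$ steps; because $P_t\to P_\infty$, the same lower bound holds for all $t$ past a finite time, so the coupling time is almost surely finite with geometric tails. After the coupling time the two chains coincide, hence for any bounded $f$ the empirical averages $\tfrac1t\sum_{\tau\le t} f(a_{h=1}^\tau(s))$ and $\tfrac1t\sum_{\tau\le t} f({a'}_{h=1}^\tau(s))$ differ by an $O(1/t)$ correction and have identical almost-sure limits and fluctuation behaviour; applying this to $f=Q_{h=1}^{\pieps_H}(s,\cdot)$ transfers the homogeneous-chain concentration (Theorem~\ref{theorem:young-evolution}) to Eq.~\eqref{eq:algorithm-block}. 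Integrating over the stage-$H$ path removes the conditioning.

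I expect the main obstacle to be the interplay of the two randomness sources: the kernels $\{P_t\}$ are themselves random, converging only almost surely and at a \emph{random} rate, so none of the contraction or coupling estimates come with a deterministic speed. The conditioning device is what makes this tractable — it turns $\{P_t\}$ into a deterministic convergent sequence along almost every stage-$H$ path — but care is needed to ensure that the block-contraction constant $\rho$ and the coupling lower bound are uniform enough to survive taking $t\to\infty$ \emph{after} conditioning, rather than relying on a rate of convergence for $Q_{h=1}\^{t}\asto Q_{h=1}^{\pieps_H}$.
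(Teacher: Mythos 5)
Your overall skeleton follows the paper's own route: you condition on the stage-$H$ sample path so that $\{Q_{h=1}\^{t}(s,\cdot)\}$ becomes a deterministic sequence converging to $Q_{h=1}^{\pieps_H}(s,\cdot)$ (this is exactly the conditional-independence device in Eq.~\eqref{eq:independence-condition} and Lemma \ref{lemma:auxiliary-main-underlying}), you view \eqref{eq:algorithm-block} as a time-inhomogeneous chain whose kernels converge to the ergodic kernel of \eqref{eq:auxiliary-block}, and you prove convergence of the marginals by a perturbation argument. Your Dobrushin block-contraction telescoping is an adequate substitute for the paper's mixing-time perturbation bound (Lemma \ref{lemma:auxiliary-non-homogeneous Markov chain}), so the ``stationary distribution'' half of your argument is sound.

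The genuine gap is in the coupling argument for the concentration half. You claim that once the two chains meet, ``after the coupling time the two chains coincide,'' so the empirical averages differ by an $O(1/t)$ correction. This is false for chains driven by \emph{different} kernels: at every finite $t$ the kernels $P_t$ and $P_\infty$ differ, so under any coupling two paths that have met at a state $x$ can be kept together at the next step only with probability $1-\delta_t$, where $\delta_t$ is of the order of the total variation distance between $P_t(\cdot|x)$ and $P_\infty(\cdot|x)$; they are forced apart with probability about $\delta_t>0$. Since $Q_{h=1}\^{t}\asto Q_{h=1}^{\pieps_H}$ comes with no rate---and indeed $V_H\^{t}$ is itself an empirical average, so one should expect $\delta_t$ of order $t^{-1/2}$, which is not summable---the conditional Borel--Cantelli lemma makes the coupled paths separate infinitely often almost surely. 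There is therefore no almost-surely finite time after which the paths agree, and neither the $O(1/t)$ bound nor the transfer of ``fluctuation behaviour'' follows. The conclusion is still true, but it needs an argument you do not supply: either show that the \emph{asymptotic density} of disagreement times vanishes (using $\frac{1}{t}\sum_{\tau\le t}\delta_\tau\to 0$ together with bounds on excursion lengths, plus an additional step to make this almost sure), or do what the paper does---keep the conditioned process as a non-homogeneous Markov chain, bound its mixing time uniformly once the kernels are close (Corollary \ref{coro:auxiliary-mixing-time-for-non-homogeneous-Markov-chain}), and combine the non-homogeneous concentration inequality of \citet{Paulin18} (Lemma \ref{lemma:concentration-non-homogeneous Markov chain}) with the mean-convergence bound (Corollary \ref{coro:auxiliary-convergence-of-mean}) and Borel--Cantelli before integrating out the conditioning. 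A minor additional slip: the homogeneous-chain concentration you invoke is not Theorem \ref{theorem:young-evolution} (which concerns resistances and stochastic potentials), but a Markov-chain law of large numbers.
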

 The above lemma suggests that when considering the asymptotic behavior (e.g. stationary distribution), the original random process $\Peps(\cdot|\cdot; Q_{h=1}\^{t}(s,\cdot))$ is equivalent to considering the auxiliary process $\Peps(\cdot|\cdot; Q_{h=1}^{\pieps_H}(s,\cdot))$. Note that for auxiliary process, $Q_{h=1}^{\pieps_H}(s,\cdot)$ is no longer iteration varying. Thus, we can apply classical results in equilibrium selection (in specific Corollary \ref{coro:log-linear-learning}) and obtain the following lemma.
\begin{lemma}\label{lemma:2-stage-lemma2}
     For a given state $s$ and horizon $h$, define $a^*$ such that $Q_h^{\pieps}(s,a^*) \in \argmax_{a}Q_h^{\pieps}(s,a)$. There exists a uniform constant $C$ (with respect to $\epsilon$), such that for any $\epsilon$, 
    \begin{align}
    \textstyle    \pieps_{h=1}(a|s) &\textstyle < C e^{-\frac{1}{\epsilon}(Q^{\pieps_{H}}_h(s,a^*)-Q^{\pieps_{H}}_h(s,a))}, ~~\forall ~a\in \cA\label{eq:2-stage-1}\\
     \textstyle   \pieps_{h=2}(a|s) &<\textstyle C e^{-\frac{1}{\epsilon}(r_H(s,a^*)-r_H(s,a))}, ~~\forall ~a\in \cA \label{eq:2-stage-2}
    \end{align}
\end{lemma}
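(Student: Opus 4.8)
The plan is to establish the two bounds one stage at a time, treating the terminal stage $h=2$ as the base case and then bootstrapping to $h=1$. Both halves ultimately reduce to the explicit Gibbs form of the log-linear stationary distribution on a potential game (Corollary \ref{coro:log-linear-learning}); the only extra ingredient needed at $h=1$ is the reduction of the iteration-varying dynamics to a fixed-reward surrogate supplied by Lemma \ref{lemma:auxiliary-main-informal}.

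First I would prove Eq \eqref{eq:2-stage-2}. At the terminal stage the critic is inert, $Q_{h=2}\^{t}(s,\cdot) = r_H(s,\cdot)$ for every $t$, so at each fixed state $s$ the actor recursion $a_{h=2}\^{t+1}(s)\sim \Peps(\cdot\,|\,a_{h=2}\^{t}(s); r_H(s,\cdot))$ is exactly log-linear learning on the normal-form game with payoff $r_H(s,\cdot)$. Since all agents share this payoff, the game is an identical-interest game and hence a potential game (Definition \ref{defi:potential-game}) with potential $\phi = r_H(s,\cdot)$; because $Q_{h=2}^{\pieps}(s,\cdot)=r_H(s,\cdot)$, the $a^*$ in the statement is a potential maximizer. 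Corollary \ref{coro:log-linear-learning} then gives $\pieps_{h=2}(a|s)/\pieps_{h=2}(a^*|s) < C\,\epsilon^{\phi(a^*) - \phi(a)}$, and bounding $\pieps_{h=2}(a^*|s)\le 1$ yields the claim, once the $\epsilon^{\phi(a^*)-\phi(a)}$ scaling of the corollary is re-expressed in the exponential parametrization of the mistake rate used in the statement. The uniform-in-$\epsilon$ constant $C$ is inherited directly from the corollary.

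Next I would prove Eq \eqref{eq:2-stage-1}, where the real difficulty lies: the payoff $Q_{h=1}\^{t}(s,\cdot)$ driving the stage-$1$ actor is random and changes with $t$, so Corollary \ref{coro:log-linear-learning} cannot be applied verbatim. The key is Eq \eqref{eq:Q-t-asto-Q-pi}, which gives $Q_{h=1}\^{t}(s,\cdot)\asto Q_{h=1}^{\pieps_H}(s,\cdot)$; this is precisely the hypothesis of Lemma \ref{lemma:auxiliary-main-informal}, so I may replace the true process \eqref{eq:algorithm-block} by the fixed-reward surrogate \eqref{eq:auxiliary-block} without altering the stationary distribution. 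The surrogate is once more log-linear learning on an identical-interest game, now with the constant payoff $Q_{h=1}^{\pieps_H}(s,\cdot)$, i.e. a potential game with potential $\phi = Q_{h=1}^{\pieps_H}(s,\cdot)$. Applying Corollary \ref{coro:log-linear-learning} to this surrogate and repeating the normalization step (take the comparison action to be the maximizer $a^*$ of $Q_{h=1}^{\pieps_H}(s,\cdot)$ and use $\pieps_{h=1}(a^*|s)\le 1$) delivers Eq \eqref{eq:2-stage-1}.

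I expect the main obstacle to be concentrated entirely in the legitimacy of invoking Lemma \ref{lemma:auxiliary-main-informal}, i.e. in arguing that almost-sure convergence of the driving payoffs is strong enough to force the non-time-homogeneous stage-$1$ chain to possess a stationary distribution that coincides with that of its homogeneous surrogate. Here the finiteness of $\cS$ and $\cA$ is what renders the joint convergence over all $(s,a)$ uniform and lets the equivalence in Lemma \ref{lemma:auxiliary-main-informal} go through. Once that reduction is granted, both bounds follow mechanically from Corollary \ref{coro:log-linear-learning} and the trivial normalization $\pieps(a^*|s)\le 1$, and the uniformity of $C$ in $\epsilon$ transfers directly from the normal-form corollary.
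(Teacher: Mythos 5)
Your proposal is correct and follows essentially the same route as the paper: establish the bound at the terminal stage $h=2$ directly from Corollary \ref{coro:log-linear-learning} (since $Q_{h=2}\^{t}(s,\cdot)=r_H(s,\cdot)$ is iteration-independent and the identical-interest game is a potential game), and then handle $h=1$ by invoking Lemma \ref{lemma:auxiliary-main-informal} to replace the iteration-varying process \eqref{eq:algorithm-block} with the fixed-reward surrogate \eqref{eq:auxiliary-block}, to which Corollary \ref{coro:log-linear-learning} applies. Your added remarks---normalizing via $\pieps(a^*|s)\le 1$ and reconciling the $\epsilon^{\phi(a^*)-\phi(a)}$ scaling of the corollary with the $e^{-\frac{1}{\epsilon}(\cdot)}$ parametrization in the lemma statement---are exactly the small bookkeeping steps the paper leaves implicit.
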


From Lemma 2, let $\epsilon \to 0$, then from \eqref{eq:2-stage-2} we can get that for $h=H=2$, $\pieps_h \to \pi_h^\star$, where $\pi_h^\star$ is the optimal policy that maximize the reward, i.e. $\pi_h^\star(\cdot|s) \in \argmax_{\pi_h(\cdot|s)} \sum_a\pi_h(a|s) r_h(s,a)$.

As for $h=1$, note that from the previous argument, we have that as $\epsilon\to 0$, $Q^{\pieps_{H}}_h(s,a) \to Q^{\pi^\star_H}_h(s,a)$, thus we can further conclude that  $\pieps_h\to \pi_h^\star$ where  $\pi_h^\star(\cdot|s) \in \argmax_{\pi_h(\cdot|s)} \sum_a\pi_h(a|s)  Q^{\pi^\star_H}_h(s,a)$. Note that the above definitions for $\pi_h^\star$ satisfy the Bellman optimality condition, which allows us to draw the conclusion that $\pi^\star$ is the optimal policy that maximizes the accumulative reward. Thus we have proved that, for an \emph{identical interest, 2-stage game with the log-linear learning algorithm}, $\pieps$ (Definition \ref{defi:pieps-SG}) exists. Further, as $\epsilon\to 0$, $\pieps \to \pi^*$, where $\pi^*$ is the global optimal policy. 

\paragraph{Extension to $H > 2$, general-sum games and other learning rules} For the sake of compactness, we defer the extension to general settings to the appendix. The proof idea follows the above discussion for $H=2$, where the key insight is to use Lemma \ref{lemma:auxiliary-main-informal} to simplify the analysis.

\section{Numerical case study: a two-stage stag-hunt}\label{sec:numerics}

\begin{figure}[htbp]
\centering
\begin{small}
\begin{center}
{
\vspace{10pt}
\begin{tabular}{|c|c|c|}
\hline 
  & $\displaystyle a_{2} \ =\ 0$ & $\displaystyle a_{2} =1$ \\
\hline 
 $\displaystyle a_{1} =0$ & $\displaystyle 0,0$ & $\displaystyle 0,2$ \\
\hline 
 $\displaystyle a_{1} =1$ & $\displaystyle 2,0$ & $\displaystyle 1,1$ \\
 \hline
\end{tabular}}

\vspace{5pt}
{ $\displaystyle h=1$}
\end{center}

{
\tikzset{every picture/.style={line width=0.75pt}} 
\begin{center}
\hspace{-30pt}
    \begin{tikzpicture}
[x=0.75pt,y=0.75pt,yscale=-0.6,xscale=0.4]

\draw    (353,2.8) -- (137.93,64.22) ;
\draw [shift={(136,64.8)}, rotate = 345.23] [color={rgb, 255:red, 0; green, 0; blue, 0 }  ][line width=0.75]    (10.93,-3.29) .. controls (6.95,-1.4) and (3.31,-0.3) .. (0,0) .. controls (3.31,0.3) and (6.95,1.4) .. (10.93,3.29)   ;
\draw    (353,2.8) -- (557.08,64.22) ;
\draw [shift={(559,64.8)}, rotate = 196.75] [color={rgb, 255:red, 0; green, 0; blue, 0 }  ][line width=0.75]    (10.93,-3.29) .. controls (6.95,-1.4) and (3.31,-0.3) .. (0,0) .. controls (3.31,0.3) and (6.95,1.4) .. (10.93,3.29)   ;

\draw (30,23) node [anchor=north west][inner sep=0.75pt]    {$a_{1} =a_{2} =0$};
\draw (485,23) node [anchor=north west][inner sep=0.75pt]    {Else};
\end{tikzpicture}
\end{center}

}
\begin{center}

{ $\displaystyle  \begin{array}{ccc}
s=A\ \ \ \ \ \ \ \ \ \ \ \ \ \ \ \ \ \ \ \ \ \ \ \ \ \ \ \ \ \ \ \ \ \ \ \ \ \  s =B 
\end{array}$}\\
{
\begin{tabular}{|c|c|c|}
\hline 
  & $\displaystyle a_{2} \ =\ 0$ & $\displaystyle a_{2} =1$ \\
\hline 
 $\displaystyle a_{1} =0$ & $\displaystyle 0.5$ & $\displaystyle 0$ \\
\hline 
 $\displaystyle a_{1} =1$ & $\displaystyle 0$ & $\displaystyle 0$ \\
 \hline
\end{tabular} \ \ \ \ \ \ \ \ 
\begin{tabular}{|c|c|c|}
\hline 
  & $\displaystyle a_{2} \ =\ 0$ & $\displaystyle a_{2} =1$ \\
\hline 
 $\displaystyle a_{1} =0$ & $\displaystyle 0,0$ & $\displaystyle 0,2$ \\
\hline 
 $\displaystyle a_{1} =1$ & $\displaystyle 2,0$ & $\displaystyle 1,1$ \\
 \hline
\end{tabular}}

\vspace{5pt}
{ $\displaystyle h=2$}
\end{center}
\end{small}
\caption{Game schematic for the two-stage stag-hunt game (states, transitions, payoffs).}
\label{fig:stag-hunt-setup}
\end{figure}

We illustrate our framework on a simple two-player, two-stage stag-hunt stochastic game that highlights the equilibrium-selection trade-off.
Each player has two actions: ``Stag'' (action 0) and ``Hare'' (action 1). 
Hunting a Stag requires both players to coordinate and persist in choosing Stag for both stages; if successful the game yields a large joint reward 7.5 allocated evenly (each player receives 3.75) at the second stage. 
Hunting a Hare is easier, one player can catch a Hare in a single stage, but the payoff is smaller (total reward 2, split equally when both choose Hare, so each receives 1). 
The transition structure and payoffs are illustrated in the schematic in Figure~\ref{fig:stag-hunt-setup}.

This game admits two strict MPEs: the \emph{Stag} equilibrium (both players choose Stag at both stages), which is Pareto-dominant, and the \emph{Hare} equilibrium (both choose Hare at both stages), which is risk-dominant~\citep{harsanyi1995new}. The model-based portion of our theory predicts the following selection behavior when we embed well-known normal-form learning rules into our actor (see Section~\ref{sec:unified-framework}): (i) with log-linear learning (soft best responses with vanishing noise), the dynamics select the risk-dominant (Hare) equilibrium in the zero-noise limit; (ii) with the tailored Pareto-selection dynamics of \citet{marden_achieving_2012}, the dynamics instead select the Pareto-dominant (Stag) equilibrium.

For the plots shown in \cref{fig:numerics-apdx} we run the model-based actor–critic variant of Algorithm~\ref{alg:unified-framework} with the critic performing Bellman-style backward updates and the actor at each state/stage implementing either (A) log-linear learning or (B) the Pareto-selection rule of \citet{marden_achieving_2012}. 
To estimate long-run selection frequencies we run~$N=100$ independent trials with different random seeds. 
Each trial consists of~$T$ actor-update iterations (we used~$T=10,000$ in these experiments) and empirical policy frequencies~$\bar\pi^T(a\mid s)$ are computed via time-averaging of the actor's mixed policies. 
The plots display the evolution of~$\bar\pi^t(a\mid s_0)$ for the initial state~$s_0$; the shaded bands show the empirical 60\% confidence band across the~$N$ trials.
Initial policies were uniform; the log-linear actor used a decaying noise schedule so that the effective temperature vanishes slowly; the Pareto-selection rule used the parameter choices recommended in \citet{marden_achieving_2012}.

\cref{fig:numerics-apdx} shows the time evolution of the empirical probability of selecting the two pure joint actions~$a=(1,1)$ (Hare--Hare, yellow) and \(a=(0,0)\) (Stag--Stag, blue) at the initial state. 
With log-linear learning (left plot) the empirical frequency concentrates on the Hare joint action: the yellow curve grows toward one and the blue curve decays toward zero, confirming selection of the risk-dominant equilibrium. 
With the Pareto-selection dynamics (right plot) the opposite behavior occurs: the empirical frequency concentrates on the Stag joint action, corroborating that the tailored dynamics steer play to the Pareto-optimal MPE. 
The shaded 60\% bands indicate the variability across trials and show that the selection outcome is robust under stochastic initialization.

These numerical results align with our theoretical claims: the actor–critic reduction faithfully inherits the stochastic-stability properties of the embedded normal-form rules, allowing us to steer MARL toward qualitatively different equilibria by swapping only the actor dynamics. In Appendix \ref{apdx:another-numerical-example} we also provide additional numerical illustrations for another identical-interest SG example.\footnote{Code: \url{https://github.com/DianYu420376/Equilibrium-Selection-for-MARL}.}


\begin{figure}[tb]
  \centering
\includegraphics[width=0.4\textwidth]{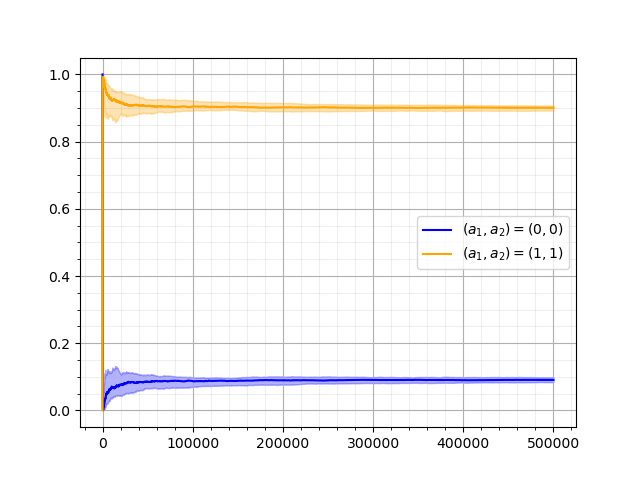}
\hspace{20pt}
\includegraphics[width=0.4\textwidth]{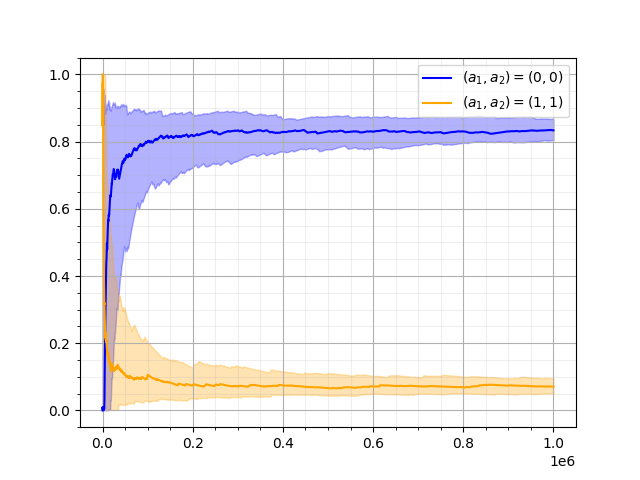}
  \caption{Numerical result of running Algorithm \ref{alg:unified-framework} with log-linear learning (Left) and the learning rule in \citep{marden_achieving_2012} (Right) The plot demonstrates the evolution of $\pi\^{t}(a|s)$ (estimated using empirical averaging), where $s$ is the initial state and the yellow line plots the curve for $a = (1,1)$ and the blue line $a=(0,0)$. The shaded areas are the $60\%$ confidence interval, which are calculated by 100 runs.}
  \label{fig:numerics-apdx}
\end{figure}
\section{Conclusions}
In this paper, we have presented a unified and modular framework for equilibrium selection in the multi-agent reinforcement learning (MARL) setting. By leveraging classical equilibrium selection results from normal-form games, we extend these insights to the stochastic game context, providing an adaptable approach for achieving higher-quality equilibria. Our framework's modularity allows for the seamless integration of various learning rules, demonstrating its flexibility and wide applicability. Our work admittedly has its limitations. One major drawback is that our results focus primarily on asymptotic guarantees. Important aspects such as convergence rates and sample complexity remain unexplored in our current framework. Additionally, although we managed to obtain asymptotic results for fully sample-based learning, the sample efficiency is not ideal and thus there is still a gap towards application to practical settings. We leave the investigation of these aspects as future work.

\section*{Acknowledgment}

The authors acknowledge the use of AI-assisted tools for improving grammar, clarity, and overall writing quality.
\appendix
\section{Proof of Theorem \ref{theorem:stochastic-stable-global-optimality}}\label{apdx:proo-of-main-theorem}
Before stating the proof, we first introduce some important lemmas.
\begin{lemma}\label{lemma:convergence-to-pi-eps} (Proof see Appendix \ref{apdx:convergence-to-pi-eps-optimality})
    For $h = 1,2,\dots, H$, and every $s\in\cS$, the stationary distribution $\pieps$ exists, i.e.  $a_h\^{t}(s)$ converges to $\pieps_h(\cdot|s)$ in distribution, and $V_h\^t(s)\asto V_h^\pieps(s)$, $Q_{i,h}\^t(s,a)\asto Q_{i,h}^\pieps(s,a)$. Further, the stationary distribution can be calculated by the following recursive definition:
    \begin{equation*}
\begin{split}
    &\pieps_H(\cdot,\cdot|s) \textup{ is the stationary distribution under transition kernel }\Peps(\cdot|\cdot;\brac{r_{i\!,H}(s,\!\cdot))}\\
    &\pieps_h(\cdot,\cdot|s)\textup{ is the stationary distribution under } \Peps(\cdot|\cdot;\brac{Q_{i,h}^{\pieps_{h+1:H}}(s,\cdot)}), h\!\le\! H\!-\!1
\end{split}
\end{equation*}
\end{lemma}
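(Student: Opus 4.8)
The plan is to prove all three conclusions — existence of the stationary distribution $\pieps_h(\cdot,\cdot|s)$, the convergence $V_{i,h}\^{t}(s)\asto V_{i,h}^\pieps(s)$, and the convergence $Q_{i,h}\^{t}(s,a)\asto Q_{i,h}^\pieps(s,a)$ — simultaneously by \textbf{backward induction on the stage index $h$}, running from $h=H$ down to $h=1$; the stated recursive characterization of $\pieps_h$ is then read off directly from the induction. For the base case $h=H$, the initialization $V_{i,H+1}\^{t}=0$ freezes the critic at $Q_{i,H}\^{t}(s,a)=r_{i,H}(s,a)$ for every $t$, so the actor update $a_H\^{t+1}(s),\xi_H\^{t+1}(s)\sim\Peps(\cdot,\cdot|a_H\^{t}(s),\xi_H\^{t}(s);\brac{r_{i,H}(s,\cdot)})$ is a genuine \emph{time-homogeneous} chain on $\cA\times\cE$ with fixed payoffs. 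Assumption \ref{assump: Ergodicity} makes it ergodic, so a unique stationary distribution $\pieps_H(\cdot,\cdot|s)$ exists and $a_H\^{t}(s)$ converges to it in distribution; moreover $V_{i,H}\^{t}(s)=\frac{1}{t}\sum_{\tau=1}^{t}r_{i,H}(s,a_H\^{\tau}(s))$ is a Cesàro average of a bounded function along an ergodic chain, so the Markov-chain law of large numbers gives $V_{i,H}\^{t}(s)\asto\sum_a\pieps_H(a|s)r_{i,H}(s,a)=V_{i,H}^\pieps(s)$.

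For the inductive step, assume the three claims hold for stages $h+1,\dots,H$; in particular $Q_{i,h}\^{t}(s,\cdot)\asto Q_{i,h}^{\pieps_{h+1:H}}(s,\cdot)$, a deterministic limit. The actor at stage $h$ now runs the \emph{time-inhomogeneous} chain $a_h\^{t+1}(s)\sim\Peps(\cdot|a_h\^{t}(s);\brac{Q_{i,h}\^{t}(s,\cdot)})$, whose driving payoffs change with $t$, so the normal-form results cannot be invoked directly. This is precisely the scenario of Lemma \ref{lemma:auxiliary-main-informal}: because the payoffs converge almost surely to the fixed limit $\brac{Q_{i,h}^{\pieps_{h+1:H}}(s,\cdot)}$, the asymptotic behavior of this chain matches that of the auxiliary time-homogeneous chain driven by the limiting payoffs. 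The auxiliary chain is ergodic by Assumption \ref{assump: Ergodicity}, yielding the stationary distribution $\pieps_h(\cdot,\cdot|s)$ of exactly the recursive form claimed, together with the in-distribution convergence of $a_h\^{t}(s)$.

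It remains to propagate the value and $Q$ convergence one stage down. Writing the critic average as $V_{i,h}\^{t}(s)=\frac{1}{t}\sum_{\tau=1}^{t}Q_{i,h}\^{\tau}(s,a_h\^{\tau}(s))$, I would split each summand as $Q_{i,h}^{\pieps}(s,a_h\^{\tau}(s))+\big(Q_{i,h}\^{\tau}(s,a_h\^{\tau}(s))-Q_{i,h}^{\pieps}(s,a_h\^{\tau}(s))\big)$. The bracketed remainder tends to $0$ almost surely, uniformly over the finite action set, by the induction hypothesis on $Q$, so its Cesàro average vanishes; the leading term is the Cesàro average of a \emph{fixed} bounded function along the actor chain, which converges to $\sum_a\pieps_h(a|s)Q_{i,h}^{\pieps}(s,a)=V_{i,h}^\pieps(s)$ by the concentration guarantee of Lemma \ref{lemma:auxiliary-main-informal}. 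Finally the Bellman update $Q_{i,h-1}\^{t}(s,a)=r_{i,h-1}(s,a)+\sum_{s'}P_{h-1}(s'|s,a)V_{i,h}\^{t}(s')$ is a finite linear combination of almost-surely convergent terms, so $Q_{i,h-1}\^{t}(s,a)\asto Q_{i,h-1}^\pieps(s,a)$, closing the induction.

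The main obstacle is the value-function convergence inside the inductive step, where the Cesàro average carries a \emph{double} source of time-variation: the actor chain $a_h\^{t}(s)$ is itself time-inhomogeneous (its kernel depends on the drifting $Q_{i,h}\^{t}$), while the integrand $Q_{i,h}\^{\tau}$ drifts at the same time. The decomposition above is designed to decouple these two effects, but making it rigorous rests entirely on Lemma \ref{lemma:auxiliary-main-informal} delivering not merely a matching stationary distribution but a genuine law-of-large-numbers statement for a chain whose transition kernel converges to an ergodic limit. Establishing that ergodic-averaging property for such an asymptotically time-homogeneous chain — and checking that almost-sure convergence of the payoffs, ergodicity, and finiteness of $\cS,\cA,\cE$ together suffice for it — is the technical crux of the whole argument.
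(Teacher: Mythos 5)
Your proposal is correct and takes essentially the same route as the paper's proof: backward induction from $h=H$, with the base case settled by ergodicity of the time-homogeneous chain under Assumption \ref{assump: Ergodicity}, the inductive step settled by reducing the payoff-drifting actor chain to the auxiliary fixed-payoff chain via Lemma \ref{lemma:auxiliary-main-informal} (Lemma \ref{lemma:auxiliary-main} in the appendix), the Ces\`aro average handled by splitting off the fixed function $Q_{i,h}^{\pieps}$ and absorbing the almost-surely vanishing remainder, and the $Q$-convergence propagated through the Bellman update as a finite linear combination of almost-surely convergent terms. The only cosmetic difference is bookkeeping: you place the Bellman propagation at the end of the inductive step (passing to stage $h-1$) whereas the paper places it at the start (deriving $Q_{i,h}\^{t}\asto Q_{i,h}^{\pieps}$ from the stage-$(h+1)$ hypothesis), and you correctly identify that the whole argument rests on the law-of-large-numbers content of the auxiliary lemma, which is exactly what the paper establishes separately via mixing-time perturbation bounds.
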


\begin{lemma}\label{lemma:pi-eps-optimality}(Proof see Appendix \ref{apdx:convergence-to-pi-eps-optimality})
There exists a uniform constant $C$ (with respect to $\epsilon$), such that for any $\epsilon$, 
    \begin{align*}
        \pieps_h(a,\xi|s) < C\epsilon^{\gamma(a,\xi;\brac{Q^\pieps_{i,h}(s,\cdot)})-\min_{a',\xi'}\gamma(a',\xi';\brac{Q^\pieps_{i,h}(s,\cdot)})}, ~~\forall ~a,\xi\in \cA\times\cE
    \end{align*}
\end{lemma}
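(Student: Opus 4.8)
The plan is to reduce the claim, for each fixed stage $h$ and state $s$, to a single application of the normal-form equilibrium-selection characterization (Theorem \ref{theorem:young-evolution}) after ``freezing'' the $Q$-function. The enabling fact is Lemma \ref{lemma:convergence-to-pi-eps}: for each fixed $\epsilon$, the marginal $\pieps_h(\cdot,\cdot|s)$ is exactly the (unique, by Assumption \ref{assump: Ergodicity}) stationary distribution of the time-homogeneous Markov chain with transition kernel $\Peps(\cdot|\cdot;\brac{Q_{i,h}^{\pieps_{h+1:H}}(s,\cdot)})$. Since $Q^\pieps_{i,h}(s,\cdot)=Q_{i,h}^{\pieps_{h+1:H}}(s,\cdot)$ is a \emph{fixed} payoff vector once the downstream stationary distributions are fixed, the family $\brac{Q^\pieps_{i,h}(s,\cdot)}$ is a bona fide normal-form game on $\cA$, and the actor dynamics at $(h,s)$ is precisely the rule $\Peps$ applied to this game. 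Thus no downward induction on $h$ is needed for this lemma (that induction is already absorbed into Lemma \ref{lemma:convergence-to-pi-eps}); the analysis at each $(h,s)$ is that of a single normal-form game with the reward replaced by the converged $Q$-values.

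Granting this reduction, the routine part proceeds as follows. First I would invoke Theorem \ref{theorem:young-evolution} for the game $\brac{Q^\pieps_{i,h}(s,\cdot)}$ and the learning rule $\Peps$: for every pair $(a,\xi),(a',\xi')$ there is a constant $C$ with
\[
\frac{\pieps_h(a,\xi|s)}{\pieps_h(a',\xi'|s)} < C\,\epsilon^{\gamma(a,\xi;\brac{Q^\pieps_{i,h}(s,\cdot)}) - \gamma(a',\xi';\brac{Q^\pieps_{i,h}(s,\cdot)})}.
\]
Then I would specialize $(a',\xi')$ to a minimizer $(a^\star,\xi^\star)\in\argmin_{a',\xi'}\gamma(a',\xi';\brac{Q^\pieps_{i,h}(s,\cdot)})$, so the exponent becomes $\gamma(a,\xi;\cdot)-\min_{a',\xi'}\gamma(a',\xi';\cdot)$. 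Since $\pieps_h(a^\star,\xi^\star|s)\le 1$ is a probability, multiplying the ratio bound through by $\pieps_h(a^\star,\xi^\star|s)$ yields exactly the claimed one-sided bound on $\pieps_h(a,\xi|s)$. Equivalently, one can skip the ratio form and derive the bound directly from the Markov-chain-tree representation $\pieps_h(a,\xi|s)=\sum_{T\in\cT(a,\xi)}\prod_{e\in T}\Peps(e)\big/\sum_{a',\xi'}\sum_{T\in\cT(a',\xi')}\prod_{e\in T}\Peps(e)$, bounding the numerator above and the denominator below by the minimal-potential vertex using Assumption \ref{assump:resistence}.

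The main obstacle is the uniformity of $C$ in $\epsilon$. The difficulty is that the game $\brac{Q^\pieps_{i,h}(s,\cdot)}$ to which Theorem \ref{theorem:young-evolution} is applied is \emph{itself} $\epsilon$-dependent (through $\pieps_{h+1:H}$), so a priori the constant furnished by the theorem could degrade as $\epsilon\to0$. Tracing $C$ through the tree formula, it is a product of the Assumption \ref{assump:resistence} constants $C_1,C_2$ raised to powers bounded by $|\cA||\cE|-1$ together with the fixed count $|\cT(a,\xi)|$; hence it suffices to show $C_1,C_2$ can be chosen uniformly over the family $\{\brac{Q^\pieps_{i,h}(s,\cdot)}\}_\epsilon$. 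Here I would use that the $Q$-values are uniformly bounded (rewards lie in $[0,1]$ and the horizon is $H$, so $Q^\pieps_{i,h}(s,\cdot)\in[0,H]$) and, by Lemma \ref{lemma:convergence-to-pi-eps}, converge as $\epsilon\to0$, so the payoff vectors range over a compact set; combined with continuity of the transition resistances in the payoffs, compactness yields uniform $C_1,C_2$. For the concrete rules of interest this is even cleaner---e.g.\ for log-linear learning the transition probability equals $\tfrac1n\epsilon^{R}$ divided by a normalizing sum lying in $[1,|\cA_i|]$, so $C_1,C_2$ are \emph{independent} of the payoff values and uniformity is immediate. Establishing this uniform control of $C_1,C_2$, and hence of $C$, is where the real work lies.
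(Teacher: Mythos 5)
Your proposal is correct and takes essentially the same route as the paper: the paper's proof is exactly your ``equivalent'' second variant, writing $\pieps_h(\cdot,\cdot|s)$ via the Markov-chain-tree (Freidlin--Wentzell) representation, bounding each tree product above and below with the constants $C_1,C_2$ of Assumption \ref{assump:resistence}, taking the ratio against the stochastic-potential minimizer $(a^*,\xi^*)$, and concluding since $\pieps_h(a^*,\xi^*|s)\le 1$ --- which is also precisely the ratio bound of Theorem \ref{theorem:young-evolution} that your first variant invokes. The one place you go beyond the paper is the uniformity of $C_1,C_2$ over the $\epsilon$-dependent family of games $\brac{Q^\pieps_{i,h}(s,\cdot)}$: the paper silently treats these constants as game-independent when applying Assumption \ref{assump:resistence}, whereas you flag the issue and sketch a compactness/continuity (or rule-specific, e.g.\ log-linear) fix, which is a legitimate refinement rather than a different proof.
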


We are now ready to state the main proof of this section.
\begin{proof}[Proof of Theorem \ref{theorem:stochastic-stable-global-optimality}]
From Lemma \ref{lemma:convergence-to-pi-eps} we know that $\pi\^{t}(\cdot|s)$ converges to $\pieps(\cdot|s)$ in distribution. Further, from Lemma \ref{lemma:pi-eps-optimality},
\begin{align*}
        \pieps_h(a,\xi|s) < C\epsilon^{\gamma(a,\xi;\brac{Q^\pieps_{i,h}(s,\cdot)})-\min_{a',\xi'}\gamma(a',\xi';\brac{Q^\pieps_{i,h}(s,\cdot)})}, ~~\forall ~a,\xi\in \cA\times\cE
    \end{align*}
    We now prove the theorem using induction. For $h= H$, clearly this implies that $\pieps_H(a,\xi|s)\to 0$ if $\gamma(a,\xi;\brac{r_{i,H}(s,\cdot)})> \min_{a',\xi'}\gamma(a',\xi';\brac{r_{i,H}(s,\cdot)})$, i.e.\\ $\pi_H^*(a|s) = 0$ if for all $\xi\in\cE$, $\gamma(a,\xi;\brac{r_{i,H}(s,\cdot)})> \min_{a',\xi'}\gamma(a',\xi';\brac{r_{i,H}(s,\cdot)})$, which proves the statement for $h=H$.
    
    Assume that the statement holds for $h' = h+1,\dots, H$, $\pieps_{h'}\to \pi_{h'}^*$. We now show that the statement also holds for $h$. Note that for any pair of action and hidden variable $a,\xi$ such that $\gamma(a,\xi;\brac{Q_{i,h}^{\pi_{h+1:H}^*}(s,\cdot)}) >\min_{a',\xi'}\gamma(a',\xi';\brac{Q_{i,h}^{\pi_{h+1:H}^*}(s,\cdot)})$, we have that
    \begin{align*}
       & \lim_{\epsilon\to 0}\pieps_h(a,\xi|s) < \lim_{\epsilon\to 0}C\epsilon^{\gamma(a,\xi;\brac{Q^{\pieps_{h+1:H}}_{i,h}(s,\cdot)})-\min_{a',\xi'}\gamma(a',\xi';\brac{Q^{\pieps_{h+1:H}}(s,\cdot)})}\\
        & = \lim_{\epsilon\to 0}C\epsilon^{\gamma(a,\xi;\brac{Q^{\pi^*_{h+1:H}}{i,h}(s,\cdot)})-\min_{a',\xi'}\gamma(a',\xi';\brac{Q^{\pi^*_{h+1:H}}_{i,h}(s,\cdot)})+ o(1)}  = 0
    \end{align*}
    which completes the proof.
\end{proof}
\section{Proof of Corollary \ref{coro:SG-log-linear-learning}, \ref{coro:SG-pareto-optimal}, \ref{coro:SG-pareto-optimal-NE}}\label{apdx:proof-corollaries}
\begin{proof}[Proof of Corollary \ref{coro:SG-log-linear-learning}]
    Since at $h=H$, for any policy $\pi$, $Q_{i,H}^\pi = r_{i,H}$, thus from Theorem \ref{theorem:stochastic-stable-global-optimality}, we have that at stage $h=H$, $\pi_H^\star(a|s) > 0$ if and only if 
     \begin{align*}
    \textstyle \gamma(a;\brac{r_{i,H}(s,\cdot)}) = \min_{a'}\gamma(a';\brac{r_{i,H}(s,\cdot)(s,\cdot)}).
    \end{align*}
    (Note that for log-linear learning, there's no hidden variable $\xi$.) Under the MPG assumption, we have that $\brac{r_{i,H}(s,\cdot)}$ forms a potential game where the potential function is given as $\Phi_H$. Thus, from Corollary \ref{coro:log-linear-learning} we know that such $a$ satisfies $a \in \argmax_{a} \Phi_H(s,a).$ This implies that at $h=H$, $\pi_H^\star$ is a potential-maximizing policy.

    We now prove by induction. Assume that $\pi_{h+1:H}^\star$ is potential maximizing, we now prove that $\pi_{h:H}^\star$ is also potential maximizing. From the Bellman optimality condition, it suffices to show that $\pi_h^\star(a|s) > 0$ only if $a \in \argmax \Phi_h^{\pi^\star_{h+1:H}}(s,a)$. To show this, once again we apply Theorem \ref{theorem:stochastic-stable-global-optimality} and conclude that $\pi_h^\star(a|s) > 0$ only if 
    \begin{align*}
    \textstyle \gamma(a;\brac{Q_{i,h}^{\pi_{h+1:H}^*}(s,\cdot)}) = \min_{a'}\gamma(a';\brac{Q_{i,h}^{\pi_{h+1:H}^*}(s,\cdot)}),
    \end{align*}
    and Corollary \ref{coro:log-linear-learning} implies that this is equivalent to $a \in \argmax_a \Phi_h^{\pi^\star_{h+1:H}}(s,a),$ and thus the proof is completed by induction
\end{proof}

Before proving Corollary \ref{coro:SG-pareto-optimal} and \ref{coro:SG-pareto-optimal-NE}, we first state the interdependence assumption.
\begin{assump}[Interdependent Stochastic Game]\label{assump:SG-interdependence}
    A stochastic game is called interdependent if for any policy $\pi$, stage $h\in[H]$ and state $s\in\cS$, the Q-functions $\brac{Q_{i,h^\pi(s,\cdot)}}$ forms an interdependent normal-form game (cf. Definition 3.1 in \cite{marden_achieving_2012}).
\end{assump}

\begin{proof}[Proof of Corollary \ref{coro:SG-pareto-optimal}]
   The proof resembles the proof of Corollary \ref{coro:SG-log-linear-learning}. From Theorem \ref{theorem:stochastic-stable-global-optimality}, we have that at stage $h=H$, $\pi_H^\star(a|s) > 0$ if and only if there exists an action $\xi'$ such that $\gamma(a,\xi;\brac{r_{i,H}(s,\cdot)}) = \min_{a',\xi'}\gamma(a',\xi';\brac{r_{i,H}(s,\cdot)}).$
   
    Under Assumption \ref{assump:SG-interdependence}, Corollary \ref{coro:pareto-optimal} further suggests that such $a$ satisfies $a \in \argmax_{a} \sum_{i=1}^n r_{i,H}(s,a)$. This implies that at $h=H$, $\pi_H^\star$ is Pareto-optimal.
We now prove by induction. Assume that $\pi_{h+1:H}^\star$ is Pareto-optimal, we now prove that $\pi_{h:H}^\star$ is also Pareto-optimal. From the Bellman optimality condition, it suffices to show that $\pi_h^\star(a|s) > 0$ only if $a \in \argmax \sum_{i=1}^n Q_{i,h}^{\pi^\star_{h+1:H}}(s,a)$. To show this, once again we apply Theorem \ref{theorem:stochastic-stable-global-optimality} and conclude that $\pi_h^\star(a|s) > 0$ only if 
    \begin{align*}
    \textstyle \gamma(a,\xi;\brac{Q_{i,h}^{\pi_{h+1:H}^*}(s,\cdot)}) = \min_{a',\xi'}\gamma(a',\xi';\brac{Q_{i,h}^{\pi_{h+1:H}^*}(s,\cdot)}).
    \end{align*}
    and Corollary \ref{coro:log-linear-learning} implies that (Under Assumption \ref{assump:SG-interdependence}) this is equivalent to
    \vspace{-5pt}
    \begin{align*}
       \textstyle a \in \argmax_a \sum_{i=1}^n Q_{i,h}^{\pi^\star_{h+1:H}}(s,a),
    \end{align*}
    and thus the proof is completed by induction.
    
\end{proof}

\begin{proof}[Proof of Corollary \ref{coro:SG-pareto-optimal-NE}]
   The proof follows almost the same as the proof of \ref{coro:SG-pareto-optimal}. 

    
\end{proof}

\section{Proof of Lemma \ref{lemma:convergence-to-pi-eps} and \ref{lemma:pi-eps-optimality}}\label{apdx:convergence-to-pi-eps-optimality}
\begin{proof}[Proof of Lemma \ref{lemma:convergence-to-pi-eps}]
    We prove by induction. For $h=H$, the conclusion holds from classical concentration results for Ergodic Markov chains we get that the stationary distribution for $\pi\^t_H(s),\xi\^{t}_H(s)$ is $\pieps_H(\cdot,\cdot|s)$ and that $ V_{i,H}\^{t+1}(s) \!=\! \frac{\sum_{\tau=1}^{t+1} r_{i,H} (s,\pi_h\^\tau(s))}{t+1} \!\asto\! V^{\pieps_H}_{i,H}(s,a)$.
    
    Suppose that the conclusion holds for $h+1$, we now prove for $h$. Since

    $$\textstyle Q_{i,h}\^{t+1}(s.a) = r_{i,h}(s,a) + \sum_{s'}P_h(s'|s,a) V_{i,h+1}\^{t+1}(s').$$
    From from the induction hypothesis $V_{i,h+1}\^{t}\asto V_{i,h+1}^\pieps$ we have that
    \begin{align*}
      \textstyle  Q_{i,h}\^{t+1}(s.a)\asto r_{i,h}(s,a) + \sum_{s'}P_h(s'|s,a) V_{i,h+1}^\pieps(s') = Q_{i,h}^\pieps.
    \end{align*}
    Further, the learning dynamics satisfies 
        \begin{talign*}
         &\Pr\left(a_h\^{t+1}(s), \xi_h\^{t+1}(s), \brac{Q_{i,h}\^{t+1}(s,\cdot)}|a_h\^{1:t}(s), \xi_h\^{1:t}(s),\brac{Q_{i,h+1}\^{1:t+\tau}(s,\cdot)}\right)\\
         &= \Peps(a_h\^{t+1}(s),\xi_h\^{t+1}(s)|a_h\^t(s), \xi_h\^{t}(s); \brac{Q_{i,h}\^t(s)}).
    \end{talign*}
 Additionally, since $Q_{h}\^{t}\asto Q_{h}^\pieps$. Thus, we can apply Lemma \ref{lemma:auxiliary-main} and get that $a_h\^t(s), \xi_h\^{t}(s)$ converges to $\pieps_h(\cdot,\cdot|s)$ (defined as in the statement of Lemma \ref{lemma:convergence-to-pi-eps}) in distribution and that
    \begin{talign*}
        \sum_{\tau=1}^t\frac{Q^\pieps_{i,h}(s,a_h\^{\tau}(s))}{t}\asto \sum_{a}\pieps_h(a|s)Q_{i,h}^\pieps(s,a) = V^\pieps_{i,h}(s).
    \end{talign*}
    Additionally, since  $Q_{i,h}\^{t}\asto Q_{i,h}^\pieps$, we also get $V_{i,h}\^{t}(s) =  \sum_{\tau=1}^t\frac{Q\^{\tau}_{i,h}(s,a_h\^{\tau}(s))}{t}\asto V^\pieps_{i,h}(s),$ which completes the proof.
\end{proof}

\begin{proof}[Proof of Lemma \ref{lemma:pi-eps-optimality}]
We define 
\begin{talign*}
    \mu(a,\xi) = \sum_{T\in \cT(a,\xi)} \prod_{a',\xi'\to a'',\xi'' \in T}\Peps(a'',\xi''|a',\xi';\brac{Q^\pieps_{i,h}(s,\cdot)}).
\end{talign*}
It may be verified that the stationary distribution $\pieps(\cdot,\cdot|s)$ satisfies \citep{freidlin2012random}
\begin{talign*}
    \textstyle\frac{\pieps_h(a,\xi|s)}{\pieps_h(a',\xi|s)} = \frac{\mu(a,\xi)}{\mu(a',\xi')}, ~~\forall~a,\xi,a',\xi'.
\end{talign*}
Further, from Assumption \ref{assump:resistence}, $\mu(a)$ satisfies
\begin{talign*}
    \mu(a,\xi) &= \sum_{T\in \cT(a,\xi)} \prod_{a',\xi'\to a'',\xi'' \in T}\Peps(a'',\xi''|a',\xi';\brac{Q^\pieps_{i,h}(s,\cdot)})\\
    &\le \sum_{T\in \cT(a,\xi)}\prod_{a',\xi'\to a'',\xi'' \in T}C_2\epsilon^{R(a',\xi'\to a'',\xi'';\brac{Q^\pieps_{i,h}(s,\cdot)})}\\
    &\hspace{-25pt}= C_2^{(|\cA||\cE|-1)}\sum_{T\in\cT(a,\xi)}\epsilon^{R(T;\brac{Q^\pieps_{i,h}(s,\cdot)})}\le C_2^{(|\cA||\cE|-1)}|\cT(a,\xi)| {\epsilon}^{\gammaeps(a,\xi;\brac{Q^\pieps_{i,h}(s,\cdot)})}.
\end{talign*}
Similarly
\begin{talign*}
\mu(a,\xi) &\ge \sum_{T\in \cT(a,\xi)}\prod_{a',\xi'\to a'',\xi'' \in T}C_1\epsilon^{R(a',\xi'\to a'',\xi'';\brac{Q^\pieps_{i,h}(s,\cdot)})}\\
    &= C_1^{(|\cA||\cE|-1)}\sum_{T\in\cT(a,\xi)}\epsilon^{R(T;\brac{Q^\pieps_{i,h}(s,\cdot)})}\ge C_1^{(|\cA||\cE|-1)}{\epsilon}^{\gammaeps(a,\xi;\brac{Q^\pieps_{i,h}(s,\cdot)})}.
\end{talign*}
Thus we get
\begin{talign*}
  \!\!  \frac{\pieps_h(a,\xi|s)}{\pieps_h(a^*\!,\xi^*|s)} \!=\! \frac{\mu(a,\xi)}{\mu(a^*\!,\xi^*)} \!\le\!\left(\frac{C_2}{C_1}\right)^{\!(|\cA|-1)} \hspace{-25pt}\max_{a,\xi}|\cT(a,\xi)| \epsilon^{\gammaeps(a,\xi;\brac{Q^\pieps_{i,h}(s,\cdot)})\!-\!\gammaeps(a^*,\xi^*;\brac{Q^\pieps_{i,h}(s,\cdot)})}
\end{talign*}
Then setting $ a^*, \xi^* := \argmin_{a',\xi'}\gammaeps(a',\xi';\brac{Q^\pieps_{i,h}(s,\cdot)})$
completes the proof.
\end{proof}

\section{Fully sampled-based algorithm}\label{apdx:sample-based}
The result for this section relies on the following mild assumptions.
\begin{assump}\label{assump:strong-ergodicity}
    For the learning rule $\Peps$ and any set of (possibly time-varying) reward functions $\brac{r_i\^{t}}$, we have that the (possibly non-homogeneous) Markov chain given by $a\^{t+1},\xi\^{t+1} \sim \Peps(\cdot,\cdot| a\^{t},\xi\^{t}; \brac{r_i\^{t}})$ satisfies that for any $a$, $\Pr(a \textup{ is visited infinitely many times}) = 1.$
\end{assump}

Note that this assumption is a slightly stronger assumption than ergodicity because we require the reward function for the transition kernel to be time-varying. However, it is not a strong assumption as one can easily check that the learning rules in Example \ref{example:log-linear}, \ref{example:pareto-optimal}, and \ref{example:pareto-NE} all satisfy this assumption.
\begin{defi}[Reachable state]
    A state $s$ is reachable at stage $h$ is there exists a sequence of action $s_0, a_1, a_2, \dots, a_{h-1}$ such that there exists a positive probability that state $s$ is being visited at stage $h$, i.e., $\Pr(s_h = s|s_0, a_1, a_2, \dots, a_{h-1}) > 0$
\end{defi}
\begin{assump}\label{assump:reachability}
    All state $s\in \cS$ are reachable at any stage $h\in[H]$.
\end{assump}
We would also like to note that Assumption \ref{assump:reachability} is also a mild assumption. For SGs that doesn't satisfy this assumption, we can always restrict our discussion to the reachable states. 
We are now ready to state the main results for our fully sample-based algorithm (Algorithm \ref{alg:sample-based unified-framework}). The results is almost identical to Theorem \ref{theorem:stochastic-stable-global-optimality}.
\begin{theorem}\label{theorem:sample-based} 
    Suppose that the learning rule $\Peps$ considered in Algorithm \ref{alg:sample-based unified-framework} satisfies Assumption \ref{assump:strong-ergodicity} and \ref{assump:resistence} and that the SG satisfies Assumption \ref{assump:reachability}, then for $h = 1,2,\dots, H$, and every $s\in\cS$, $a\^{t}(s)$ converges to  $\pieps_h(\cdot|s)$ in distribution. Further, $\lim_{\epsilon\to 0}\pieps = \pi^*$ exists, and that $\pi_h^*(\cdot|s)$ only has support on the actions such that $\exists~\xi\in \cE$ such that $(a,\xi)$ minimizes the potential $\gamma(a,\xi;\brac{Q_{i,h}^{\pi_{h+1:H}^*}(s,\cdot)})$, i.e. $$\textstyle \gamma(a,\xi;\brac{Q_{i,h}^{\pi_{h+1:H}^*}(s,\cdot)}) = \min_{a',\xi'}\gamma(a',\xi';\brac{Q_{i,h}^{\pi_{h+1:H}^*}(s,\cdot)}).$$
\end{theorem}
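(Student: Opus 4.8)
The plan is to reuse the architecture of the proof of Theorem~\ref{theorem:stochastic-stable-global-optimality} and localize the only change introduced by Algorithm~\ref{alg:sample-based unified-framework}, namely the critic update. The entire difference between the two theorems lives in Lemma~\ref{lemma:convergence-to-pi-eps}: once I establish a sample-based analogue of it---that the stationary distribution $\pieps$ (Definition~\ref{defi:pieps-SG}) exists and that $V_{i,h}\^{t}(s)\asto V_{i,h}^\pieps(s)$ and $Q_{i,h}\^{t}(s,a)\asto Q_{i,h}^\pieps(s,a)$, with $\pieps$ given by the same backward recursion---everything downstream is verbatim. Indeed, Lemma~\ref{lemma:pi-eps-optimality} only uses Assumption~\ref{assump:resistence} and the stationary-distribution identity, not how the critic was computed, so it still bounds $\pieps_h(a,\xi|s)$ by the stochastic potential; and the $\epsilon\to0$ induction at the end of the proof of Theorem~\ref{theorem:stochastic-stable-global-optimality} then delivers the stated support characterization of $\pi^*$ unchanged. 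So I would state the theorem as an immediate consequence of this sample-based convergence lemma and spend the work on the latter.

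I would prove the convergence lemma by backward induction on $h$, mirroring Appendix~\ref{apdx:convergence-to-pi-eps}. The base case $h=H$ is identical, since $Q_{i,H}\^{t}(s,a)=r_{i,H}(s,a)$ is constant in $t$ and the actor at stage $H$ runs $\Peps$ on a fixed reward. For the inductive step, assume $V_{i,h+1}\^{t}\asto V_{i,h+1}^\pieps$. The new ingredient is that $Q_{i,h}\^{t}(s,a)$ is no longer the exact Bellman backup but a running average,
\begin{equation*}
    Q_{i,h}\^{t}(s,a)=\frac{1}{N}\sum_{j=1}^{N}\big(r_{i,h}(s,a)+V_{i,h+1}^{(\tau_j)}(\bars_{h+1}^{(\tau_j)})\big),
\end{equation*}
where $\tau_1<\tau_2<\cdots$ index the iterations at which the sampled trajectory visits $(s,a)$ at stage $h$ and $N=N_h(s,a)$. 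The first and central step is to show $N_h(s,a)\to\infty$ almost surely. This is exactly what Assumptions~\ref{assump:strong-ergodicity} and~\ref{assump:reachability} are meant to supply: reachability provides a path of positive probability arriving at $s$ at stage $h$, while strong ergodicity guarantees that, despite the time-varying effective rewards $\brac{Q_{i,h}\^{t}(s,\cdot)}$ driving the actor, each actor chain visits every action infinitely often, so that the action maps at stages $1,\dots,h$ realize a reaching configuration with $a_h\^{t}(s)=a$ at infinitely many iterations. I would formalize this with a conditional Borel--Cantelli (L\'evy) argument: conditioned on the history, whenever the current action maps realize such a configuration the trajectory reaches $(s,a)$ at stage $h$ with probability bounded below by a constant $\delta>0$ depending only on $P$ and $\rho$, and summing these conditional probabilities diverges almost surely, forcing infinitely many visits.

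The second step is a stochastic-approximation argument. Writing $X_j:=r_{i,h}(s,a)+V_{i,h+1}^{(\tau_j)}(\bars_{h+1}^{(\tau_j)})$, I would split $X_j$ into the predictable part $r_{i,h}(s,a)+\sum_{s'}P_h(s'|s,a)V_{i,h+1}^{(\tau_j)}(s')$ and a martingale difference $M_j$, the deviation of the single sampled next-state value from its conditional mean. Since rewards lie in $[0,1]$ and the horizon is finite, $V$ and hence $M_j$ are uniformly bounded, so $\tfrac{1}{N}\sum_{j\le N}M_j\to0$ almost surely by the strong law for bounded martingale differences; and the Ces\`aro average of the predictable part converges to $r_{i,h}(s,a)+\sum_{s'}P_h(s'|s,a)V_{i,h+1}^\pieps(s')=Q_{i,h}^\pieps(s,a)$ because $\tau_j\to\infty$ and $V_{i,h+1}\^{t}\asto V_{i,h+1}^\pieps$ by the induction hypothesis. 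This gives $Q_{i,h}\^{t}(s,a)\asto Q_{i,h}^\pieps(s,a)$, after which the conditional-independence structure~\eqref{eq:independence-condition} lets me invoke Lemma~\ref{lemma:auxiliary-main} precisely as in Appendix~\ref{apdx:convergence-to-pi-eps} to conclude $a_h\^{t}(s),\xi_h\^{t}(s)$ converge in distribution to $\pieps_h(\cdot,\cdot|s)$ and $V_{i,h}\^{t}(s)\asto V_{i,h}^\pieps(s)$, closing the induction.

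The step I expect to be the main obstacle is the infinite-visitation claim $N_h(s,a)\to\infty$. Unlike Algorithm~\ref{alg:unified-framework}, where the critic sees the exact transition and no sampling is required, here one must control the interaction between trajectory sampling---which must route through earlier stages to reach $s$---and the actor dynamics at those earlier stages, which are themselves coupled to the still-converging $Q$-estimates. Strong ergodicity (Assumption~\ref{assump:strong-ergodicity}) is what decouples this: because it is stated for \emph{arbitrary} time-varying reward sequences, it applies to the actor chains even though their effective rewards $\brac{Q_{i,h}\^{t}}$ drift with $t$. Carrying out the conditional Borel--Cantelli bookkeeping jointly across all $h$ stages, so that a full reaching configuration recurs infinitely often, is the delicate part of the argument; the remaining pieces then reduce cleanly to the already-established results.
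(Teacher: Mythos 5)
Your proposal takes essentially the same route as the paper: the paper likewise reduces Theorem~\ref{theorem:sample-based} to proving the sample-based analogue of Lemma~\ref{lemma:convergence-to-pi-eps} (with Lemma~\ref{lemma:pi-eps-optimality} and the $\epsilon\to 0$ induction reused verbatim), proceeds by backward induction with the identical base case $h=H$, rewrites the critic as a running average over the visit times of $(s,a)$, asserts $N_h^t(s,a)\to+\infty$ from Assumptions~\ref{assump:strong-ergodicity} and~\ref{assump:reachability}, deduces $Q_{i,h}\^{t}(s,a)\asto Q_{i,h}^\pieps(s,a)$ from ``standard concentration results,'' and then invokes Lemma~\ref{lemma:auxiliary-main} exactly as you do. Your conditional Borel--Cantelli argument for infinite visitation and your martingale-difference decomposition for the running average are correct elaborations of the two steps the paper states without detailed proof.
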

It suffices to prove that Lemma \ref{lemma:convergence-to-pi-eps} still holds for the sample-based algorithm (Algorithm \ref{alg:sample-based unified-framework}). Then the proof of Lemma \ref{lemma:convergence-to-pi-eps} and Theorem \ref{theorem:sample-based} follows the same procedure as the proof of Theorem \ref{theorem:stochastic-stable-global-optimality} (Appendix \ref{apdx:proo-of-main-theorem}). We now prove Lemma \ref{lemma:convergence-to-pi-eps}.

\begin{proof}[Proof of Lemma \ref{lemma:convergence-to-pi-eps} for sample-based version (Algorithm \ref{alg:sample-based unified-framework})]
\!Similar to the proof in Appendix \ref{apdx:convergence-to-pi-eps-optimality} for original version, we still prove by induction. Note that for $h=H$, the conclusion still holds from classical concentration results for Ergodic Markov chains we get that the stationary distribution for $\pi\^t_H(s),\xi\^{t}_H(s)$ is $\pieps_H(\cdot,\cdot|s)$ and that $ V_{i,H}\^{t+1}(s) = \frac{\sum_{\tau=1}^{t+1} r_{i,H} (s,\pi_h\^\tau(s))}{t+1} \asto V^{\pieps_H}_{i,H}(s,a)$.
    
    Suppose that the conclusion holds for $h+1$, we now prove for $h$. Notice that the Q update is the follows:
    \begin{align*}
 \textstyle Q_{i,h}\^{t\!+\!1}\!(s,\!a) \!=\! \left\{\begin{array}{l}
     \hspace{-5pt}\frac{N_h(s,a)-1}{N_h(s,a)}Q_{i,h}\^{t}(s,a) \!+\! \frac{1}{N_h\!(s\!,a)}\!\left(\!r_{i,h}(s,\!a) \!+\!  V_{i,h+1}\^{t\!+\!1}(\bars_{h+1}\^{t\!+\!1})\!\right)\!, \textup{if }s, \!a\!=\! \bars_h\^{\!t\!+\!1\!}\!\!, \bara_h\^{\!t\!+\!1\!}\!\!\!\\
     Q_{i,h}\^{t}(s,a)   \quad  \textup{otherwise}\qquad 
    \end{array}\right.
\end{align*}

Thus it can be concluded that
\begin{talign*}
    Q_{i,h}\^{t\!+\!1}(s,\!a) &\!=\! \frac{1}{N_h^t(s\!,a)} \!\sum_{i\!=\!1}^t\!\! \left(\!r_{i,h}(s,\!a) \!+\! V_{i,h+1}\^{\tau_i}\!(\bars_{h+1}\^{\tau_i}\!)\!\right)\!=\!r_{i,h}(s,\!a) \!+\! \frac{1}{N_h^t(s,\!a)} \!\sum_{i=1}^t\!\!V_{i,h+1}\^{\tau_i}(\bars_{h+1}\^{\tau_i}\!)\!\!\!\!\!
\end{talign*}
where the notation $N_h^t(s,a)$ is the count $N_h(s,a)$ in Algorithm \ref{alg:sample-based unified-framework} at iteration step $t$. The notation $\tau_i$'s are the iteration steps before iteration step $t$ such that $s,a = \bars_h\^{\tau_i},\bara_h\^{\tau_i}$.

Note that given Assumption \ref{assump:strong-ergodicity} and \ref{assump:reachability}, we have that $N_h^t(s,a) \to +\infty$ as $t\to +\infty$. Further, from induction assumption $V_{i,h+1}\^{t} \asto V_{i,h+1}^\pieps$. Thus from standard concentration results we know that
\begin{align*}
    \textstyle Q_{i,h}\^{t+1}(s,a) \asto r_{i,h}(s,a)\sum_{s'}P(s'|s,a) V_{i,h+1}^\pieps(s') = Q_{i,h}^\pieps(s,a)
\end{align*}

After proving $Q_{i,h}\^{t+1}(s,a)\asto Q_{i,h}^\pieps(s,a)$, the rest is the same as the proof of Lemma \ref{lemma:pi-eps-optimality}, where we can apply Lemma \ref{lemma:auxiliary-main} (set $X_t:= (\pi_h\^t(s), \xi_h\^{t}(s)), \!Q_t:= {Q_{i,h}\^t(s)}_{i=\!1}^n$ in the statement of Lemma \ref{lemma:auxiliary-main}) and get that $\pi_h\^t(s), \xi_h\^{t}(s)$ converges to $\pieps_h(\cdot,\cdot|s)$ in distribution and that
    \begin{align*}
     \textstyle   \sum_{\tau=1}^t\frac{Q^\pieps_{i,h}(s,\pi_h\^{\tau}(s))}{t}\asto \sum_{a}\pieps_h(a|s)Q_{i,h}^\pieps(s,a) = V^\pieps_{i,h}(s).
    \end{align*}
    Additionally, since  $Q_{i,h}\^{t}\asto Q_{i,h}^\pieps$, we also get $V_{i,h}\^{t}(s) =  \sum_{\tau=1}^t\frac{Q\^{\tau}_{i,h}(s,\pi_h\^{\tau}(s))}{t}\asto V^\pieps_{i,h}(s),$
    which completes the proof.
\end{proof}

\section{Auxiliaries}\label{apdx:auxiliaries}
\paragraph{Notations}
We first define the mixing time for the time-homogeneous and the non-time-homogeneous Markov chain. We borrow the definition from \citep{Paulin18}
\begin{defi}[Mixing time for time-homogeneous Markov chains]
    Let \\$X_1, X_2, X_3, \dots$ be a time-homogeneous Markov chain with transition kernel $P$ and stationary distribution $\pi$. Then $\tmix$, the mixing time of the chain, is defined by 
    \begin{equation}\label{eq:def-tmix}
        \begin{split}
           \textstyle d(t):= \sup_{\rho}\|P^t\rho - \pi\|_1,\quad 
            \tmix(\epsilon):=\{t: d(t)\le \epsilon\}.
        \end{split}
    \end{equation}
\end{defi}

\begin{defi}[Mixing time for Markov chains without assuming time homogeneity]
Let $X_1, X_2, X_3, \dots, X_N$ be a Markov chain with transition probabilities \\$P_1, P_2, P_3,\dots$. Let $L(X_{i+t}|X_i = x)$ be the conditional distribution of $X_{i+t}$ given $X_i = x$. Let us denote the minimal $t$ such that $L(X_{i+t}|X_i = x)$ and $L(X_{i+t}|X_i = y)$ are less than $\epsilon$ away in $\ell_1$ distance for every $1\le i\le N-t$ by $\taumix(\epsilon)$, that is, for $0 < \epsilon < 1$, let
\vspace{-5pt}
\begin{equation}\label{eq:def-tau-mix}
    \begin{split}
        \overline{d}(t)&\textstyle := \max_{1\le i\le N-t} \sup_{x,y} \|L(x_{i+t}|X_i=x),L(x_{i+t}|X_i=y)\|_1\\
        \taumix(\epsilon) &\textstyle := \min\{t\in\mathbb{N}: \overline{d}(t)\le \epsilon\}.
    \end{split}
\end{equation}
\end{defi}
\begin{lemma}\label{lemma:auxiliary-main-underlying}
    Suppose we have a random process $\{Q_t\}_{t=1}^{+\infty}$ such that
    $Q_t\asto q^\star,$ where $q^*$ is a constant vector (the random process doesn't necessarily need to be Markovian). And we have another random process $\{X_t\}_{t=1}^{+\infty}$, where $X_t$ is a random variable on domain $\cX$, which satisfies for any $\tau \ge 0$
    \begin{align}\label{eq:auxiliary-lemma-eq}
        \Pr(X_{t+1}|X_{1:t}, Q_{1:t+\tau}) = P_X(X_{t+1}|X_{t}, Q_t),
    \end{align}
     where the transition kernel $P_X(x'|x, Q)$ is continuous with respect to $Q$ for every $x', x \in\cX$. 

         Define $\pi_X$ as the stationary distribution of the transition kernel $P_{\!X}(\cdot|\cdot, \!q^*\!)$, i.e.
    $$\textstyle \pi_X(x') = \sum_{x} P_x(x'|x, q^*)\pi_X(x).$$
Then the following statements hold
    \begin{enumerate}
        \item[(1)] $\lim_{t\to+\infty}\Pr(X_t = x) = \pi_X(x).$
        \item[(2)]For any bounded function $g$, $ \frac{\sum_{\tau=1}^t g(X_\tau)}{t} \asto g^*$, where $g^*:= \bE_{X\sim \pi_X} g(X)$.
    \end{enumerate}
\begin{proof}
From the property of conditional probabilities and \eqref{eq:auxiliary-lemma-eq} we get
\begin{talign*}
    \Pr(X_{1:t\!+\!1}| Q_{1:t}) \!=\! \prod_{\tau\!=\!1}^t\!\Pr(X_{\tau\!+\!1} |X_{1:\tau}, \!Q_{1:t}) \!\Pr(X_1|Q_{1:t}) \!=\! \prod_{\tau \!=\! 1}^t \!P_X(X_{\tau+1} |X_\tau,\! Q_\tau) \!\Pr(X_1)
\end{talign*}
This means that given the event $\{Q_{t} = q_t, Q_{t-1} = q_{t-1}, \dots, Q_1 = q_1\}$, then conditional on this event, $\{X_\tau\}_{\tau=1}^t$ forms a non-homogeneous Markov chain with transition probability defined as $P_\tau(x'|x):= P_X(x'|x, q_\tau).$

Additionally, we also denote the transition kernel $P^*$ as $P^*(x'|x) = P_X(x'|x, q^*)$ and denote the $\epsilon$-mixing time of $P^*$ as $\tmix(\epsilon)$. We also denote the event $A:= \left\{\||P_t - P^*\|_1 \le \frac{\epsilon}{2\tmix\left(\frac{\epsilon}{2}\right)},~\forall t\ge T\right\}$. Since $Q_t \asto q^*$, and that $P_X(\cdot|\cdot, Q)$ is continuous with respect to $Q$, $P_t:=P_X(x'|x,Q_\tau)\asto P^*$. Thus, for every $\epsilon >0$, there exists a $T$ such that, $\Pr(A) \ge 1-\epsilon$.

    We first prove statement (2). Since conditional on the event $B:=\{Q_{t} = q_t, Q_{t-1} = q_{t-1}, \dots, Q_1 = q_1\}$, $\{X_\tau\}_{\tau=1}^t$ forms a non-homogeneous Markov chain with transition probabilities $P_\tau(x'|x):= P_X(x'|x, q_\tau)$. Applying  Corollary \ref{coro:auxiliary-mixing-time-for-non-homogeneous-Markov-chain} and Lemma \ref{lemma:concentration-non-homogeneous Markov chain}, we get that for any $t\ge T_1:= T + \tmix\left(\frac{\epsilon}{2}\right)$, 
    \begin{talign*}
        \Pr\left(\left|\frac{\sum_{\tau=T_1}^t g(X_\tau)}{t-T_1+1} - \bE\frac{\sum_{\tau=T_1}^t g(X_\tau)}{t-T_1+1}\right| \ge 2\epsilon  ~\Large|~ B\right) \le 2e^{-\frac{72\epsilon^2 t}{49M\taumix\left(\frac{1}{4}\right)}}.
    \end{talign*}
    Further, applying Corollary \ref{coro:auxiliary-convergence-of-mean} we get that if $B\subset A$, then for $t \ge T+ \frac{\epsilon}{2\tmix\left(\frac{\epsilon}{2}\right)}$
    \begin{talign*}
        \left|\bE\frac{\sum_{\tau=T_1}^t g(X_\tau)}{t-T_1+1}-g^* \right| \ge \epsilon 
    \end{talign*}
    Combining the above statement we get that for $t \ge T_1$
    \begin{talign*}
    &\quad \Pr\left(\left|\frac{\sum_{\tau=T_1}^t g(X_\tau)}{t-T_1+1} - g^*\right| \ge \epsilon  ~\Large|~ A,B\right) \\
    &
        \le \Pr\left(\left|\frac{\sum_{\tau=T_1}^t g(X_\tau)}{t-T_1+1} - \bE\frac{\sum_{\tau=T_1}^t g(X_\tau)}{t-T_1+1}\right| \ge 2\epsilon  ~\Large|~ A,B\right) \le 2e^{-\frac{72\epsilon^2 (t-T_1+1)}{49M\taumix\left(\frac{1}{4}\right)}},
    \end{talign*}
    \vspace{-10pt}
    \begin{talign*}
     \textup{and thus}\hspace{43pt} &\Pr\left(\left|\frac{\sum_{\tau=T_1}^t g(X_\tau)}{t-T_1+1} - g^*\right| \ge \epsilon  ~\Large|~ A\right)  \le 2e^{-\frac{72\epsilon^2 (t-T_1+1)}{49M\taumix\left(\frac{1}{4}\right)}}\\
        \Longrightarrow &\sum_{t\ge T_1}\Pr\left(\left|\frac{\sum_{\tau=T_1}^t g(X_\tau)}{t-T_1+1} - g^*\right| \ge \epsilon  ~\Large|~ A\right) < +\infty\\
        \Longrightarrow \textup{ there exists }& T_2 \textup{ such that }
        \Pr\left(\left|\frac{\sum_{\tau=T_1}^t g(X_\tau)}{t-T_1+1} \!-\! g^*\!\right| \!\le \!\epsilon, ~\forall  t \ge T_1 \!+\! T_2  ~\Large| A\!\right) \!\ge\! 1\!-\!\epsilon
    \end{talign*}
    Thus the following equality holds for any $\epsilon > 0$
    \begin{talign*}
        &\Pr\left(\left|\frac{\sum_{\tau=T_1}^t g(X_\tau)}{t-T_1+1} - g^*\right| \le \epsilon, ~\forall  t \ge T_1 + T_2\right)\\
        &\ge \Pr(A)\Pr\left(\left|\frac{\sum_{\tau=T_1}^t g(X_\tau)}{t-T_1+1} - g^*\right| \le \epsilon, ~\forall  t \ge T_1 + T_2   ~\Large|~ A\right) \ge (1-\epsilon)^2.
    \end{talign*}
    Note that this inequality is equivalent to $\frac{\sum_{\tau=T_1}^t g(X_\tau)}{t-T_1+1} \asto g^*,$ which is also equivalent to $\frac{\sum_{\tau=1}^t g(X_\tau)}{t} \asto g^*,$ which proves statement (2).

    We now prove statement (1). By setting $g(x) = \mathbf{1}\{x = x_0\}$ which is easier to prove given the fact from Corollary \ref{coro:auxiliary-convergence-of-mean} that under event $A$, for $t \ge T+ \frac{\epsilon}{2\tmix\left(\frac{\epsilon}{2}\right)}$
    \begin{talign*}
        |\bE g(X_t) - g^*|\le \epsilon, ~~\bE g(X_t) = \Pr(X_t = x_0),~~ g^* = \pi_X(x_0)
    \end{talign*}
    thus for $t \ge T+ \frac{\epsilon}{2\tmix\left(\frac{\epsilon}{2}\right)}$
    \begin{talign*}
        |\Pr(X_t = x_0) - \pi_X(x_0)| \le P(A) \epsilon + (1-P(A)) = \epsilon + (1-\epsilon)\epsilon \le 2\epsilon
    \end{talign*}
    Let $\epsilon\!\to\! 0$, we have that $\lim_{t\to\infty} \Pr(X_t \!=\! x_0) \!=\! \pi_X(x_0)$, which proves statement (1).
\end{proof}
\end{lemma}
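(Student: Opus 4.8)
The plan is to condition on the entire trajectory of $\{Q_t\}$, which turns $\{X_t\}$ into a non-homogeneous Markov chain with \emph{deterministic} step kernels, and then to exploit $Q_t \asto q^*$ together with continuity of $P_X$ to argue that this chain asymptotically behaves like the homogeneous chain driven by $P^* := P_X(\cdot|\cdot, q^*)$. Concretely, I would first establish the conditional Markov structure: applying the chain rule and the conditional-independence hypothesis \eqref{eq:auxiliary-lemma-eq}, the joint law of $X_{1:t+1}$ given a realized path $\{Q_\tau = q_\tau\}$ factorizes as $\prod_{\tau=1}^t P_X(X_{\tau+1}|X_\tau, q_\tau)\,\Pr(X_1)$, so conditionally $\{X_\tau\}$ is exactly a non-homogeneous Markov chain with step-$\tau$ kernel $P_\tau(x'|x) = P_X(x'|x, q_\tau)$. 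Since $Q_t \asto q^*$ and $P_X(x'|x,\cdot)$ is continuous for each $(x',x)$, the driving kernels converge almost surely, $P_t \asto P^*$.

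Next I would isolate a high-probability good event. For a target accuracy $\epsilon$, choose the perturbation scale $\tfrac{\epsilon}{2\tmix(\epsilon/2)}$ relative to the mixing time of $P^*$ and set $A := \{\|P_t - P^*\|_1 \le \tfrac{\epsilon}{2\tmix(\epsilon/2)} \text{ for all } t \ge T\}$; by almost-sure kernel convergence $\Pr(A) \ge 1-\epsilon$ once $T$ is large. On $A$ the non-homogeneous kernels differ from $P^*$ by at most this scale past time $T$, so a perturbation-of-mixing argument — propagating the per-step error across one mixing window of length $\tmix(\epsilon/2)$ while $P^*$ contracts to $\pi_X$ — shows that for all $t \ge T + \tmix(\epsilon/2)$ the marginal of $X_t$ lies within $O(\epsilon)$ of $\pi_X$. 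This directly yields statement (1): taking $g = \mathbf{1}\{x=x_0\}$, on $A$ one gets $|\bE[g(X_t)] - \pi_X(x_0)| \le O(\epsilon)$, and combining with $\Pr(A)\ge 1-\epsilon$ gives $|\Pr(X_t=x_0)-\pi_X(x_0)| \le O(\epsilon)$ for large $t$; letting $\epsilon \to 0$ closes (1).

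For statement (2) I would split the time average into its conditional mean and a fluctuation. On $A$ the conditional mean $\bE\!\big[\tfrac{1}{t}\sum_{\tau} g(X_\tau)\big]$ converges to $g^* = \bE_{\pi_X} g$ by Cesàro-averaging the marginal estimate from (1). The fluctuation is controlled by a McDiarmid/Paulin-type concentration inequality for additive functionals of non-homogeneous chains, phrased through $\taumix$ (the concentration lemma invoked later in the appendix), which supplies an exponential tail of order $\exp(-c\,\epsilon^2 t / \taumix(1/4))$. Summing these tails over $t$ gives a convergent series, so Borel–Cantelli upgrades the in-probability bound to almost-sure convergence conditional on $A$; since $\Pr(A) \ge 1-\epsilon$ for every $\epsilon$, almost-sure convergence holds unconditionally.

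The main obstacle is the entanglement between $Q$ and $X$: the kernels driving $X$ are themselves random and correlated with the $X$-path, so they cannot be treated as fixed. The device that resolves this is conditioning on the full $Q$-trajectory — legitimate precisely because \eqref{eq:auxiliary-lemma-eq} closes the transition under conditioning on future $Q$'s — which renders the kernels deterministic, with the remaining path-randomness absorbed into the good event $A$. The delicate quantitative point is the perturbation bound showing that a non-homogeneous chain whose kernels converge to $P^*$ inherits $P^*$'s stationary behavior uniformly in $t$; ensuring the error vanishes as $\epsilon \to 0$ rather than accumulating over the infinitely many steps is where the mixing-time bookkeeping must be done carefully.
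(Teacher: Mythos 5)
Your proposal is correct and follows essentially the same route as the paper's own proof: conditioning on the realized $Q$-trajectory to obtain a non-homogeneous Markov chain, introducing the high-probability event $A$ with perturbation scale $\frac{\epsilon}{2\tmix\left(\frac{\epsilon}{2}\right)}$, using a perturbation-of-mixing bound for the marginals and conditional means, and combining Paulin-type concentration with a Borel--Cantelli argument to upgrade to almost-sure convergence before removing the conditioning on $A$. The only differences are cosmetic (order of proving (1) and (2), and phrasing the conditional-mean step as Ces\`aro averaging rather than invoking the corollary directly), so no further commentary is needed.
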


\begin{lemma}\label{lemma:auxiliary-main}
     For random variables $Q_{h}\^{t}(s,\cdot)$ that almost surely converges to \\$Q_{h}^{\pieps_{h+1:H}}(s,\cdot)$, we can define the following random processes
     \begin{align}
   a_h\^{t+1}(s),\xi_h\^{t+1}(s) &\sim\Peps(\cdot|a_{h}\^{t}(s),\xi_h\^{t}(s); \brac{Q_{i,h}\^{t}(s,\cdot)}) \label{eq:algorithm-block-1}\\
   {a'}_{h}\^{t+1}(s),{\xi'}_h\^{t+1}(s) &\sim\Peps(\cdot|{a'}_{h}\^{t}(s),{\xi'}_h\^{t}(s); \brac{Q_{i,h}^{\pieps_{h+1:H}}(s,\cdot)}) \label{eq:auxiliary-block-1}
\end{align}
Define the stationary distribution of \eqref{eq:algorithm-block-1} and \eqref{eq:auxiliary-block-1} as $\pieps_h(a,\xi|s),{\pieps'}_h(a,\xi|s)$, then we have that
\begin{enumerate}
    \item $\pieps_h(a,\xi|s)={\pieps'}_h(a,\xi|s)$
    \item For any bounded function, e.g. $ Q_{i,h}^{\pieps_{h+1:H}}(s,\cdot)$, we have that
    \vspace{-5pt}
    \begin{align*}
    \textstyle    \frac{1}{t}\sum_{\tau=1}^t Q_{i,h}^{\pieps_{h+1:H}}(s, a_h\^{t}(s)) \asto  \bE_{a\sim {\pieps}'_h} Q_{i,h}^{\pieps_{h+1:H}}(s, a)
    \end{align*}
\end{enumerate}

\begin{proof}

The lemma is a direct corollary of Lemma \ref{lemma:auxiliary-main-underlying}. (By setting\\
{\small $X_t:= (a_h\^t(s), \xi_h\^{t}(s)), Q_t:= \brac{Q_{i,h}\^t(s,\cdot)}, q^\star := \brac{Q_{i,h}^{\pieps_{h+1:H}}(s,\cdot)}, g := Q_{i,h}^{\pieps_{h+1:H}}(s,\cdot)$} \\in the statement of Lemma \ref{lemma:auxiliary-main-underlying})  
\end{proof}
\end{lemma}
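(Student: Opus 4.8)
The plan is to obtain Lemma~\ref{lemma:auxiliary-main} as an immediate consequence of the general ergodicity result Lemma~\ref{lemma:auxiliary-main-underlying}, by matching the abstract objects there with the concrete quantities produced by Algorithm~\ref{alg:unified-framework}. Concretely, I would set $X_t := (a_h^{(t)}(s), \xi_h^{(t)}(s))$ on the domain $\cX = \cA \times \cE$, take the driving sequence $Q_t := \{Q_{i,h}^{(t)}(s,\cdot)\}_{i=1}^n$ with limit $q^\star := \{Q_{i,h}^{\pieps_{h+1:H}}(s,\cdot)\}_{i=1}^n$, and let the parametrized kernel be $P_X(\cdot \mid \cdot, Q) := \Peps(\cdot \mid \cdot; Q)$, i.e.\ the normal-form learning rule fed with the reward table $Q$. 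Under these identifications the auxiliary process \eqref{eq:auxiliary-block-1} is exactly the homogeneous chain with kernel $P_X(\cdot\mid\cdot, q^\star)$, whose stationary law is ${\pieps'}_h$, while the process \eqref{eq:algorithm-block-1} is the $Q$-driven chain $\{X_t\}$.

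Next I would verify the three hypotheses of Lemma~\ref{lemma:auxiliary-main-underlying}. First, $Q_t \asto q^\star$ is precisely the standing assumption of Lemma~\ref{lemma:auxiliary-main}. Second, I would establish the conditional-independence relation \eqref{eq:auxiliary-lemma-eq}: that for every $\tau \geq 0$,
\[
\Pr\bigl(X_{t+1} \mid X_{1:t}, Q_{1:t+\tau}\bigr) = P_X(X_{t+1}\mid X_t, Q_t).
\]
This is the independence property \eqref{eq:independence-condition}, and it rests on the structural fact that the stage-$h$ estimates $Q_{i,h}^{(\tau)}(s,\cdot)$ are, at every iteration $\tau$, produced by the critic purely from the value functions $V_{i,h+1}^{(\tau)}$ at stages $h+1,\dots,H$, and thus carry no dependence on the stage-$h$ actions $\{a_h^{(\tau)}(s)\}$. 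Consequently, conditioning additionally on the future driving variables $Q_{t+1},\dots,Q_{t+\tau}$ does not alter the one-step law of $X_{t+1}$, which by the actor update (Step~1) equals $\Peps(\cdot\mid X_t; Q_t)$. Third, I would check that $P_X(x'\mid x, Q) = \Peps(x'\mid x; Q)$ is continuous in $Q$ for every fixed $x,x'$; this is a direct computation for the rules of interest, since for log-linear learning \eqref{eq:log-linear} the entries are smooth in $Q$, and the mood-based rules of Examples~\ref{example:pareto-optimal}--\ref{example:pareto-NE} depend continuously on the payoffs once $\epsilon>0$ is fixed.

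With the hypotheses verified, the two claims follow directly. Conclusion~(1) of Lemma~\ref{lemma:auxiliary-main-underlying} gives that $X_t$ converges in distribution to $\pi_X$, the stationary law of $P_X(\cdot\mid\cdot, q^\star) = \Peps(\cdot\mid\cdot; q^\star)$; by definition this is ${\pieps'}_h$, while the limiting law of $X_t$ is by definition $\pieps_h$, so $\pieps_h(a,\xi\mid s) = {\pieps'}_h(a,\xi\mid s)$, establishing claim~1. For claim~2 I apply conclusion~(2) with the bounded test function $g := Q_{i,h}^{\pieps_{h+1:H}}(s,\cdot)$ (bounded because rewards lie in $[0,1]$ and the horizon is finite), which yields $\tfrac1t\sum_{\tau=1}^t Q_{i,h}^{\pieps_{h+1:H}}(s, a_h^{(\tau)}(s)) \asto \bE_{a\sim\pi_X} Q_{i,h}^{\pieps_{h+1:H}}(s,a) = \bE_{a\sim{\pieps'}_h} Q_{i,h}^{\pieps_{h+1:H}}(s,a)$.

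The main obstacle I anticipate is the rigorous justification of the conditional-independence relation \eqref{eq:auxiliary-lemma-eq}, specifically the insensitivity of the one-step law to the \emph{future} driving variables $Q_{t+1},\dots,Q_{t+\tau}$: one must argue carefully, using the top-down ($h = H,\dots,1$) update order of Algorithm~\ref{alg:unified-framework}, that the filtration generated by the stage-$h$ actions is conditionally independent of the stage-$h$ $Q$-estimates given the current chain state. A secondary point needing care is the continuity of the kernel for the more elaborate mood-based rules, where one should confirm that the $\epsilon$-smoothing removes any discontinuities at payoff ties.
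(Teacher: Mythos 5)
Your proposal is correct and takes essentially the same route as the paper: the paper likewise proves this lemma by directly instantiating Lemma~\ref{lemma:auxiliary-main-underlying} with $X_t := (a_h\^t(s), \xi_h\^{t}(s))$, $Q_t := \brac{Q_{i,h}\^t(s,\cdot)}$, $q^\star := \brac{Q_{i,h}^{\pieps_{h+1:H}}(s,\cdot)}$, and $g := Q_{i,h}^{\pieps_{h+1:H}}(s,\cdot)$. Your additional verification of the hypotheses (the conditional-independence relation, which the paper records as Eq.~\eqref{eq:independence-condition} in its proof of Lemma~\ref{lemma:convergence-to-pi-eps}, and the continuity of $\Peps$ in the payoff table) makes explicit what the paper leaves implicit, but does not change the argument.
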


\begin{lemma}\label{lemma:auxiliary-non-homogeneous Markov chain}
    Suppose there's a probability transition kernel $P^*$ that defines an Ergodic Markov chain with mixing time $\tmix(\epsilon) < \infty$, where $\tmix(\epsilon)$ is defined as in \eqref{eq:def-tmix}. We also denote the stationary distribution of $P^*$ as $\pi^*$, i.e. $P^*\pi^* = \pi^*$. Suppose there's a sequence of probability transition kernel $\{P_t\}_{t=1}^{+\infty}$ satisfies $\|P_t - P^*\|_1 \le \frac{\epsilon}{2\tmix\left(\frac{\epsilon}{2}\right)}$, then for any $t \ge \tmix\left(\frac{\epsilon}{2}\right)$ and any initial distribution $\rho$, we have that
    \begin{talign*}
        \|P_t P_{t-1}\cdots P_1 \rho - \pi^*\|_1 \le \epsilon.
    \end{talign*}
    \vspace{-10pt}
    \begin{proof}
        Let $\pi_t:= P_t P_{t-1}\cdots P_1 \rho $, then we get
        \begin{talign*}
            \pi_{t+1} &= P_t\pi_t = P^*\pi_t - (P^* - P_t)\pi_t\\
            \Longrightarrow~~ \pi_{t+1} - \pi^* &= P^*(\pi_t - \pi^*)- (P^* - P_t)\pi_t\\
            &= (P^*)^2(\pi_{t-1}-\pi^*) - P^*(P^*-P_{t-1})\pi_{t-1} - (P^*-P_t)\pi_t\\
            &= (P^*)^{\tmix\left(\frac{\epsilon}{2}\right)}\pi_{t+1-\tmix\left(\frac{\epsilon}{2}\right)} - \pi^* + \sum_{\tau= t-\tmix\left(\frac{\epsilon}{2}\right)}^t (P^*)^{t-\tau}(P^* - P_\tau) \pi_\tau\\
        \end{talign*}
        Since $P^*$ defines a Markov chain with mixing time $\tmix$, from the definition of mixing time we have \vspace{-10pt}
        \begin{align*}
           \textstyle \|(P^*)^{\tmix\left(\frac{\epsilon}{2}\right)}\pi_{t+1-\tmix\left(\frac{\epsilon}{2}\right)} - \pi^*\|_1 \le \frac{\epsilon}{2}.\vspace{-10pt}
        \end{align*}
        \begin{align*}
          \textstyle \textup{Further, }~~~~\| \sum_{\tau= t-\tmix\left(\frac{\epsilon}{2}\right)}^t (P^*)^{t-\tau}(P^* - P_\tau) \pi_\tau \|_1 \le \sum_{\tau= t-\tmix\left(\frac{\epsilon}{2}\right)}^t \|(P^* - P_\tau)\|_1 \le \frac{\epsilon}{2}
        \end{align*}
        Combining the inequality we get $\|\pi_{t+1} -\pi^*\|_1 \!\le\! \epsilon$,
        which completes the proof.
    \end{proof}
\end{lemma}
\begin{coro}\label{coro:auxiliary-convergence-of-mean}
    For a bounded function $g$ and a non-homogeneous Markov chain with transition kernel $\{P_t\}_{t=1}^\infty$ defined in Lemma \ref{lemma:auxiliary-non-homogeneous Markov chain} the random variable $X_1, X_2, \dots$ generated from the Markov chain satisfies
    \begin{align*}
        \textstyle|\bE g(X_t) - g^*|\le M\epsilon, ~~\forall ~ t\ge \tmix\left(\frac{\epsilon}{2}\right).
    \end{align*}
    where $M$ is defined as $M:=\sup_{x,y\in \cX} g(x) - g(y)$ and $g^* = \bE_{x\sim \pi^*}g(x)$
\end{coro}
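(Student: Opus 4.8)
The plan is to reduce the corollary to Lemma \ref{lemma:auxiliary-non-homogeneous Markov chain} through a single elementary observation: $\bE g(X_t)$ is a linear functional of the law of $X_t$, and Lemma \ref{lemma:auxiliary-non-homogeneous Markov chain} already controls exactly that law in $\ell_1$ distance. First I would let $\pi_t$ denote the distribution of $X_t$, so that $\pi_t = P_{t-1}P_{t-2}\cdots P_1\rho$ when $X_1\sim\rho$ (up to the indexing convention for the mixing time). By the hypothesis of Lemma \ref{lemma:auxiliary-non-homogeneous Markov chain}, namely $\|P_\tau - P^*\|_1 \le \frac{\epsilon}{2\tmix\left(\frac{\epsilon}{2}\right)}$ for all $\tau$, that lemma gives $\|\pi_t - \pi^*\|_1 \le \epsilon$ for every $t \ge \tmix\left(\frac{\epsilon}{2}\right)$.

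Next I would rewrite both quantities as inner products against $g$: $\bE g(X_t) = \sum_{x\in\cX} g(x)\pi_t(x)$ and $g^* = \sum_{x\in\cX} g(x)\pi^*(x)$, so that $\bE g(X_t) - g^* = \sum_{x\in\cX} g(x)\bigl(\pi_t(x)-\pi^*(x)\bigr)$. The key step is a centering trick that upgrades the naive bound $\|g\|_\infty\,\|\pi_t-\pi^*\|_1$ into one governed by the oscillation $M = \sup_{x,y\in\cX}\bigl(g(x)-g(y)\bigr)$. Since $\pi_t$ and $\pi^*$ are both probability distributions, $\sum_{x\in\cX}\bigl(\pi_t(x)-\pi^*(x)\bigr) = 0$, so for any constant $c$ we may replace $g$ by $g-c$ without changing the sum. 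Choosing $c = \tfrac{1}{2}\bigl(\sup_x g(x) + \inf_x g(x)\bigr)$ yields $|g(x)-c|\le M/2$ for all $x$, and Hölder's inequality then gives
\[
|\bE g(X_t) - g^*| = \Bigl|\sum_{x\in\cX}(g(x)-c)\bigl(\pi_t(x)-\pi^*(x)\bigr)\Bigr| \le \frac{M}{2}\,\|\pi_t-\pi^*\|_1 \le \frac{M}{2}\,\epsilon \le M\epsilon,
\]
valid for all $t\ge\tmix\left(\frac{\epsilon}{2}\right)$, which is the claim (in fact with the slightly sharper constant $M/2$).

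There is no genuine obstacle here: the corollary is essentially a restatement of Lemma \ref{lemma:auxiliary-non-homogeneous Markov chain} in terms of test functions. The only points requiring a little care are (i) the off-by-one in the indexing of $\pi_t$ against the product $P_t\cdots P_1\rho$, which is absorbed into the mixing-time convention, and (ii) remembering to center $g$ so that the bound scales with the oscillation $M$ rather than with $\|g\|_\infty$; without this centering one recovers only $|\bE g(X_t)-g^*|\le\|g\|_\infty\epsilon$, which does not match the stated inequality when $g$ is bounded away from zero (for instance, a constant $g$ has $M=0$ yet nonzero $\|g\|_\infty$).
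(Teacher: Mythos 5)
Your proof is correct and follows the route the paper intends: Corollary \ref{coro:auxiliary-convergence-of-mean} is stated without proof precisely because it is the immediate test-function consequence of Lemma \ref{lemma:auxiliary-non-homogeneous Markov chain}, obtained by pairing $g$ against the $\ell_1$ bound $\|\pi_t-\pi^*\|_1\le\epsilon$. Your centering at the midrange is the right touch to make the bound scale with the oscillation $M$ (in fact yielding the sharper constant $M/2$), and the off-by-one in indexing that you flag is a convention issue already present in the paper's statement.
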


\begin{coro}\label{coro:auxiliary-mixing-time-for-non-homogeneous-Markov-chain}
    The mixing time $\taumix(\epsilon)$ for the non-homogeneous Markov chain with transition kernel $\{P_t\}_{t=1}^\infty$ defined in Lemma \ref{lemma:auxiliary-non-homogeneous Markov chain} is finite for every $\epsilon > 0$. Specifically, we have $\taumix(2\epsilon) \le \tmix\left(\frac{\epsilon}{2}\right).$
\end{coro}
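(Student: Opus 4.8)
The plan is to reduce the non-homogeneous mixing bound to the single-distribution convergence already established in Lemma \ref{lemma:auxiliary-non-homogeneous Markov chain}, and then close the gap with a triangle inequality. Recall that $\taumix(2\epsilon)$ is controlled by $\overline{d}(t) = \max_i \sup_{x,y}\|L(X_{i+t}|X_i=x) - L(X_{i+t}|X_i=y)\|_1$, so it suffices to prove that $\overline{d}(t) \le 2\epsilon$ for every $t \ge \tmix\!\left(\frac{\epsilon}{2}\right)$; the claimed inequality then follows immediately from the definition $\taumix(2\epsilon) = \min\{t : \overline{d}(t)\le 2\epsilon\}$.

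First I would observe that each conditional law is the image of a point mass under a product of consecutive kernels, namely $L(X_{i+t}|X_i=x) = P_{i+t-1}\cdots P_{i+1}P_i\,\delta_x$. The crucial point is that the hypothesis $\|P_t - P^*\|_1 \le \frac{\epsilon}{2\tmix(\epsilon/2)}$ holds \emph{uniformly} in $t$, so it also holds along any time shift. Reindexing the sequence by $\tilde P_k := P_{i+k-1}$, the shifted family $\{\tilde P_k\}_{k\ge 1}$ still satisfies the hypothesis of Lemma \ref{lemma:auxiliary-non-homogeneous Markov chain}, and applying that lemma with initial distribution $\rho = \delta_x$ yields, for every $t \ge \tmix\!\left(\frac{\epsilon}{2}\right)$,
$$\|L(X_{i+t}|X_i=x) - \pi^*\|_1 = \|\tilde P_t\cdots \tilde P_1\,\delta_x - \pi^*\|_1 \le \epsilon,$$
uniformly over the starting index $i$ and the starting state $x$.

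With this one-sided bound in hand, the triangle inequality finishes the argument: for any $x,y$ and any $t \ge \tmix\!\left(\frac{\epsilon}{2}\right)$,
$$\|L(X_{i+t}|X_i=x) - L(X_{i+t}|X_i=y)\|_1 \le \|L(X_{i+t}|X_i=x) - \pi^*\|_1 + \|\pi^* - L(X_{i+t}|X_i=y)\|_1 \le 2\epsilon.$$
Taking the supremum over $x,y$ and the maximum over $i$ gives $\overline{d}(t)\le 2\epsilon$ for all $t \ge \tmix\!\left(\frac{\epsilon}{2}\right)$, hence $\taumix(2\epsilon) \le \tmix\!\left(\frac{\epsilon}{2}\right)$.

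The argument is essentially bookkeeping once Lemma \ref{lemma:auxiliary-non-homogeneous Markov chain} is available, so I do not expect a genuinely hard step. The one subtlety worth flagging is the reindexing: Lemma \ref{lemma:auxiliary-non-homogeneous Markov chain} is stated only for products beginning at time $1$, whereas $\overline{d}(t)$ ranges over all starting indices $i$. The proof therefore hinges on the perturbation bound being uniform in $t$, which is exactly what makes the shifted family $\{\tilde P_k\}$ admissible in the lemma; I would state this explicitly so that the lemma is not seen to be applied out of its stated scope.
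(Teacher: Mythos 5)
Your proof is correct and is exactly the argument the paper leaves implicit: the corollary is stated without proof, and the intended reasoning is precisely to apply Lemma \ref{lemma:auxiliary-non-homogeneous Markov chain} (whose perturbation hypothesis is uniform in $t$, hence shift-invariant) with $\rho=\delta_x$ to bound each conditional law's distance to $\pi^*$ by $\epsilon$, then conclude $\overline{d}(t)\le 2\epsilon$ by the triangle inequality. Your explicit handling of the reindexing from starting index $i$ is a worthwhile clarification, since the lemma as stated only covers products beginning at time $1$.
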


\begin{lemma}[Corollary 2.11 in \cite{Paulin18}]\label{lemma:concentration-non-homogeneous Markov chain}
    For any potentially non-homogeneous discrete time Markov chain with finite mixing time $\taumix(\epsilon) < \infty$, the random variable $X_1, X_2, \dots$ generated from the Markov chain satisfies
    \vspace{-5pt}
    \begin{align*}
       \textstyle  \Pr\left(\left|\frac{\sum_{\tau=1}^t g(X_\tau)}{t} - \bE\frac{\sum_{\tau=1}^t g(X_\tau)}{t}\right| \ge \epsilon \right) \le 2e^{-\frac{18\epsilon^2 t}{49M\taumix\left(\frac{1}{4}\right)}},
    \end{align*}
    where $g$ is a bounded function and $M$ is defined as $M:=\sup_{x,y\in \cX} g(x) - g(y)$.
\end{lemma}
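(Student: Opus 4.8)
The final statement is a Hoeffding-type concentration bound for the empirical mean $\frac1t\sum_{\tau=1}^t g(X_\tau)$ along a possibly non-homogeneous Markov chain, with the mixing time $\taumix$ controlling the variance proxy; it coincides with the cited result of \citet{Paulin18}, so the plan is to reconstruct that argument. The overall strategy is to convert the dependent sum into an essentially independent one and apply a bounded-differences inequality. A key structural point is that the hypothesis is phrased entirely through $\overline d(t)$, which is defined as a maximum over the start index $i$; consequently every coupling or mixing estimate I use holds uniformly in time, and the lack of time-homogeneity never enters the bookkeeping.

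First I would write $S_t:=\sum_{\tau=1}^t g(X_\tau)$ and study it through the Doob martingale $M_k:=\E[S_t\mid X_{1:k}]$, so that $S_t-\E S_t=\sum_k (M_k-M_{k-1})$. The increment $M_k-M_{k-1}$ measures how revealing $X_k$ shifts the prediction of the full sum: it splits into the direct term $g(X_k)$, whose oscillation is at most $M$, and the future terms $\E[g(X_{k+s})\mid X_{1:k}]-\E[g(X_{k+s})\mid X_{1:k-1}]$ for $s\ge1$. Using the Markov property I would rewrite each future term via the $s$-step conditional laws started from $X_k$ versus from a fresh draw, and bound their $\ell_1$ distance by $\overline d(s)$ directly from its definition; the total future contribution is then at most $M\sum_{s\ge1}\overline d(s)$. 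Submultiplicativity of the Dobrushin ergodic coefficient, $\overline d(s+s')\le\overline d(s)\,\overline d(s')$, together with $\overline d(\taumix(1/4))\le 1/4$, gives geometric decay $\overline d(k\,\taumix(1/4))\le 4^{-k}$, so that $\sum_{s\ge1}\overline d(s)$ is of order $\taumix(1/4)$.

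Feeding a per-increment bound into Azuma--Hoeffding already yields a sub-Gaussian tail of the claimed shape, so the conceptual work is done at this point; the delicate part, which I expect to be the main obstacle, is the \emph{power} of $\taumix$ in the denominator. The naive martingale-increment estimate charges every coordinate the full future mass, producing a variance proxy that scales like $\taumix(1/4)^2$, whereas the statement is linear in $\taumix(1/4)$. To recover the sharp linear dependence (and the explicit constant) I would replace the crude increment bound by a blocking argument: partition $[t]$ into consecutive blocks of length $\asymp\taumix(1/4)$, note that by $\overline d(\taumix(1/4))\le 1/4$ every block is nearly independent of the blocks preceding it, and apply Hoeffding across the (alternating, hence near-independent) blocks, each of range at most $M\,\taumix(1/4)$. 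This is precisely the role played by the Marton coupling / mixing-matrix operator-norm estimate in \citet{Paulin18}; carrying it out, rather than the single-coordinate martingale bound, is what keeps the exponent linear in $\taumix(1/4)$ and fixes the constant $\tfrac{18}{49}$ in the statement.
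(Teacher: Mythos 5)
This lemma is imported verbatim from \citet{Paulin18} (Corollary 2.11); the paper itself gives no proof, so the only benchmark is Paulin's own argument. Your reconstruction matches that argument in substance — you correctly diagnose that the naive Doob-martingale/Azuma route loses a factor of $\taumix\left(\frac{1}{4}\right)$ (giving quadratic rather than linear dependence), and that the fix is the blocking/Marton-coupling device of grouping the chain into blocks of length $\asymp\taumix\left(\frac{1}{4}\right)$ and applying the mixing-matrix operator-norm bound to the block chain, which is exactly how Paulin obtains the linear dependence and the constant $\tfrac{18}{49}$ — so your proposal is correct and essentially the same as the cited source's proof.
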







\section{Another Numerical Example}\label{apdx:another-numerical-example}

To have a better understanding of Theorem \ref{theorem:stochastic-stable-global-optimality}, we provide an additional numerical example. The stochastic game that we consider is an identical interest, two-player, two-action game with two stages $h=1,2$. The game captures the process of two players collaboratively digging treasure at two different locations $0$ and $1$ (Figure \ref{fig:numerics} (left)). Each location has a shallow level and a deep level, and it requires the players to collaboratively dig at the same location to make progress. Location $0$ has reward $1$ at the shallow level and $0.5$ at the deep level; location $1$ has reward $0$ at the shallow level and $2$ at the deep level. The process can be summarized as a stochastic game as in Fig \ref{fig:identical-interest-setup}. For $h=1$, there's only one state and thus the stage reward is given by a 2 by 2 reward matrix with nonzero reward of value 1 only when $a = (0,0)$. The transition to the second stage follows the following transition rule, if both players 1 and 2 choose to dig at location $0$, i.e. $a_1 = a_2 = 0$, then the game will transit to state $s=A$, where they arrive at the deep level for location 0 and shallow level at location 1. Similarly, if $a_1 = a_2 = 1$ then the game will transit to $s=B$, where they arrive at the deep level for location $1$. Otherwise, if the players fail to agree on a location, i.e., $a=(0,1) \textup{or} (1,0)$ they make no progress ($s=O$) and the reward will remain the same as stage $h=1$ in this case.

\begin{figure}[tb]
\centering
\begin{small}
\setlength{\tabcolsep}{4pt}
\vspace{4pt}
\begin{center}
{\small 
\begin{tabular}{|c|c|c|}
\hline 
  & $\displaystyle a_{2} \ =\ 0$ & $\displaystyle a_{2} =1$ \\
\hline 
 $\displaystyle a_{1} =0$ & $\displaystyle 1$ & $\displaystyle 0$ \\
\hline 
 $\displaystyle a_{1} =1$ & $\displaystyle 0$ & $\displaystyle 0$ \\
 \hline
\end{tabular}}

\vspace{2pt}
{\small $\displaystyle h=1$}
\end{center}
\vspace{4pt}

{\small 

\tikzset{every picture/.style={line width=0.75pt}} 
\begin{center}
    
\begin{tikzpicture}[x=0.75pt,y=0.75pt,yscale=-0.7,xscale=0.6]

\draw    (353,2.8) -- (137.93,59.49) ;
\draw [shift={(136,60)}, rotate = 345.23] [color={rgb, 255:red, 0; green, 0; blue, 0 }  ][line width=0.75]    (10.93,-3.29) .. controls (6.95,-1.4) and (3.31,-0.3) .. (0,0) .. controls (3.31,0.3) and (6.95,1.4) .. (10.93,3.29)   ;
\draw    (353,2.8) -- (353,59) ;
\draw [shift={(353,61)}, rotate = 270] [color={rgb, 255:red, 0; green, 0; blue, 0 }  ][line width=0.75]    (10.93,-3.29) .. controls (6.95,-1.4) and (3.31,-0.3) .. (0,0) .. controls (3.31,0.3) and (6.95,1.4) .. (10.93,3.29)   ;
\draw    (353,2.8) -- (557.08,64.22) ;
\draw [shift={(559,64.8)}, rotate = 196.75] [color={rgb, 255:red, 0; green, 0; blue, 0 }  ][line width=0.75]    (10.93,-3.29) .. controls (6.95,-1.4) and (3.31,-0.3) .. (0,0) .. controls (3.31,0.3) and (6.95,1.4) .. (10.93,3.29)   ;

\draw (100,23) node [anchor=north west][inner sep=0.75pt]    {$a_{1} =a_{2} =0$};
\draw (485,23) node [anchor=north west][inner sep=0.75pt]    {$a_{1} =a_{2} =1$};
\draw (358,23) node [anchor=north west][inner sep=0.75pt]    {Else};
\end{tikzpicture}
\end{center}
}
\vspace{-2pt}
\begin{center}
{\small $\displaystyle  \begin{array}{ccc}
s=A   \ \ \ \ \ \ \ \ \ \ \ \ \ \ \ \ \ \ \ \ \ \ \ \ \ \ \ \ s=O\ \ \ \ \ \ \ \ \ \ \ \ \ \ \ \ \ \ \ \ \ \ \ \ \ \ \ \ s =B 
\end{array}$}

{\small 
\begin{tabular}{|c|c|c|}
\hline 
  & $\displaystyle a_{2}  = 0$ & $\displaystyle a_{2} =1$ \\
\hline 
 $\displaystyle a_{1} =0$ & $\displaystyle 0.5$ & $\displaystyle 0$ \\
\hline 
 $\displaystyle a_{1} =1$ & $\displaystyle 0$ & $\displaystyle 0$ \\
 \hline
\end{tabular} \ \ \ 
\begin{tabular}{|c|c|c|}
\hline 
  & $\displaystyle a_{2}  = 0$ & $\displaystyle a_{2} =1$ \\
\hline 
 $\displaystyle a_{1} =0$ & 1 & $\displaystyle 0$ \\
\hline 
 $\displaystyle a_{1} =1$ & $\displaystyle 0$ & $\displaystyle 0$ \\
 \hline
\end{tabular} \ \ \
\begin{tabular}{|c|c|c|}
\hline 
  & $\displaystyle a_{2} \ =\ 0$ & $\displaystyle a_{2} =1$ \\
\hline 
 $\displaystyle a_{1} =0$ & $\displaystyle 1$ & $\displaystyle 0$ \\
\hline 
 $\displaystyle a_{1} =1$ & $\displaystyle 0$ & $\displaystyle 2$ \\
 \hline
\end{tabular}}

\vspace{2pt}
{\small $\displaystyle h=2$}
\end{center}
\end{small}
\caption{Game schematic (states, transitions, payoffs).}
\label{fig:identical-interest-setup}
\end{figure}

In this game, there are two strict NEs, agreeing on location $0$, i.e. both players choosing $a_{i,h} = 0$ for $h=1,2$, or agreeing on location $1$, where $a_{i,h}=1$. Note that the second NE is also the global optimal policy that gives a total reward $2$ while the first only gives reward $1.5$. Theorem \ref{theorem:stochastic-stable-global-optimality} suggests that for log-linear learning, the stochastically stable policy is the global optimal policy, thus it should converge to the second instead of the first NE. Our numerical result (Figure \ref{fig:numerics} (right)) also corroborates the theoretical finding, where we find that the empirical frequency of selecting location $1$ will converge to a value close to $1$.

\begin{figure}
  \centering
\includegraphics[width=0.35\textwidth]{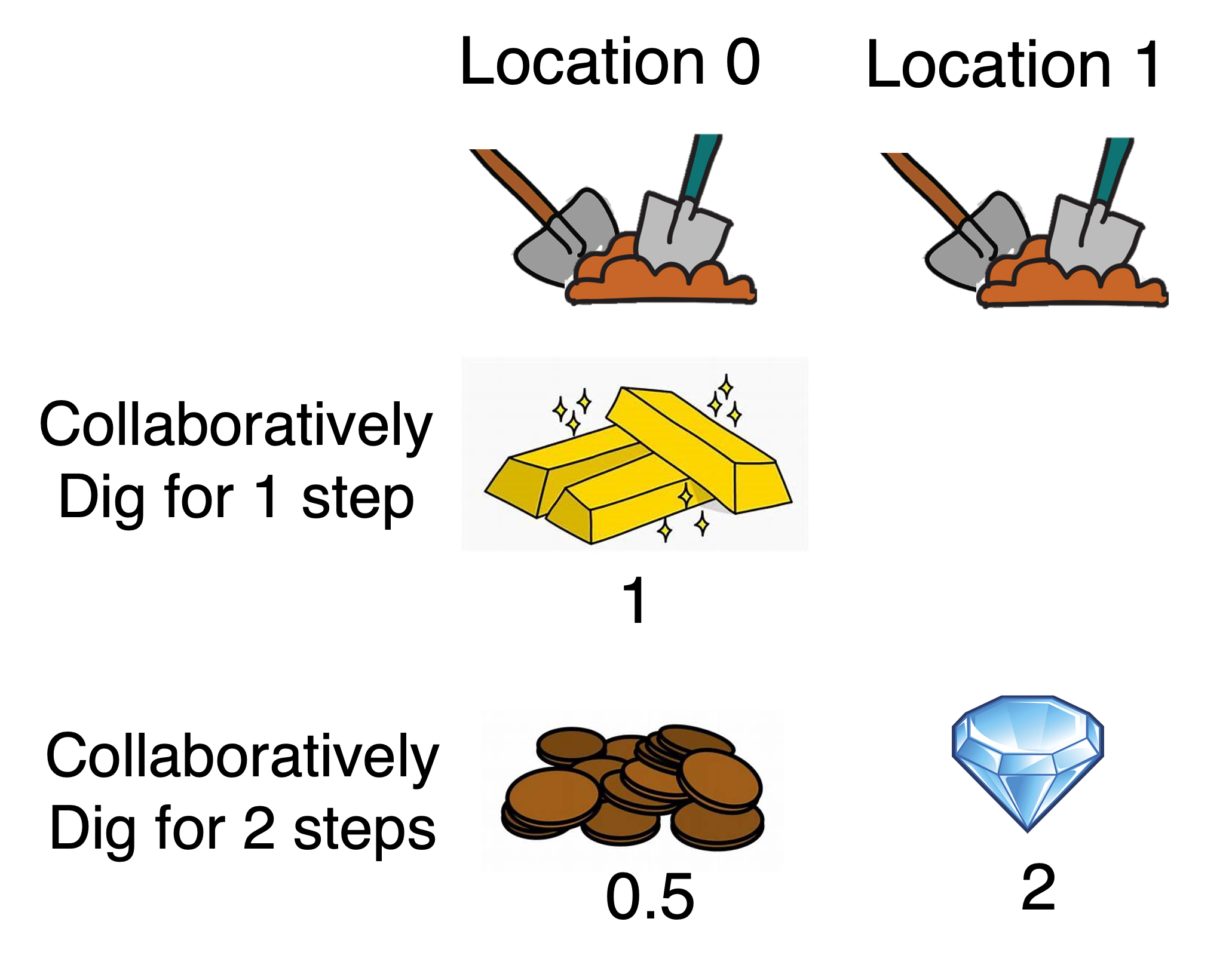}
\hspace{20pt}
\includegraphics[width=0.45\textwidth]{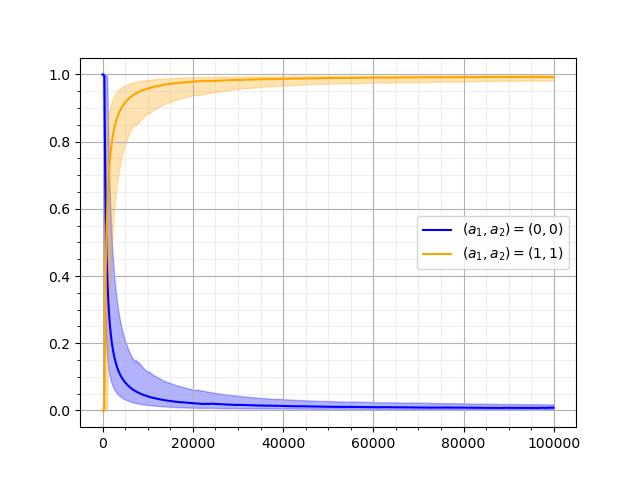}
  \vspace{-10pt}
  \caption{\small (Left) Illustration of 2-stage the stochastic game. (Right) Numerical result of running Algorithm \ref{alg:unified-framework} with log-linear learning rule (with $\epsilon = 10^{-5}$). The plot demonstrates the evolution of $\pi\^{t}(a|s)$ (estimated using empirical averaging), where $s$ is the initial state and the yellow line plots the curve for $a = (1,1)$ and the blue line $a=(0,0)$. The shaded areas are the $60\%$ confidence interval calculated by 100 runs.}
  \label{fig:numerics}
\end{figure}

\bibliographystyle{unsrtnat}
\bibliography{bib}
\end{document}